\tikzset{
  baseline=(current bounding box.center),
  HTline/.style={black!50,dashed},
}
\tikzset{
  external/prefix={tikz/},
  external/optimize=true,
  external/mode={list and make},
}
\pgfplotsset{
  major grid style={thin,dotted,color=black!50},
  minor grid style={thin,dotted,color=black!50},
  grid,
  every axis/.append style={
    line width=0.5pt,
    tick style={
      line cap=round,
      thin,
      major tick length=4pt,
      minor tick length=2pt,
    },
  },
  legend cell align=left,
  legend style={
    /tikz/every even column/.append style={column sep=3mm,black},
    /tikz/every odd column/.append style={black},
  },
  plotSpeedupMini/.style={
    width=67mm,height=49mm,
    xlabel near ticks,
    ylabel absolute=true,
    every axis y label/.append style={yshift=-13pt},
    max space between ticks=18pt,
    title style={yshift=-3pt},
  },
  plotSpeedup64/.style={
    plotSpeedupMini,xtick={1,8,16,32,48,64},
    cycle list name={mycolor1},
  },
  plotSpeedup48/.style={
    plotSpeedupMini,xtick={1,6,12,24,36,48},
    cycle list name={mycolor1},
  },
  plotSpeedup16/.style={
    plotSpeedupMini,xtick={1,2,4,6,8,12,16},
    cycle list name={mycolor1},
  },
  plotSpeedup8/.style={
    plotSpeedupMini,xtick={1,...,8},ytick={0,...,5},
    cycle list name={mycolor2},
  },
}
\newcommand{\ceil}[1]{\left\lceil #1\right\rceil}
\newcommand{\set}[1]{\left\{ #1\right\}}
\newcommand{\gilt}{:}
\newcommand{\realrange}[2]{\left[#1, #2\right]}
\newcommand{\unitrange}[2]{\realrange{0}{1}}
\newcommand{\Oh}[1]{\mathcal{O}\!\left( #1\right)}
\newcommand{\oh}[1]{\mathrm{o}\!\left( #1\right)}
\newcommand{\Th}[1]{\Theta\!\left( #1\right)}
\newcommand{\llabel}[1]{\label{\labelprefix:#1}}
\newcommand{\labelprefix}{} 
\newcommand{\discussionsize}{\small}
\newcommand{\punkt}{\enspace .}
\newenvironment{code}{\noindent
\begin{tabbing}%
\hspace{2em}\=\hspace{2em}\=\hspace{2em}\=\hspace{2em}\=\hspace{2em}\=%
\hspace{2em}\=\hspace{2em}\=\hspace{2em}\=\hspace{2em}\=\hspace{2em}\=%
\kill}{\end{tabbing}}
\newcommand{\labelcommand}{}
\newcommand{\captiontext}{}
\newsavebox{\codeparam}
\newcounter{lineNumber}
\newenvironment{disscodepos}[3]{%
\renewcommand{\labelcommand}{#2}%
\renewcommand{\captiontext}{#3}%
\sbox{\codeparam}{\parbox{\textwidth}{#3}}%
\begin{figure}[#1]\begin{center}\begin{code}\setcounter{lineNumber}{1}}{%
\end{code}\end{center}\caption{\llabel{\labelcommand}\captiontext}\end{figure}}
\newdimen\endofsize\endofsize=0.5em
\def\endofbeweis{~\quad\hglue\hsize minus\hsize
                 \hbox{\vrule height \endofsize width
\endofsize}\par}
\newcommand{\lcp}{\mathrm{lcp}}
\newcommand{\Strings}{\mathcal{S}}
\def\Oh#1{\mathcal{O}(#1)}
\def\oh#1{\mathrm{o}(#1)}
\newcommand{\arr}[1]{[\, #1 \,]}
\newcommand{\Inc}{{++}}
\newcommand{\Dec}{{--}}
\newcommand{\Rem}[1]{\tcp*{#1}}
\newcommand{\Remi}[1]{\tcp*[f]{#1}}
\begin{document}

\title{Engineering Parallel String Sorting}

\author{Timo Bingmann \and Andreas Eberle \and Peter Sanders}

\institute{
Karlsruhe Institute of Technology, Karlsruhe, Germany\\
\email{\{bingmann,sanders\}@kit.edu}}



\maketitle

\begin{abstract}
  We discuss how string sorting algorithms can be parallelized on modern
  multi-core shared memory machines.  As a synthesis of the best sequential
  string sorting algorithms and successful parallel sorting algorithms for
  atomic objects, we first propose string sample sort. The algorithm makes
  effective use of the memory hierarchy, uses additional word level parallelism,
  and largely avoids branch mispredictions. Then we focus on NUMA architectures,
  and develop parallel multiway LCP-merge and -mergesort to reduce the number of
  random memory accesses to remote nodes.  Additionally, we parallelize variants
  of multikey quicksort and radix sort that are also useful in certain
  situations. Comprehensive experiments on five current multi-core platforms are
  then reported and discussed. The experiments show that our implementations
  scale very well on real-world inputs and modern machines.
\end{abstract}

\section{Introduction}

Sorting is perhaps the most studied algorithmic problem in computer science.
While the most simple model for sorting assumes \emph{atomic} keys, an important
class of keys are strings or vectors to be sorted lexicographically. Here, it is important
to exploit the structure of the keys to avoid costly repeated operations on the
entire string.  String sorting is for example needed in database index
construction, some suffix sorting algorithms, or MapReduce tools. Although there
is a correspondingly large volume of work on sequential string sorting, there is
very little work on parallel string sorting. This is surprising since
parallelism is now the only way to get performance out of Moore's law so that
any performance critical algorithm needs to be parallelized. We therefore
started to look for practical parallel string sorting algorithms for modern
multi-core shared memory machines. Our focus is on large inputs which fit into
RAM.  This means that besides parallelization we have to take the memory
hierarchy, layout, and processor features like the high cost of branch
mispredictions, word parallelism, and super scalar processing into account.
Looking beyond single-socket multi-core architectures, we also consider
many-core machines with multiple sockets and non-uniform memory access (NUMA).

In Section~\ref{sec:basic-sequential} we give an overview of basic sequential
string sorting algorithms, acceleration techniques and more related work. We
then propose our first new string sorting algorithm, super scalar string sample
sort (S$^5$), in Section~\ref{sec:s5}. Thereafter, we turn our focus to NUMA
architectures in Section~\ref{sec:para-mergesort}, and develop parallel
LCP-aware multiway merging as a top-level algorithm for combining presorted
sequences. Broadly speaking, we propose both multiway distribution-based string
sorting with S$^5$ and multiway merge-based string sorting with LCP-aware
mergesort, and parallelize both approaches.

Section~\ref{sec:more-parasort} describes parallelizations of caching multikey
quicksort and radix sort, which are two more competitors. We then compare both
parallel and sequential string sorting algorithms experimentally in
Section~\ref{sec:experiments}.

For all our input instances, except random strings, parallel S$^5$ achieves
higher speedups on modern single-socket multi-core machines than our own
parallel multikey quicksort and radixsort implementations, which are already
better than any previous ones. For multi-socket NUMA machines, parallel multiway
LCP-merge with node-local parallel S$^5$ achieves higher speedups for large
real-world inputs than all other implementations in our experiment.

Shorter versions of Section~\ref{sec:s5}, \ref{sec:more-parasort} and
\ref{sec:experiments} have appeared in our conference
paper~\cite{bingmann2013parallel}.  We would like to thank our students Florian
Drews, Michael Hamann, Christian Käser, and Sascha Denis Knöpfle who implemented
prototypes of our ideas.

\section{Preliminaries}\label{sec:prelim}

Our input is a set $\Strings = \{ s_1,\ldots,s_n \}$ of $n$ strings with total
length $N$.  A \emph{string} $s$ is a one-based array of $|s|$ characters from
the \emph{alphabet} $\Sigma = \{ 1,\ldots,\sigma \}$.  We assume the canonical
lexicographic ordering relation `$<$' on strings, and our goal is to sort
$\Strings$ lexicographically.  For the implementation and pseudo-code, we
require that strings are zero-terminated, i.e. $s[|s|] = 0 \notin \Sigma$, but
this convention can be replaced using other end-of-string indicators, like
string length.

Let $D$ denote the \emph{distinguishing prefix size} of $\Strings$, i.e., the
total number of characters that need to be inspected in order to establish the
lexicographic ordering of $\Strings$. $D$ is a natural lower bound for the
execution time of sequential string sorting. If, moreover, sorting is based on
character comparisons, we get a lower bound of $\Omega(D + n \log n)$.

Sets of strings are usually represented as arrays of pointers to the beginning
of each string. Note that this indirection means that, in general, every access
to a string incurs a cache fault even if we are scanning an array of strings.
This is a major difference to atomic sorting algorithms where scanning is very
cache efficient.  Our target machine is a shared memory system supporting $p$
hardware threads or processing elements (PEs), on $\Th{p}$ cores.

\subsection{Notation and Pseudo-code}

The algorithms in this paper are written in a pseudo-code language, which mixes
Pascal-like control flow with array manipulation and mathematical set notation.
This enables powerful expressions like $A := \arr{ (i^2 \bmod 7, i) \mid i \in
  [0 \mathop{:} 5) }$, which sets $A$ to be the array of pairs $\arr{ (0,0), (1,1), (4,2),
  (2,3), (2,4) }$. We write ordered sequences like arrays using square brackets
$\arr{ \ldots }$, overload `$+$' to also concatenate arrays, and let $[1 \mathop{:} n] :=
\arr{ 1,\ldots,n }$ and $[1 \mathop{:} n) := \arr{ 1,\ldots,n-1 }$ be ranges of integers.  To
make array operations more concise, we assume $A_i$ and $A[i]$ both to be the
$i$-th element in the array $A$. We do not allocate or declare arrays and
variables beforehand, so $A_i := 1$ also implicitly defines an array $A$. The
unary operators `$\Inc$` and `$\Dec$` increment and decrement integer variables
by one.

To avoid special cases, we use the following sentinels: `$\varepsilon$' is the
empty string, which is lexicographically smaller than any other string,
`$\infty$' is a character or string larger than any other character or string,
and `$\bot$' is an undefined variable.

For two arrays $s$ and $t$, let $\lcp(s,t)$ denote the length of the
\emph{longest common prefix} (LCP) of $s$ and $t$. This function is symmetric,
and for one-based arrays the LCP value denotes the last index where $s$ and $t$
match, while position $\lcp(s,t)+1$ differs in $s$ and $t$, if it exists.  In a
sequence $x$ let $\lcp_x(i)$ denote $\lcp(x_{i-1},x_i)$. For a sorted sequence
of strings $\Strings = \arr{ s_1,\ldots,s_n }$ the \emph{associated LCP~array}
$H$ is $\arr{ \bot,h_2,\ldots,h_n }$ with $h_i = \lcp_{\Strings}(i) =
\lcp(s_{i-1},s_i)$. For the empty string $\varepsilon$, let $\lcp(\varepsilon,s)
= 0$ for any string $s$.

We will often need the sum over all items in an LCP array $H$ (excluding the
first), and denote this as $L(H) := \sum_{i=2}^n H_i$, or just $L$ if $H$ is
clear from the context. The distinguishing prefix size $D$ and $L$ are related
but not identical. While $D$ includes all characters counted in $L$,
additionally, $D$ also accounts for the distinguishing characters, some string
terminators and characters of the first string. In general, we have $D \geq L$.

\section{Basic Sequential String Sorting Algorithms}\label{sec:basic-sequential}

We begin by giving an overview of most efficient sequential string sorting
algorithms. Nearly all algorithms classify the original string set $\Strings$
into smaller sets with a distinct common prefix. The smaller sets are then
sorted further recursively, until the sets contain only one item or another
string sorter is called.

\emph{Multikey quicksort} \cite{bentley1997fast} is a simple but effective
adaptation of quicksort to strings (called multi-key data).  When all strings in
$\Strings$ have a common prefix of length $\ell$, the algorithm uses character
$c=s[\ell+1]$ of a pivot string $s\in \Strings$ (e.g. a pseudo-median) as a
\emph{splitter} character. $\Strings$ is then partitioned into $\Strings_<$,
$\Strings_=$, and $\Strings_>$ depending on comparisons of the $(\ell+1)$-th
character with $c$. Recursion is done on all three subproblems. The key
observation is that the strings in $\Strings_=$ have common prefix length
$\ell+1$ which means that compared characters found to be equal with $c$ will
never be considered again. Insertion sort is used as a base case for constant
size inputs. This leads to a total execution time of $\Oh{D+n\log n}$. Multikey
quicksort works well in practice in particular for inputs which fit into the
cache. Since a variant of multikey quicksort was the overall best sequential
algorithm in our experiments, we develop a parallel version in
Section~\ref{sec:para-mkqs}.

\emph{MSD radix sort}
\cite{mcilroy1993engineering,ng2007cache,karkkainen2009engineering} with common
prefix length $\ell$ looks at the $(\ell+1)$-th character producing $\sigma$
subproblems which are then sorted recursively with common prefix $\ell+1$. This
is a good algorithm for large inputs and small alphabets since it uses the
maximum amount of information within a single character. For input sizes
$\oh{\sigma}$ MSD radix sort is no longer efficient and one has to switch to a
different algorithm for the base case. The running time is $\Oh{D}$ plus the
time for solving the base cases. Using multikey quicksort for the base case
yields an algorithm with running time $\Oh{D+n\log\sigma}$. A problem with large
alphabets is that one will get many cache faults if the cache cannot support
$\sigma$ concurrent output streams (see \cite{mehlhorn2003scanning} for
details). We discuss parallel radix sorting in Section~\ref{sec:para-radixsort}.

\emph{Burstsort} dynamically builds a trie data structure for the input
strings. In order to reduce the involved work and to become cache efficient, the
trie is build lazily -- only when the number of strings referenced in a
particular subtree of the trie exceeds a threshold, this part is expanded. Once
all strings are inserted, the relatively small sets of strings stored at the
leaves of the trie are sorted recursively (for more details refer to
\cite{sinha2004cache-conscious,sinha2007cache-efficient,sinha2010engineering}
and the references therein).

\emph{LCP-Mergesort} is an adaptation of mergesort to strings that saves and
reuses the LCPs of consecutive strings in the sorted subproblems
\cite{ng2008merging}. In section~\ref{sec:para-mergesort}, we develop a parallel
multiway variant of LCP-merge, which is used to improve performance on NUMA
machines. Our multiway LCP-merge is also interesting for merging of
string sets stored in external memory.

\emph{Insertion sort} \cite{knuth1998sorting} keeps an ordered array, into which
unsorted items are inserted by linearly scanning for their correct position. If
strings are considered atomic, then full string comparisons are done during the
linear scan. This is particularly cache-efficient and the algorithm is commonly
used as base case sorter.  However, if one keeps additionally the associated LCP
array, the number of character comparisons can be decreased, trading them for
integer comparisons of LCPs. We needed a base case sorter that also calculates
the LCP array and found no reference for LCP-aware insertion sort in the
literature, so we describe the algorithm in Section~\ref{sec:lcp-inssort}.

\subsection{Architecture Specific Enhancements}

To increase the performance of basic sequential string sorting algorithms on
real hardware, we have to take its architecture into consideration.  In the
following list we highlight some of most important optimization principles.

\emph{Memory access time} varies greatly in modern systems.  While the RAM model
considers all memory accesses to take unit time, current architectures have
multiple levels of cache, require additional memory access on TLB misses, and
may have to request data from ``remote'' nodes on NUMA systems.  While there are
few hard guarantees, we can still expect recently used memory to be in cache and
use these assumptions to design cache-efficient algorithms.  Furthermore, on
NUMA systems we can instruct the kernel on how to distribute memory by
specifying allocation policies for memory segments

\emph{Caching of characters} is very important for modern memory hierarchies as
it reduces the number of cache misses due to random access on strings.  When
performing character lookups, a caching algorithm copies successive characters
of the string into a more convenient memory area.  Subsequent sorting steps can
then avoid random access, until the cache needs to be refilled.  This technique
has successfully been applied to radix sort \cite{ng2007cache}, multikey
quicksort \cite{rantala2007web}, and in its extreme to burstsort
\cite{sinha2007cache-efficient}. However, caching comes at the cost of increased
space requirements and memory accesses, hence a good trade-off must be found.

\emph{Super-Alphabets} can be used to accelerate string sorting algorithms which
originally look only at single characters.  Instead, multiple characters are
grouped as one and sorted together.  However, most algorithms are very sensitive
to large alphabets, thus the group size must be chosen carefully.  This approach
results in 16-bit MSD radix sort and fast sorters for DNA strings.  If the
grouping is done to fit many characters into a machine word for processing as a
whole block using arithmetic instructions, then this is also called \emph{word
  parallelism}.

\emph{Unrolling, fission and vectorization of loops} are methods to exploit
out-of-order execution and super scalar parallelism now standard in modern CPUs.
The processor's instruction scheduler automatically analyses the machine code,
detects data dependencies and can dispatch multiple parallel operations.
However, only specific, simple data independencies can be detected and thus
inner loops must be designed with care (e.g. for radix sort
\cite{karkkainen2009engineering}). The performance increase by reorganizing
loops is most difficult to predict.

\subsection{More Related Work}

There is a huge amount of work on parallel sorting of atomic objects so that we
can only discuss the most relevant results. Besides (multiway) mergesort,
perhaps the most practical parallel sorting algorithms are parallelizations of
radix sort (e.g.~\cite{wassenberg2011engineering}) and
quicksort~\cite{tsigas2003simple} as well as sample
sort~\cite{blelloch1991comparison}.

There is some work on PRAM algorithms for string sorting
(e.g. \cite{hagerup94optimal}). By combining pairs of adjacent characters into
single characters, one obtains algorithms with work $\Oh{N\log N}$ and time
$\Oh{\log N/\log\log N}$. Compared to the sequential algorithms this is
suboptimal unless $D=\Oh{N}=\Oh{n}$ and with this approach it is unclear how to
avoid work on characters outside distinguishing prefixes.

We found no publications on practical parallel string sorting, aside from our
conference paper~\cite{bingmann2013parallel}. However, Ta\-kuya Akiba has
implemented a parallel radix sort \cite{akiba2011radixsort}, Tommi Rantala's
library~\cite{rantala2007web} contains multiple parallel mergesorts and a
parallel SIMD variant of multikey quicksort, and Nagaraja
Shamsundar~\cite{shamsundar2009lcpmergesort} also parallelized Waihong Ng's
LCP-mergesort \cite{ng2008merging}. Of all these implementations, only the radix
sort by Akiba scales reasonably well to many-core architectures.  We discuss the
scalability issues of these implementations in Section~\ref{sec:exp-parallel}.

\section{Super Scalar String Sample Sort (\texorpdfstring{S$^5$}{S5})}\label{sec:s5}

Already in a sequential setting, theoretical considerations and experiments (see
Section~\ref{sec:exp-sequential}) indicate that \emph{the} best string sorting
algorithm does not exist.  Rather, it depends at least on $n$, $D$, $\sigma$,
and the hardware.  Therefore we decided to parallelize several algorithms taking
care that components like data distribution, load balancing or base case sorter
can be reused.  Remarkably, most algorithms in
Section~\ref{sec:basic-sequential} can be parallelized rather easily and we will
discuss parallel versions in
Sections~\ref{sec:s5}--\ref{sec:more-parasort}. However, none of these
parallelizations make use of the striking new feature of modern many-core
systems: many multi-core processors with individual cache levels but relatively
few and slow memory channels to shared RAM. Therefore we decided to design a new
string sorting algorithm based on \emph{sample sort}
\cite{frazer1970samplesort}, which exploits these properties.  Preliminary
result on string sample sort have been reported in the bachelor thesis of Sascha
Denis Knöpfle~\cite{knoepfle2012string}.

\subsection{Traditional (Parallel) Atomic Sample Sort}\label{sec:ss-atomic}

Sample sort \cite{frazer1970samplesort,blelloch1991comparison} is a generalization of quicksort working
with $k-1$ pivots at the same time.  For small inputs, sample sort uses some
sequential base case sorter.  Larger inputs are split into $k$ \emph{buckets}
$b_1,\ldots,b_k$ by determining $k-1$ splitter keys $x_1\leq \cdots\leq x_{k-1}$
and then classifying the input elements -- element $s$ goes to bucket $b_i$ if
$x_{i-1}< s \leq x_i$ (where $x_0$ and $x_k$ are defined as sentinel elements --
$x_0$ being smaller than all possible input elements and $x_k$ being larger).
Splitters can be determined by drawing a random sample of size $\alpha k-1$ from
the input, sorting it, and then taking every $\alpha$-th element as a
splitter. Parameter $\alpha$ is the \emph{oversampling} factor. The buckets are
then sorted recursively and concatenated. ``Traditional'' parallel sample sort
chooses $k=p$ and uses a sample big enough to assure that all buckets have
approximately equal size.  Sample sort is also attractive as a sequential
algorithm since it is more cache efficient than quicksort and since it is
particularly easy to avoid branch mispredictions (super scalar sample sort --
S$^4$) \cite{sanders2004super}. In this case, $k$ is chosen in such a way that
classification and data distribution can be done in a cache efficient way.

\subsection{String Sample Sort}\label{sec:ss-string}

In order to adapt the atomic sample sort from the previous section to strings,
we have to devise an efficient classification algorithm.  Most importantly, we
want to avoid comparing whole strings needlessly, and thus focus on character
comparisons.  Also, in order to approach total work $\Oh{D+n\log n}$, we have to
use the information gained during classification in the recursive calls. This
can be done by observing that strings in buckets have a common prefix depending
on the LCP of the two splitters:
\begin{equation}\label{eq:lcp}
\forall 1\leq i \leq k\gilt \forall s,t\in b_i\gilt \lcp(s,t)\geq \lcp_x(i)\punkt
\end{equation}
Another issue is that we have to reconcile the parallelization and load
balancing perspective from traditional parallel sample sort with the cache
efficiency perspective of super scalar sample sort, where the splitters are
designed to fit into cache. We do this by using dynamic load balancing which
includes parallel execution of recursive calls as in parallel quicksort. Dynamic
load balancing is very important and probably unavoidable for parallel string
sorting, because any algorithm must adapt to the input string set's
characteristics.

\subsection{Super Scalar String Sample Sort (\texorpdfstring{S$^{\,5}$}{S5}) -- A Pragmatic Solution}

We adapt the implicit binary search tree approach used in (atomic) super scalar
sample sort (S$^4$)~\cite{sanders2004super} to strings.  Algorithm~\ref{alg:s5}
shows pseudo-code of one variant of S$^5$ as a guideline through the following
discussion, and Figure~\ref{fig:ternary-tree} illustrates the classification
tree with buckets and splitters.

\begin{figure}[t]\centering
  \begin{tikzpicture}[
  xscale=0.6,
  bucket/.style={draw,rectangle,inner sep=2pt},
  keynode/.style={draw,circle,inner sep=1pt},
  equall/.style={fill=white,inner sep=2pt,text depth=-1pt},
  ]

\node (k1) [keynode] at (7,3) {\vphantom{0}$x_3$};

\node (k2) [keynode] at (3,2) {\vphantom{0}$x_1$};
\node (k3) [keynode] at (11,2) {\vphantom{0}$x_5$};

\node (k4) [keynode] at (1,1) {\vphantom{0}$x_0$};
\node (k5) [keynode] at (5,1) {\vphantom{0}$x_2$};
\node (k6) [keynode] at (9,1) {\vphantom{0}$x_4$};
\node (k7) [keynode] at (13,1) {\vphantom{0}$x_6$};

\node (b0)  [bucket] at (0,0) {\vphantom{0}$b_0$};
\node (b1)  [bucket] at (1,0) {\vphantom{0}$b_1$};
\node (b2)  [bucket] at (2,0) {\vphantom{0}$b_2$};
\node (b3)  [bucket] at (3,0) {\vphantom{0}$b_3$};
\node (b4)  [bucket] at (4,0) {\vphantom{0}$b_4$};
\node (b5)  [bucket] at (5,0) {\vphantom{0}$b_5$};
\node (b6)  [bucket] at (6,0) {\vphantom{0}$b_6$};
\node (b7)  [bucket] at (7,0) {\vphantom{0}$b_7$};
\node (b8)  [bucket] at (8,0) {\vphantom{0}$b_8$};
\node (b9)  [bucket] at (9,0) {\vphantom{0}$b_9$};
\node (b10) [bucket] at (10,0) {\vphantom{0}$b_{10}$};
\node (b11) [bucket] at (11,0) {\vphantom{0}$b_{11}$};
\node (b12) [bucket] at (12,0) {\vphantom{0}$b_{12}$};
\node (b13) [bucket] at (13,0) {\vphantom{0}$b_{13}$};
\node (b14) [bucket] at (14,0) {\vphantom{0}$b_{14}$};

\draw (k1) -- node[above] {$<$} (k2);
\draw (k1) -- node[equall] {$=$} (b7);
\draw (k1) -- node[above] {$>$} (k3);

\draw (k2) -- node[above] {$<$} (k4);
\draw (k2) -- node[equall] {$=$} (b3);
\draw (k2) -- node[above] {$>$} (k5);

\draw (k3) -- node[above] {$<$} (k6);
\draw (k3) -- node[equall] {$=$} (b11);
\draw (k3) -- node[above] {$>$} (k7);

\draw (k4) -- node[left] {$<$} (b0);
\draw (k4) -- node[equall] {$=$} (b1);
\draw (k4) -- node[right] {$>$} (b2);

\draw (k5) -- node[left] {$<$} (b4);
\draw (k5) -- node[equall] {$=$} (b5);
\draw (k5) -- node[right] {$>$} (b6);

\draw (k6) -- node[left] {$<$} (b8);
\draw (k6) -- node[equall] {$=$} (b9);
\draw (k6) -- node[right] {$>$} (b10);

\draw (k7) -- node[left] {$<$} (b12);
\draw (k7) -- node[equall] {$=$} (b13);
\draw (k7) -- node[right] {$>$} (b14);

\end{tikzpicture}
  \caption{Ternary classification tree for $v = 7$ splitters and $k = 15$ buckets.}\label{fig:ternary-tree}
\end{figure}
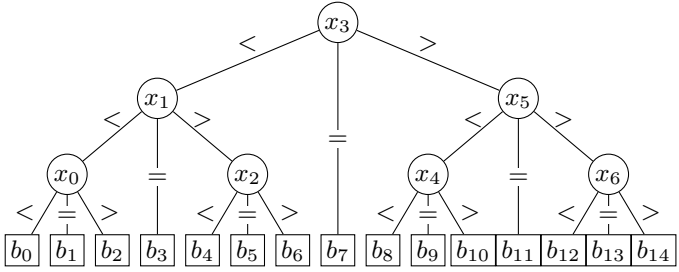

Rather than using whole strings as arbitrarily long splitters, or all characters
of the alphabet as in radix sort, we design the splitter keys to consist of
\emph{as many characters as fit into a machine word}.  In the following let $w$
denote the number of characters fitting into one machine word (for 8-bit
characters and 64-bit machine words we would have $w=8$).  We choose $v=2^d-1$
splitters $x_0,\ldots,x_{v-1}$ (for some integer $d$) from a sorted sample to
construct a \emph{perfect binary search tree}, which is used to classify a set
of strings based on the next $w$ characters at common prefix $h$. The main
disadvantage of this approach is that we may have many input strings whose next
$w$ characters are identical.  For these strings, the classification does not
reveal much information. We make the best out of such inputs by explicitly
defining \emph{equality buckets} for strings whose next $w$ characters exactly
match $x_i$.  For equality buckets, we can increase the common prefix length by
$w$ in the recursive calls, i.e., these characters will never be inspected
again.  In total, we have $k = 2 v + 1$ different buckets $b_0,\ldots,b_{2v}$
for a ternary search tree (see Figure~\ref{fig:ternary-tree}).

\begin{algorithm2e}[t]
\caption{Sequential Super Scalar String Sample Sort -- a single step}\label{alg:s5}\normalsize

\KwIn{$\Strings = \{ s_1,\ldots,s_n \}$ a set of strings with common prefix $h$.}

$p_i := \FuncSty{chars}_h(s_{\FuncSty{random}(1,\ldots,n)}) \quad\forall\, i = 1,\ldots,v \alpha + \alpha \!-\! 1 $ \Rem{Read sample $p$ of $\Strings$,}
$\FuncSty{sort}(\arr{ p_1,\ldots,p_{v \alpha + \alpha - 1} })$ \Rem{sort it, and select}
$\arr{ x_0,x_1,x_2,\ldots,x_{v-1},x_v } := \arr{ p_\alpha,p_{2\alpha},p_{3\alpha},\ldots,p_{v \alpha},\infty }$ \Rem{equidistant splitters.}\nllabel{alg:s5:sample}
$t := \arr{ x_{\frac{v'}{2}}, x_{\frac{v'}{4}}, x_{\frac{3v'}{4}}, x_{\frac{v'}{8}}, x_{\frac{3v'}{8}}, x_{\frac{5v'}{8}}, x_{\frac{7v'}{8}}, \ldots }$ \KwSty{with} $v' := v + 1$ \Rem{Construct tree,}\nllabel{alg:s5:tree}
$h' := \arr{ 0 } + \arr{ \lcp(x_{i-1},x_i) \mid i = 1,\ldots,v-1 } + \arr{ 0 }$ \Rem{and save LCPs of splitters.}\nllabel{alg:s5:slcp}
\For(\Remi{Process strings (interleavable loop).}\nllabel{alg:s5:strloop})
{$j := 1,\ldots,n$}
{
  \KwSty{local} $i := 1$,\quad $c := \FuncSty{chars}_h(s_j)$ \Rem{Start at root, get $w$ chars from $s_j$,}
  \For(\Remi{and traverse tree (unrollable loop)}\nllabel{alg:s5:treeloop})
  {$1,\ldots,\log_2(v+1)$}
  {
    $i := 2 i + (c \leq t_i)$ \Rem{without branches using ``$(c \leq t_i)$'' $\in \set{0,1}$.}\nllabel{alg:s5:chartest}
  }
  $i := i - v+1$,\quad \KwSty{local} $m := 2 i$ \Rem{Calculate matching non-equality bucket.}
  \lIf(\Remi{Test for equality with next splitter.})
  {$x_i = c$\nllabel{alg:s5:eqtest}}
  {
    $m \Inc$
  }
  $o_j := m$ \Rem{Save final bucket number for string $s_j$ in an oracle.}\nllabel{alg:s5:end-strloop}
}
$b_i := 0 \quad\forall\, i = 0,\ldots,2v$ \nllabel{alg:s5:bktzero}\Rem{Inclusive prefix sum}
\lFor(\Remi{over bucket sizes})
{$i := 1,\ldots,n$}
{
  $(b_{o_i}) \Inc$\nllabel{alg:s5:count}
}
$\arr{ b_0,\ldots,b_{2v},b_{2v+1} } := \arr{ \sum_{j \leq i} b_j \mid i = 0,\ldots,2v } + \arr{n}$ \Rem{as fissioned loops.}\nllabel{alg:s5:prefixsum}
\lFor(\Remi{Reorder strings into new subsets.})
{$i := 1,\ldots,n$}
{
  $s'_{(b_{o_i})\Dec} := s_i$\nllabel{alg:s5:redistribute}
}

\KwOut{$\Strings'_i = \{ s'_j \mid j = b_i,\ldots,b_{i+1}-1 \text{ if } b_i <
  b_{i+1} \}$ are $i = 0,\ldots,2v$ string subsets with $\Strings'_i <
  \Strings'_{i+1}$. The subsets have common prefix $h + h'_i$ for $i$ even, and
  common prefix $h + w$ for $i$ odd.}

\end{algorithm2e}

Testing for equality can either be implemented by explicit equality tests at
each node of the search tree (which saves time when most elements end up in a
few large equality buckets) or by going down the search tree all the way to a
bucket $b_i$ ($i$ even) doing only $\leq$-comparisons, followed by a single
equality test with $x_{\frac{i}{2}}$, unless $i = 2v$. This last variant is
shown in Algorithm~\ref{alg:s5}, and the equality test is done in
line~\ref{alg:s5:eqtest}.

Postponing the equality test allows us to completely unroll the loop descending
the search tree (line~\ref{alg:s5:treeloop}), since there is no exit
condition. We can then also unroll the loop over the elements
(line~\ref{alg:s5:strloop}), interleaving independent tree descent
operations. The number of interleaved descents is limited in practice by the
number of registers to hold local variables like $i$ and $c$. As in
\cite{sanders2004super}, this is an important optimization since it allows the
instruction scheduler in a super scalar processor to parallelize the operations
by drawing data dependencies apart.

After reordering, the strings in the ``$< x_0$'' and ``$> x_{v-1}$'' buckets
$b_0$ and $b_{2v}$ keep common prefix length $h$. For other even buckets
$b_i$ the common prefix length is increased by $\lcp_x(\frac{i}{2})$.

An analysis similar to the one of multikey quicksort \cite{bentley1997fast} lets
us conjecture the following asymptotic time bound.

\begin{conjecture}\label{thm:s5}
  String sample sort with implicit binary trees, word parallelism and equality
  checking at each splitter node can be implemented to run in expected time
  $\Oh{\frac{D}{w} + n \log n}$.
\end{conjecture}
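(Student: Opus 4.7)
The plan is to mimic the classical potential-style analysis of multikey quicksort and atomic sample sort, but charge work against two separate resources: the word-aligned distinguishing prefix $D/w$ (for steps that extend the common prefix) and $n\log n$ (for steps that genuinely split the string set). Throughout, I would treat $v$ (and hence the oversampling factor $\alpha$) as a constant chosen at implementation time, so that $\log(v+1)$, the tree construction cost $\Oh{v\log v}$, and the per-level sample sorting cost are all absorbed into constants.

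First I would bound the work done by a single recursive invocation on a subproblem of size $m$: drawing and sorting the sample costs $\Oh{1}$; constructing the implicit tree (line~\ref{alg:s5:tree}) and splitter-LCP array (line~\ref{alg:s5:slcp}) costs $\Oh{1}$; and the classification loop (lines~\ref{alg:s5:strloop}--\ref{alg:s5:end-strloop}) performs, per string, one word-sized read of the next $w$ characters plus $\log_2(v{+}1)=\Oh{1}$ word comparisons in the tree plus one equality test. Hence a single invocation costs $\Oh{m}$ word operations and induces $\Oh{m/w}$ character accesses, not counting the recursive calls. The prefix sum and redistribution passes (lines~\ref{alg:s5:bktzero}--\ref{alg:s5:redistribute}) add $\Oh{m}$ and do not touch characters.

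Next I would account for how each string $s$ is charged across the whole recursion. Classify each invocation $s$ participates in as either an \emph{equality step}, in which $s$ lands in an odd (equality) bucket and its common prefix $h$ advances by $w$, or a \emph{split step}, in which $s$ lands in an even bucket. The number of equality steps for $s$ is at most $\lceil D_s/w\rceil$, where $D_s$ is the length of the distinguishing prefix of $s$, because after that many increments no further matching word can occur before $s$ is distinguished from its neighbors; summing over $s$ yields $\Oh{D/w}$ equality-step invocations in total. For split steps I would invoke the standard sample-sort recurrence: with $v$ splitters chosen via oversampling factor $\alpha$, the expected subproblem sizes along any root-to-leaf path shrink geometrically, so the expected number of split steps a given string participates in is $\Oh{\log_{v+1} n}=\Oh{\log n}$. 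Multiplying each step by its $\Oh{1}$ word-operations cost and summing gives expected total work $\Oh{D/w + n\log n}$, and the analogous bound on character accesses since every word read during an equality step extends the common prefix by $w$.

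The main obstacle I anticipate is making the split-step analysis rigorous: unlike atomic sample sort, the bucket into which a string falls depends only on its next $w$ characters, so strings sharing those $w$ characters are \emph{indistinguishable} by this invocation and must all travel together. To handle this cleanly I would, as in the multikey quicksort proof of \cite{bentley1997fast}, argue on the \emph{compacted trie} of $\Strings$: group strings by their word at depth $h$, observe that the sample-sort splitter selection on these words behaves exactly like atomic sample sort on a multiset of size $\le m$, and use the oversampling analysis of \cite{blelloch1991comparison,sanders2004super} to conclude that, with high probability, every bucket has size at most $c\,m/(v+1)$ for some constant $c$. A secondary technicality is the base case: switching to LCP-aware insertion sort on subproblems of constant size contributes $\Oh{n}$ word operations and does not disturb either term. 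Combining the two charges proves the conjectured bound.
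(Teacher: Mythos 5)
Your first charging argument (each string incurs at most $\lceil D_s/w\rceil$ equality steps, hence $\Oh{D/w+n}$ word reads over the whole recursion) coincides with the paper's own informal justification and is sound. The gap is in the split-step bound, and it is exactly the point at which the paper deliberately stops and labels the statement a \emph{conjecture}. You propose to deduce from the oversampling analyses of \cite{blelloch1991comparison,sanders2004super} that ``with high probability, every bucket has size at most $c\,m/(v+1)$.'' In the regime you yourself fix --- $v$ and $\alpha$ constants, hence a sample of constant size $\alpha v+\alpha-1$ --- this claim is false: with probability at least $2^{-(\alpha v+\alpha-1)}$, a constant, the entire sample falls into the smaller half of the input and some bucket receives at least $m/2$ elements, so no bound of the form $c\,m/(v+1)$ can hold with high probability. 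The cited analyses obtain their whp guarantees only because the oversampling factor grows with the input (typically $\alpha=\Omega(\log m)$), which is not what the algorithm uses. Without that guarantee, ``subproblem sizes shrink geometrically along any root-to-leaf path'' does not follow, and the expected $\Oh{\log n}$ number of split steps per string must be established by a different mechanism --- something like the per-pair indicator-variable argument for quicksort, or a supermartingale bound on $\expect[\log(\text{bucket size})]$ per level, adapted to fixed-size samples drawn afresh at every recursion level. The authors state explicitly that they are not aware of such an analysis; supplying it is the actual content a proof would need.

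A second, related hole is the duplicate issue the paper flags: the sample at depth $h$ is drawn from the \emph{multiset} of $w$-character words, so when most words coincide the splitters degenerate; the coinciding strings are siphoned into equality buckets (harmless, charged to $D/w$), but the residual distinct words are then split by far fewer than $v$ useful splitters, and the effective sample available to them shrinks. Your reduction to ``atomic sample sort on a multiset'' names this problem but does not resolve it, and the single-top-level-sampling analyses you cite do not cover repeated resampling under such degeneracy. Until both points are repaired, your argument restates the paper's heuristic reasoning rather than proving the conjecture.
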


We now argue the correctness of Conjecture~\ref{thm:s5}, without giving a formal
proof\footnote{We are currently working on this for a final version of this
  paper}.  The classification schemes of multikey quicksort and string sample
sort can both be seen as a tree. In this tree, edges are either associated with
characters of a distinguishing prefix, or with string ranges determined by
splitters.

In multikey quicksort each inner node $z$ of the tree has three children: $<$,
$=$, and $>$. We can associate each character comparison during partitioning at
node $z$ with the thereby determined edge.  By selecting pivots randomly or
using a sample median, the expected number of $<$ and $>$~edges in all paths
from the root is $\Oh{\log n}$ \cite{hoare1962quicksort,bentley1997fast}, since
this approach is identical to atomic quicksort. Thus the time spent over all
comparisons accounted by these edges is expected $\Oh{n \log n}$. All
comparisons associated with $=$~edges correspond to characters from the
distinguishing prefix, and are thus bounded by $D$. In total we have $\Oh{D + n
  \log n}$ work in the multikey quicksort tree.

We can view string sample sort as a multikey quicksort using multiple pivots in
the classification tree, as seen in Figure~\ref{fig:ternary-tree}.  In string
sample sort, an $=$~edge matches $w$ characters, of which at least one is from
the distinguishing prefix $D$ (but usually all are). If any of the $w$
characters is not counted in $D$, then the $=$~edge leads to a leaf, which does
not require further sorting. There are at most $n$ such comparisons leading to
leaves, all other $=$~edges match $w$ characters. Thus we have at most
$\frac{D}{w} + n$ comparisons leading to $=$~edges.  To prove our conjecture, we
need to show that the expected number of $<$ and $>$~edges on all paths from the
root is $\Oh{\log n}$.  However, we are not aware of any analysis of sample sort
showing this expected run time for a \emph{fixed sample size}. Furthermore, in
string sample sort we have to deal with the probability of multiple equal
samples and need to resample strings repeatedly at higher depths, thus the known
analysis of a single top-level sampling approach
\cite{blelloch1991comparison,yang1987optimal,frazer1970samplesort} do not
apply. Nevertheless, due to repeated resampling, we can conjecture that the
bucket sizes grow small very fast, just as they do in atomic sample sort.  By
using the additional LCP information gained at $<$ and $>$ edges from
Equation~\ref{eq:lcp} one could decrease the expected path length from the root
further, though probably not asymptotically.

If the $=$~edges are taken immediately, as done in the variant with explicit
equality checking at each node, then we conjecture expected $\Oh{\frac{D}{w} + n
  \log n}$ time.  However, if we choose to unroll descents of the tree, then the
splitter at the root may match and the $\Th{\log v}$ additional steps down the
tree are superfluous.  This happens when many strings are identical, and the
corresponding splitters are high up in the tree. We thus have to attribute
$\Oh{(\frac{D}{w} + n) \log v}$ time to the $=$~edges. Together with the
expected cost of $<$ and $>$~edges, we conjecture in total an expected
$\Oh{(\frac{D}{w} + n) \log v + n \log n}$ bound.

String sample sort is particularly easy to parallelize for $p < n$, as in
current multi-core architectures, and we can state the following theorem.

\begin{theorem}
  A single step of super scalar string sample sort (Algorithm~\ref{alg:s5}) can
  be implemented to run on a CREW PRAM with $p$~processors in $\Oh{\frac{n}{p}
    \log v + \log pv + v}$ time and $\Oh{n \log v + pv}$ work.
\end{theorem}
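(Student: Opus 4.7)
The plan is to account for every phase of Algorithm~\ref{alg:s5} separately, distributing its workload across $p$ processors and tracking per-processor time and aggregate work.

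First, the sampling, splitter selection, search-tree construction, and splitter LCP array (lines~\ref{alg:s5:sample}--\ref{alg:s5:slcp}) all touch only $\Oh{v}$ constant-width items; they are trivially load-balanced into $\Oh{v/p + 1}$ time per processor and $\Oh{v}$ work. Sorting the sample of $\Theta(v)$ elements is handled by any standard CREW PRAM sorting network in $\Oh{\log(pv)}$ time and $\Oh{pv}$ work, which is absorbed by the target bounds.

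The classification loop is the dominant computational phase. Assigning disjoint blocks of $\lceil n/p \rceil$ strings per processor, each string descends the perfect binary tree of depth $\log_2(v+1)$ in $\Oh{\log v}$ register operations and writes a bucket index into its own slot of the oracle $o$ (lines~\ref{alg:s5:strloop}--\ref{alg:s5:end-strloop}). Reads from the shared tree $t$ keep the algorithm in the CREW model, and this phase contributes $\Oh{(n/p)\log v}$ time and $\Oh{n \log v}$ work.

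The delicate part, and the main obstacle, is bucket counting together with the prefix-sum computation. I would give each processor a private histogram of length $2v+1$; zero-initializing it and incrementing it from its block of the oracle (lines~\ref{alg:s5:bktzero}--\ref{alg:s5:count}) cost $\Oh{v + n/p}$ time and $\Oh{pv + n}$ work without any write conflicts. The global prefix sum of line~\ref{alg:s5:prefixsum} must then be obtained from the resulting $p \times (2v+1)$ matrix of local counts. Since the number of buckets may exceed $p$, a single $\Oh{\log(pv)}$-time reduction over all $pv$ counters is not compatible with the $p$-processor budget; instead I would compute local row-wise prefix sums followed by $\Oh{v}$ independent column-wise tree reductions of length~$p$. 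This yields both the global bucket offsets and each processor's local starting offset inside every bucket in $\Oh{v + \log p}$ time and $\Oh{pv}$ work, and is precisely what forces the additive $v$ term in the time bound and the $pv$ term in the work bound. Redistribution (line~\ref{alg:s5:redistribute}) then becomes exclusive: each processor computes the final target position of each of its strings from the global and local offsets and writes them directly, in $\Oh{n/p}$ time and $\Oh{n}$ work. Summing all phases yields $\Oh{(n/p)\log v + \log(pv) + v}$ time and $\Oh{n \log v + pv}$ work, matching the claim.
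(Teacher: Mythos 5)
Your proposal is correct and follows essentially the same decomposition as the paper's proof: per-processor classification costing $\Oh{\frac{n}{p}\log v}$ time and $\Oh{n\log v}$ work, private bucket arrays initialized in $\Oh{v}$ time, a prefix-sum computation over the $p(2v+1)$ counters that accounts for the $\log pv$, $v$ and $pv$ terms, and conflict-free redistribution in $\Oh{\frac{n}{p}}$ time. The only substantive differences are cosmetic realizations of the same steps -- the paper obtains the offsets from a single \emph{interleaved} (bucket-major) global prefix sum over all $p(2v+1)$ counters, and bounds the sample sorting by $\Oh{\frac{a\log a}{p}+\log p}$ time and $\Oh{a\log a+p}$ work rather than by a sorting network (your $\Oh{\log(pv)}$-time claim for sorting $\Theta(v)$ keys with only $p$ processors is not literally achievable when $v\gg p$, but the correct cost is absorbed by the stated bounds just the same).
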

\begin{proof}
  Sorting the sample requires $\Oh{\frac{a \log a}{p} + \log p}$ time and
  $\Oh{a \log a + p}$ work, where $a := \alpha v + \alpha - 1 \ll n$ is the
  sample size. Selecting the sample, picking splitters, constructing the tree
  and saving LCP of splitters is all $\Oh{\frac{a}{p}}$ time and $\Oh{a}$
  work. Each processors gets $\frac{n}{p}$ strings and in worst case runs down
  all $\log v$ steps in the classification tree, which is $\Oh{\frac{n}{p} \log
    v}$ time and $\Oh{n \log v}$ work. Departing from
  lines~\ref{alg:s5:bktzero}--\ref{alg:s5:redistribute}, each processor keeps
  its own bucket array $b_i$, initializes it in $\Oh{v}$ time, and classifies
  only those strings in its string set. Then, an interleaved global prefix sum
  over the $p(2v+1)$ bucket counters yields the boundaries in which each
  processor can independently redistribute its strings. The prefix sum runs in
  $\Oh{\log pv}$ time and $\Oh{pv}$ work~\cite{kogge1973parallel}, while
  counting and redistribution runs in $\Oh{\frac{n}{p}}$ time and $\Oh{n}$
  work. Summing all time and work yields our result.
\end{proof}

We only consider a single step here, and thus cannot use the distinguishing
prefix $D$ to bound the overall work.

\subsection{Implementation Details}

One goal of S$^5$ is to have a common classification data structure that fits
into the cache of all cores. Using this data structure, all PEs can
independently classify a subset of the strings into buckets in parallel.  The
process follows the classic distribution-based sorting steps: we first classify
strings (lines~\ref{alg:s5:strloop}--\ref{alg:s5:end-strloop}), counting how
many fall into each bucket (line~\ref{alg:s5:count}), then calculate a prefix
sum (line~\ref{alg:s5:prefixsum}) and redistribute the string pointers
accordingly (line~\ref{alg:s5:redistribute}). To avoid traversing the tree
twice, the bucket index of each string is stored in an oracle
(lines~\ref{alg:s5:end-strloop}, \ref{alg:s5:count},
\ref{alg:s5:redistribute}). Additionally, to make higher use of super scalar
parallelism, we even separate the classification loop
(line~\ref{alg:s5:strloop}) from the counting loop (line~\ref{alg:s5:count}), as
done by \cite{karkkainen2009engineering}.

Like in S$^4$, the binary tree of splitters is stored in level-order as an array
$t$ (line~\ref{alg:s5:tree}), allowing efficient traversal using $i := 2 i +
\set{0,1}$, without branch mispredictions in line~\ref{alg:s5:chartest}. The
pseudo-code ``$(c \leq t_i)$'', which yields $0$ or $1$, can be implemented
using different machine instructions. One method is to use the instruction
\texttt{SETA}, which sets a register to $0$ or $1$ depending on a preceding
comparison. Alternatively, newer processors have predicated instruction like
\texttt{CMOVA} to conditionally move one register to another, again depending on
a preceding comparison's outcome. We noticed that \texttt{CMOVA} was slightly
faster than flag arithmetic.

While traversing the classification tree, we compare $w$ characters using one
arithmetic comparison. However, we need to make sure that these comparisons
have the desired outcome, e.g., that the most significant bits of the register
hold the first character. For little-endian machines and 8-bit characters, which
are used in all of our experiments, we need to \emph{swap the byte order} when
loading character from a string. In our implementation we do this using the
\texttt{BSWAP} machine instruction. In the pseudo-code (Algorithm~\ref{alg:s5})
this operation is symbolized by $\FuncSty{chars}_h(s_i)$, which fetches $w$
characters from $s_i$ at depth $h+1$, and swaps them appropriately.

For performing the equality check, already mentioned in the previous section, we
want to discuss four different alternatives in more technical details here:
\begin{enumerate}
\item One can traverse the tree using only $\leq$-comparisons and perform the
  equality check afterwards, as shown in Algorithm~\ref{alg:s5}. For this we
  keep the splitters $x_i$ in an in-order array, in addition to the
  classification tree $t$, which contains them in level-order. Duplicating the
  splitters avoids additional work in line~\ref{alg:s5:eqtest}, where $i$ is an
  in-order index. This variant was our final choice, called S$^5$-Unroll, as it
  was overall fastest.

\item The additional in-order array from the previous variant, however, can be
  removed. Instead, a rather simple calculation involving only bit operations
  can be used to transform the in-order index $i$ back to level-order, and reuse
  the classification tree $t$. We tried this variant, but found no performance
  advantage over the first.

\item Another idea is to keep track of the last $\leq$-branch during tree
  traversal, this however was slower and requires an extra register for each of
  the interleaved descents.

\item The last variant is to check for equality after each comparison in
  line~\ref{alg:s5:chartest}. This requires only an additional \texttt{JE}
  instruction and no extra \texttt{CMP} in the inner-most loop.  The branch
  misprediction cost of the \texttt{JE} is counter-balanced by skipping the rest
  of the tree.  As $i$ is a tree-order index when exiting the inner loop, we
  need to apply the inverse of the transformation mentioned in the second method
  to $i$ to determine the correct equality bucket.  Thus in this fourth variant,
  named S$^5$-Equal, no additional in-order splitter array is needed.

\end{enumerate}

The sample is drawn pseudo-randomly with an oversampling factor $\alpha = 2$ to
keep it in cache when sorting with STL's introsort and building the search tree.
Instead of using the straight-forward equidistant method to draw splitters from
the sample, as shown in Algorithm~\ref{alg:s5} (line~\ref{alg:s5:sample}), we
developed a simple recursive scheme that tries to avoid using the same splitter
multiple times: Select the middle sample $m$ of a range $a..b$ (initially the
whole sample) as the middle splitter $\bar{x}$. Find new boundaries $b'$ and
$a'$ by scanning left and right from $m$ \emph{skipping} samples equal to
$\bar{x}$. Recurse on $a..b'$ and $a'..b$.  The splitter tree selected by this
heuristic was never slower than equidistant selection, but slightly faster for
inputs with many equal common prefixes. It is used in all our experiments.

The LCP of two consecutive splitters in line~\ref{alg:s5:slcp} can be calculated
without a loop using just two machine instructions: \texttt{XOR} and
\texttt{BSR} (to count the number of leading zero bits in the result of
\texttt{XOR}). In our implementation, these calculation are done while selecting
splitters. Similarly, we need to check if splitters contain end-of-string
terminators, and skip the recursion in this case.

For current 64-bit machines with 256\,KiB L2 cache, we use $v = 8191$. Note that
the limiting data structure which must fit into L2 cache is not the splitter
tree $t$, which is only 64\,KiB for this $v$, but is the bucket counter array
$b$ containing $2v+1$ counters, each 8 bytes long. We did not look into methods
to reduce this array's size, because the search tree is stored both in
level-order and in in-order, and thus we could not increase the tree size
anyway.

\subsection{Practical Parallelization of \texorpdfstring{S$^5$}{S5}}\label{sec:parallel-s5}

Parallel S$^5$ (pS$^5$) is composed of four sub-algorithms for differently sized
subsets of strings. For a string subset $\Strings$ with $|\Strings| \geq
\frac{n}{p}$, a \emph{fully parallel version} of S$^5$ is run, for large sizes
$\frac{n}{p} > |\Strings| \geq t_m$ a sequential version of S$^5$ is used, for
sizes $t_m > |\Strings| \geq t_i$ the fastest sequential algorithm for
medium-size inputs (caching multikey quicksort from Section~\ref{sec:para-mkqs})
is called, which internally uses insertion sort when $|\Strings| < t_i$.  We
empirically determined $t_m = 1\,\text{Mi}$ and $t_i = 64$ as good thresholds to
switch sub-algorithms.

The fully parallel version of S$^5$ uses $p' = \Theta( \frac{|\Strings|}{p} )$
threads for a subset $\Strings$. It consists of four stages: selecting samples
and generating a splitter tree, parallel classification and counting, global
prefix sum, and redistribution into buckets. Selecting the sample and
constructing the search tree are done sequentially, as these steps have
negligible run time. Classification is done independently, dividing the string
set evenly among the $p'$ threads. The prefix sum is done sequentially once all
threads finish counting.

In both the sequential and parallel versions of S$^5$ we permute the string
pointer array using out-of-place redistribution into an extra array. In
principle, we could do an in-place permutation in the sequential version by
walking cycles of the permutation \cite{mcilroy1993engineering}. Compared to
out-of-place copying, the in-place algorithm uses fewer input/output streams and
requires no extra space. However, we found that modern processors optimize the
sequential reading and writing pattern of the out-of-place version better than
the random access pattern of the in-place walking. Furthermore, for fully
parallel S$^5$, an in-place permutation cannot be done in the same manner.  We
therefore always use \emph{out-of-place redistribution}, with an extra string
pointer array of size $n$. For recursive calls, the role of the extra array and
original array are swapped, which saves superfluous copying work.

All work in parallel S$^5$ is dynamically load balanced via a central job
queue. We use the lock-free queue implementation from Intel's Thread Building
Blocks (TBB) and threads initiated by OpenMP to create a \emph{light-weight
  thread pool}.

To make work balancing most efficient, we modified all sequential
sub-al\-go\-rithms of parallel S$^5$ to use an explicit recursion stack. The
traditional way to implement dynamic load balancing would be to use work
stealing among the sequentially working threads. This would require the
operations on the local recursion stacks to be synchronized or atomic. However,
for our application fast stack operations are crucial for performance as they
are very frequent. We therefore choose a different method: \emph{voluntary work
  sharing}. If the global job queue is empty and a thread is idle, then a global
atomic counter is incremented to indicate that other threads should share their
work. These then free the stack level with the \emph{largest subproblems} from
their local recursion stack and enqueue these as separate, independent
jobs. This method avoids costly atomic operations on the local stacks, replacing
it by a faster counter check, which itself \emph{need not be synchronized} or
atomic. The short wait of an idle thread for new work does not occur often,
because the largest recursive subproblems are shared. Furthermore, the global
job queue never gets large because most subproblems are kept on local stacks.

\section{Parallel Multiway LCP-Mergesort}\label{sec:para-mergesort}

When designing pS$^5$ we considered L2 cache sizes, word parallelism, super
scalar parallelism and other modern features. However, new architectures with
large amounts of RAM are now commonly non-uniform memory access (NUMA) systems,
and the RAM chips are distributed onto different memory banks, called \emph{NUMA
  nodes}. In preliminary synthetic experiments, access to memory on ``remote''
nodes was 2--5 times slower than memory on the local socket, because the
requests must pass over an additional interconnection bus. This latency and
throughput disparity brings algorithms for external and distributed memory to
mind, but the divide is much less pronounced and block sizes are smaller.

In light of this disparity, we propose to use \emph{independent string sorters}
on each NUMA node, and then \emph{merge} the sorted results. During merging, the
amount of information per transmission unit passed via the interconnect (64-byte
cache lines) should be maximized. Thus, besides the sorted string pointers, we
also want to use LCP information to skip over known common prefixes, and cache
the distinguishing characters.

While merging sorted sequences of strings with associated LCP information is a
very intuitive idea, remarkably, only one very recent paper by Ng and
Kakehi~\cite{ng2008merging} fully considers LCP-aware mergesort for
strings. They describe \emph{binary} LCP-merge\-sort and perform an average case
analysis yielding estimates for the number of comparisons needed. For the NUMA
scenario, however, we need a \emph{parallel $K$-way LCP-merge}, where $K$ is
the number of NUMA nodes.  Furthermore, we also need to extend our existing
string sorting algorithms to save the LCP array.

In the next section, we first review binary LCP-aware merging. On this
foundation we then propose and analyze parallel $K$-way LCP-merging in
Sections~\ref{sec:merge-kway}--\ref{sec:merge-details}. For node-local LCP
calculations, we extended pS$^5$ appropriately, and describe the necessary
LCP-aware base case sorter in Section~\ref{sec:lcp-inssort}. Further information
on the results of this section are available in the bachelor thesis of Andreas
Eberle~\cite{eberle2014parallel}.

\subsection{Binary LCP-Compare and LCP-Mergesort}\label{sec:mergesort-binary}

We reformulate the binary LCP-merge and -mergesort presented by Ng and
Ka\-ke\-hi~\cite{ng2008merging} here in a different way. Our exposition is somewhat
more verbose than necessary, but this is intentional and prepares for a simpler
description of $K$-way LCP-merge in the following section.

Consider the basic comparison of two strings $s_a$ and $s_b$. If there is no
additional LCP information, the strings must be compared character-wise until a
mismatch is found. However, if we have additionally the LCP of $s_a$ and $s_b$
to another string $p$, namely $\lcp(p,s_a)$ and $\lcp(p,s_b)$, then we can first
compare these LCP values. Since both reference $p$, we know that $s_a$ and $s_b$
share a common prefix $\min\{ \lcp(p,s_a), \lcp(p,s_b) \}$ and that this common
prefix is maximal (i.e. longest). Thus if $\lcp(p,s_a) < \lcp(p,s_b)$, then the
two strings $s_a$ and $s_b$ differ at position $\ell := \lcp(p,s_a) + 1$. If we
now furthermore assume $p \leq s_a$, then we immediately see $p[\ell] =
s_b[\ell] < s_a[\ell]$, from which follows $s_b < s_a$. The argument can be
applied symmetrically if $\lcp(p,s_b) < \lcp(p,s_a)$.

There remains the case $\lcp(p,s_a) = \lcp(p,s_b)$. Here, the LCP information
only reveals that both have a common prefix $\lcp(p,s_a)$, and additional
character comparisons starting at the common prefix are necessary to order the
strings.

\begin{algorithm2e}[t]
\caption{Binary LCP-Compare}\label{alg:LCP-compare}\normalsize
\Function{\LCPCompare{$(a,s_a,h_a), (b,s_b,h_b)$}}
{
  \KwIn{$(a,s_a,h_a)$ and $(b,s_b,h_b)$ where $s_a$ and $s_b$ are two strings
    together with LCPs $h_a = \lcp(p,s_a)$ and $h_b = \lcp(p,s_b)$, and $p$ is
    another string with $p \leq s_a$ and $p \leq s_b$.}

  \uIf(\Remi{case 1: LCPs are equal $\Rightarrow$ compare more characters,})
  {$h_a = h_b$}
  {
    $h' := h_a$ \Rem{starting at $h' = h_a = h_b$.}
    \While(\Remi{Compare characters and}\nllabel{alg:LCP-compare:charloop})
    {$(s_a[h'] \neq 0 \And s_a[h'] = s_b[h'])$}
    {
      $h' \Inc$ \Rem{increase total LCP.}
    }
    \lIf(\nllabel{alg:LCP-compare:charcmp2})
    {$s_a[h'] \leq s_b[h']$}
    {
      \Return $(a,h_a,b,h')$
    }
    \lElse
    {
      \Return $(b,h_b,a,h')$
    }
  }
  \lElseIf(\Remi{case 2: $s_b[h_a {+} 1] < s_a[h_a {+} 1]$.})
  {$h_a < h_b$}
  {
    \Return $(b,h_b,a,h_a)$
  }
  \lElse(\Remi{case 3: $s_a[h_b {+} 1] < s_b[h_b {+} 1]$.})
  {
    \Return $(a,h_a,b,h_b)$
  }
  \KwOut{$(x,h_x,y,h')$ with $s_x \leq s_y$, $\{x,y\} = \{a,b\}$, and $h' = \lcp(s_a,s_b)$.}
}
\end{algorithm2e}

The pseudo-code in Algorithm~\ref{alg:LCP-compare} implements these three
cases. In preparation for $K$-way LCP-merge, the function \LCPCompare
additionally takes variables $a$ and $b$, which are corresponding indexes and
returns these instead of $s_a$ or $s_b$. It also calculates more information
than just the order of $s_a$ and $s_b$, since future LCP-aware comparisons also
require $\lcp(s_a,s_b)$.

In the cases $\lcp(p,s_a) \neq \lcp(p,s_b)$, the $\lcp(s_a,s_b)$ is easily
inferred since the character after the smaller LCP differs in $s_a$ and
$s_b$. From this follows $\lcp(s_a,s_b) = \min\{ \lcp(p,s_a), \lcp(p,s_b) \}$, as already
stated above. For $\lcp(p,s_a) = \lcp(p,s_b)$ each additionally compared equal
character is common to both $s_a$ and $s_b$, and the comparison loop in
line~\ref{alg:LCP-compare:charloop} of Algorithm~\ref{alg:LCP-compare} breaks at
the first mismatch or zero termination. Thus afterwards $h' = \lcp(s_a,s_b)$,
and can be returned as such.

Using \LCPCompare we can now build a binary LCP-aware merging method, which
merges two sorted string sequences with associated LCP arrays. One only needs to
take $s_a$ and $s_b$, compare then using \LCPCompare, write the smaller of them,
say $s_a$, to the output and fetch its successor $s'_a$ from the sorted
sequence. The written string $s_a$ then plays the role of $p$ in the
discussion above, and the next two candidate strings $s'_a$ and $s_b$ can be
compared, since $\lcp(p,s_b) = \lcp(s_a,s_b)$ is returned by \LCPCompare and
$\lcp(p,s'_a) = \lcp(s_a,s'_a)$ is known from the corresponding LCP array. This
procedure is detailed in Algorithm~\ref{alg:LCP-merge}. For binary merging, we
can ignore the $h_x$ returned by \LCPCompare. Notice that using the indexes $x$
and $y$, the LCP invariant can be restored using just one assignment in
line~\ref{alg:LCP-merge:lcp-swap}.

\begin{theorem}\label{thm:LCP-mergesort}
  Using Algorithm~\ref{alg:LCP-merge}, one can implement a binary LCP-mergesort
  algorithm, which requires at most $L + n \lceil \log_2 n \rceil$ character
  comparisons and runs in $\Oh{D + n \log n}$ time.
\end{theorem}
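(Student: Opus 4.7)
The plan is to build binary LCP-mergesort as standard top-down divide-and-conquer: recursively split into two halves, sort each with its LCP array (falling back to an LCP-aware base case sorter for constant-size inputs), and combine via a binary LCP-merge loop built on Algorithm~\ref{alg:LCP-compare}. Correctness of both the sorted output and its associated LCP array follows by induction on the recursion from the already-established correctness of \LCPCompare.

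For the comparison count I would classify every character comparison inside \LCPCompare as either a \emph{match} (equal, nonzero characters, extending $h'$ in the while loop) or a \emph{mismatch} (characters differ or one is the zero terminator, exiting the loop and determining the ordering). Cases~2 and~3 of \LCPCompare perform no character comparisons at all; case~1 performs exactly $\lcp(s_a,s_b) - h_a$ matches plus one mismatch. The mismatch half is easy: a merge producing $m$ elements makes at most $m-1$ calls to \LCPCompare, so each of the $\lceil \log_2 n \rceil$ merge levels contributes at most $n$ mismatches, summing to $n \lceil \log_2 n \rceil$ overall.

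Bounding the total matches by $L$ is the main obstacle. My plan is an amortized charging argument: every match in case~1 extends the currently computed LCP by one, and whenever \LCPCompare returns, that LCP value is either written into the output LCP array of the current merge or installed as the hint $h$ for the surviving candidate's next comparison. Consequently, a character position matched at a lower level is never re-matched at a higher level against the same partner, because the higher-level call begins its while loop from the hint inherited through the input LCP array. Using a potential equal to the sum of all LCP entries and open hints currently stored by the algorithm -- starting at zero and ending at $L$ after the top-level merge -- one shows by induction over the merge-tree levels that the total number of matches is at most $L$. Making this telescoping entirely rigorous, in particular accounting for hints that become obsolete when a candidate is overtaken by several elements from the other run, is the step I expect to require the most care.

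Summing matches and mismatches yields at most $L + n \lceil \log_2 n \rceil$ character comparisons. Each comparison together with its pointer and hint bookkeeping is $\Oh{1}$ work, and the divide-and-conquer skeleton contributes $\Oh{n \log n}$ for splitting, writing and recursion. Hence the total running time is $\Oh{L + n \log n}$, which is $\Oh{D + n \log n}$ since $L \leq D$ as noted in Section~\ref{sec:prelim}.
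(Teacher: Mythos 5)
Your proposal is correct and follows essentially the same route as the paper: charge each while-loop match to a unit increase of the stored LCP sum (the paper phrases your potential argument as ``LCPs are only moved, swapped or stored, but never decreased or discarded''), and bound the mismatches by one per \LCPCompare{} call times the $\lceil \log_2 n\rceil$ merge levels. The delicate step you flag resolves itself because cases~2 and~3 replace a loser's hint by $\lcp(s_a,s_b)\geq\min(h_a,h_b)$ without any character comparison, so no hint is ever decreased or discarded.
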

\begin{proof}
  We assume the divide step of binary LCP-mergesort to do straight-forward
  halving as in non-LCP mergesort \cite{knuth1998sorting}, which is why we
  omitted its pseudo-code. Likewise, the recursive division steps have at most
  depth $\lceil \log_2 n \rceil$ when reaching the base case. If we briefly
  ignore the character comparison loop in \LCPCompare,
  line~\ref{alg:LCP-compare:charloop}, and regard it as a single comparison,
  then the standard divide-and-conquer recurrence $T(n) \leq T(\lfloor
  \frac{n}{2} \rfloor) + T(\lceil \frac{n}{2} \rceil) + n$ of non-LCP mergesort
  still holds. Regarding the character comparison loop, we can establish that
  each increment of $h'$ ultimately increases the overall LCP sum by exactly
  one, since in all other statements LCPs are only moved, swapped or stored, but
  never decreased or discarded.  Another way to see this is that the character
  comparison loop is the only place where characters are compared, thus to be
  able to establish the correctly sorted order, all distinguishing characters
  must be compared here.

  We regard the three different comparison expressions in
  lines~\ref{alg:LCP-compare:charloop}--\ref{alg:LCP-compare:charcmp2} as one
  ternary comparison, as the same values are checked again and zero-terminators
  can be handled using flag tests.  To count the total number of comparisons, we
  can thus account for all \textsl{true}-outcomes of the while loop condition in
  \LCPCompare (line~\ref{alg:LCP-compare:charloop}) using $L$, and all
  \textsl{false}-outcomes using $n \lceil \log_2 n \rceil$, since this is the
  highest number of times case 1 can occur in the mergesort recursion. This is
  an upper bound, and for most string sets, cases~2 and 3 reduce the number of
  comparisons in the second term.  Since $L \leq D$, the time complexity $\Oh{D
    + n \log n}$ follows immediately.
\end{proof}

\begin{algorithm2e}[t]
\caption{Binary LCP-Merge}\label{alg:LCP-merge}\normalsize

\KwIn{$\Strings_1$ and $\Strings_2$ two sorted sequences of strings with LCP
  arrays $H_1$ and $H_2$. Assume sentinels $\Strings_k[|\Strings_k|+1] = \infty$
  for $k = 1,2$, and $\Strings_0[0] = \varepsilon$.}

$i_1 := 1$,\quad $i_2 := 1$,\quad $j := 1$ \Rem{Indexes for $\Strings_1$, $\Strings_2$ and $\Strings_0$.}
$h_1 := 0$,\quad $h_2 := 0$ \Rem{Invariant: $h_k = \lcp(\Strings_k[i_k], \Strings_0[j-1])$ for $k=1,2$.}
\While{$i_1 + i_2 \leq |\Strings_1| + |\Strings_2|$}
{
  $s_1 := \Strings_1[i_1]$,\quad $s_2 := \Strings_2[i_2]$ \Rem{Fetch strings $s_1$ and $s_2$,}
  $(x,\bot,y,h') = \LCPCompare((1,s_1,h_1), (2,s_2,h_2))$ \Rem{compare them,}
  $(\Strings_0[j],H_0[j]) := (s_x,h_x)$,\quad $j \Inc$ \Rem{put smaller into output}
  $i_x \Inc$,\quad $(h_x,h_y) := (H_x[i_x],h')$ \Rem{and advance to next.}\nllabel{alg:LCP-merge:lcp-swap}
}

\KwOut{$\Strings_0$ contains sorted $\Strings_1$ and $\Strings_2$, and $\Strings_0$ has the LCP array $H_0$}

\end{algorithm2e}

Ng and Kakehi \cite{ng2008merging} do not give an explicit worst case
analysis. Their average case analysis shows, that the total number of character
comparisons of binary LCP-mergesort is about $n (\mu_a - 1) + P_\omega n \log_2
n$, where $\mu_a$ is the average length of distinguishing prefixes and
$P_\omega$ the probability of a ``breakdown'' (which corresponds to case 1 in
\LCPCompare). Taking $P_\omega = 1$ and $\mu_a = \frac{D}{n}$, their equation
matches our worst-case result, except for the minor difference between $D$ and
$L$.

\subsection{\texorpdfstring{$K$}{K}-way LCP-Merge}\label{sec:merge-kway}

To accelerate LCP-merge for NUMA systems, we extended the binary LCP-merge
approach to $K$-way LCP-merge using tournament
trees~\cite{knuth1998sorting,sanders00fast}, since current NUMA systems have
four or even eight nodes. We could not find any reference to $K$-way LCP-merge
in the literature, even though the idea to store and reuse LCP information
inside the tournament tree is very intuitive. The algorithmic details, however,
require precise elaboration.

\begin{figure}\centering\normalsize
  \begin{tikzpicture}[
  yscale=-1,
  xscale=0.8,yscale=0.8,
  keynode/.style={anchor=base,inner sep=1pt},
  uparrow/.style={->},
  txtnode/.style={anchor=base east},
  ]

  \node (n0) [keynode] at (0,-1) {$(w,h_1)$};

  \node[txtnode] at (-1,-1) {Winner};

  \node (n1) [keynode] at (0,0) {$(y_2,h_2)$};

  \node (n2) [keynode] at (-2,1) {$(y_3,h_3)$};
  \node (n3) [keynode] at (2,1) {$(y_4,h_4)$};

  \node[txtnode,anchor=east] at (-2.7,0.25) {Losers};

  \draw[densely dotted] (-5.8,1.4) -- (4,1.4);

  \node[txtnode] at (-4,2) {Players};

  \node (n4) [keynode] at (-3,2) {$(s_1,h'_1)$};
  \node (n5) [keynode] at (-1,2) {$(s_2,h'_2)$};
  \node (n6) [keynode] at (+1,2) {$(s_3,h'_3)$};
  \node (n7) [keynode] at (+3,2) {$(s_4,h'_4)$};

  \draw[uparrow] (n1) -- (n0);
  \draw[uparrow] (n2) -- (n1);
  \draw[uparrow] (n3) -- (n1);
  \draw[uparrow] (n4) -- (n2);
  \draw[uparrow] (n5) -- (n2);
  \draw[uparrow] (n6) -- (n3);
  \draw[uparrow] (n7) -- (n3);

  \node[txtnode] at (-4,3) {Inputs};

  \node (s1) [keynode] at (-3,3) {$(\Strings_1,H_1)$};
  \node (s2) [keynode] at (-1,3) {$(\Strings_2,H_2)$};
  \node (s3) [keynode] at (+1,3) {$(\Strings_3,H_3)$};
  \node (s4) [keynode] at (+3,3) {$(\Strings_4,H_4)$};

  \draw[uparrow] (s1) -- (n4);
  \draw[uparrow] (s2) -- (n5);
  \draw[uparrow] (s3) -- (n6);
  \draw[uparrow] (s4) -- (n7);

  \node[txtnode] at (-1,-2) {Output};

  \node (s0) [keynode] at (0,-2) {$(\Strings_0,H_0)$};

  \draw[uparrow] (n0) -- (s0);

  \draw[densely dotted] (-5.8,-1.6) -- (4,-1.6);

\end{tikzpicture}
  \caption{LCP-aware tournament tree with $K=4$ showing input and output
    streams, their front items as players, the winner node $(w,h_1)$, and loser
    nodes $(y_i,h_i)$, where $y_i$ is the index of the losing player of the
    particular game and $h_i$ is the LCP of $s_{y_i}$ and the winner of the
    comparison at node $i$.}\label{fig:lcp-losertree}
\end{figure}
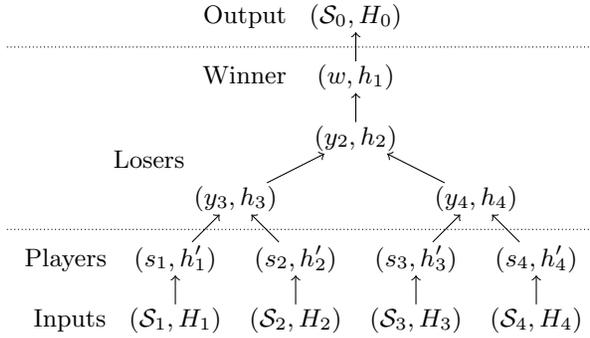

As commonly done in multiway mergesort, to perform $K$-way merging one regards
selection of the next item as a tournament with $K$ players (see
Figure~\ref{fig:lcp-losertree}). Players compete against each other using binary
comparisons, and these games are organized in a binary tree. Each node in the
tree corresponds to one game, and we label the nodes of the tree with the
``losers'' of that particular game. The ``winner'' continues upward and plays
further games, until the overall winner is determined. The winner is commonly
placed on the top, in an additional node, and with this node, the tournament
tree contains each player exactly once.  Hence the tree has exactly $K$ nodes,
since we do not consider the input, output or players part of the tree.  For
sorting strings into ascending sequences, the ``overall winner'' of the
tournament is the lexicographically smallest string.

The first winner is determined by playing an initial round on all $K$ nodes from
bottom up.  This winner can then be sent to the output, and the next item from
the corresponding input sequence takes its place.  Thereafter, only $\log_2 K$
games must be replayed, since the previous winner only took part in those games
along the path from the corresponding input to the root of the tournament
tree. This can be repeated until all streams are empty. By using sentinels for
empty inputs, special cases can be avoided, and we can assume $K$ to be a power
of two, filling up with empty inputs as needed. Thus the tournament tree can be
assumed to be a perfect binary tree, and can be stored implicitly in an
array. Navigating upward in the tree corresponds to division by two: $\lceil
\frac{i}{2} \rceil$ is the parent of child $i$, unless $i=1$ (note that we use a
one-based array here). Thus finding the path from input leaf to root when
replaying the game can be implemented very efficiently. Inside the tree nodes,
we save the loser \emph{input index} $y_i$, or winner index $w$ (renamed from
$y_1$), instead of storing the string $s_i$ or a reference thereof.

We now discuss how to make the tournament tree LCP-aware. The binary comparisons
between players are done using \LCPCompare (Algorithm~\ref{alg:LCP-compare}),
which may perform explicit character comparisons in case 1. Since we want to
avoid comparing characters already found equal, we store alongside the loser
input index $y_i$ an LCP value $h_i$ in the tree node. The LCP $h_i$ represents
the LCP of the stored losing string $s_{y_i}$ with the particular game's winner
string, which passes upward to play further comparisons. If we call the
corresponding winner $x_i$, even though it's not explicitly stored, then $h_i =
\lcp(s_{x_i},s_{y_i})$.

After the initial overall winner $w$ is determined, we have to check that all
requirements of \LCPCompare are fulfilled when replaying the games on the path
from input $w$ to the root.  The key argument is that the overall winner $w$ was
also the winner of all individual games on the path.  Hence, for all games $i$
on that path $h_i = \lcp(s_w,s_{y_i})$.  Thus after writing $s_w$ to the output,
and advancing to the next item $(s'_w,h''_w)$ from the input $(\Strings_w,H_w)$,
we have $p = s_w$ as the common, smaller predecessor string.  The previous
discussion about the overall winner $w$ is also valid for the individual winner
$x_i$ of any node $i$ in the tree, since it is the winner of all games leading
from input $x_i$ to node $i$.

The function signature $(x,h_x,y,h_y) = \LCPCompare((a,s_a,h_a),(b,s_b,h_b))$
was designed to be played on two nodes $(a,h_a)$ and $(b,h_b)$ of the LCP-aware
tournament tree. When replaying a path, we can picture a node $(a,h_a)$ moving
``upward'' along the edges. \LCPCompare is called with this moving node and the
loser information $(b,h_b) := (y_i,h_i)$ saved in the encountered node $i$.
After performing the comparisons, the returning values $(x,h_x)$ are the winner
node, which passes upwards, and $(y,h_y)$ are the loser information, which is
saved in the node $i$.  Thus \LCPCompare effectively selects the winner of each
game, and computes the loser information for future LCP-aware comparisons.  Due
to the recursive property discussed in the previous paragraph, the requirements
of \LCPCompare remains valid along all paths, and \LCPCompare can switch between
them.

\begin{algorithm2e}[t]
\caption{$K$-way LCP-Merge}\label{alg:kway-LCP-merge}\normalsize

\KwIn{$\Strings_1,\ldots,\Strings_K$ sorted sequences of strings with LCP arrays
  $H_1,\ldots,H_K$ and common prefix $\overline{h}$. Assume sentinels
  $\Strings_k[|\Strings_k|+1] = \infty$ for $k = 1,\ldots,K$, and $K$ a power of
  two.}

$i_k := 1 \;\forall\, k = 1,\ldots,K$,\quad $j := 1$ \Rem{Initialize indexes for $\Strings_1,\ldots,\Strings_K$ and $\Strings_0$.}
\For(\Remi{Initialize loser tree, building})
{$k := 1,\ldots,K$}
{
  $s_k := \Strings_k[i_k]$ \Rem{perfect subtrees left-to-right.}
  $(x,h') := (k,\overline{h})$,\quad $v := K + k$ \Rem{Play from input node $v$, upward till the root}
  \While(\Remi{of a perfect odd-based subtree is reached.})
  {$v \text{ is even}$}
  {
    $v := \frac{v}{2}$, \quad$(x,h',y_v,h_v) := \LCPCompare((x,s_x,h'), (y_v,s_{y_v},h_v))$ \;
  }
  $(y_v,h_v) := (x,h')$ \Rem{Save intermediate winner in odd node at top.}
}
$w := y_1$ \Rem{Initial winner after all games (rename $y_1 \rightarrow w$).}
\While(\Remi{Loop until output is done.})
{$j \leq \sum_{k=1}^K |\Strings_k|$}
{
  $(\Strings_0[j],H_0[j]) := (s_w,h_1)$,\quad $j \Inc $ \Rem{Output winner string $s_w$ with LCP $h_1$.}
  $i_w \Inc$,\quad $s_w := \Strings_w[i_w]$ \Rem{Replace winner with next item from input.}
  $(x,h') := (w,H_w[i_w])$,\quad $v := K + w$ \Rem{Play from input node $v$, all games}
  \While(\Remi{upward to root (unrollable loop).})
  {$v > 2$}
  {
    $v := \lceil \frac{v}{2} \rceil$,\quad $(x,h',y_v,h_v) := \LCPCompare((x,s_x,h'), (y_v,s_{y_v},h_v))$ \;
  }
  $(w,h_1) := (x,h')$ \Rem{Save next winner at top.}
}
\KwOut{$\Strings_0$ contains sorted $\Strings_1,\ldots,\Strings_K$ and has the LCP array $H_0$}

\end{algorithm2e}

This LCP-aware $K$-way merging procedure is shown in pseudo-code in
Algorithm~\ref{alg:kway-LCP-merge}.  We build the initial tournament tree
incrementally from left to right, playing all games only on the right-most path
of every odd-based perfect subtree.  This right-most side contains only nodes
with even index.

The following theorem considers only a single execution of $K$-way LCP-merg\-ing,
since this is what we need in our NUMA scenario:

\begin{theorem}\label{thm:kway-LCP-merge}
  Algorithm~\ref{alg:kway-LCP-merge} requires at most $\Delta L + n \log_2 K +  
  K$ character comparisons, where $n = |\Strings_0|$ is the total number of
  strings and $\Delta L = L(H_0) - \sum_{k=1}^K L(H_k)$ is the sum of increments
  to LCP array entries.
\end{theorem}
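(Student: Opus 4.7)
The plan is to mirror the structure of the proof of Theorem~\ref{thm:LCP-mergesort}, splitting the character comparisons incurred by Algorithm~\ref{alg:kway-LCP-merge} into two groups according to the analysis of \LCPCompare: the \emph{productive} ones (the true outcomes of the while-loop at line~\ref{alg:LCP-compare:charloop}), which increment the local LCP variable $h'$, and the \emph{overhead} ones (the single false outcome that terminates the while-loop, plus the direct LCP tests of cases~2 and~3, which perform no character comparison), at most one per \LCPCompare call. The total comparison count is then the sum of one bound per group.

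First I would count the number of \LCPCompare invocations. During the build phase, each of the $K$ iterations of the outer \texttt{for} adds one new leaf and plays one game per step of the inner while loop; these games fill exactly the internal tree nodes $1,\ldots,K$, for a total of at most $K$ calls. During the main \texttt{while} loop each of the $n$ outputs triggers exactly $\log_2 K$ replays on the root-to-leaf path of the previous winner, contributing $n\log_2 K$ further calls. Hence the total number of \LCPCompare calls, and thus the overhead character comparisons, is bounded by $n\log_2 K + K$.

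The key remaining step is to charge the productive comparisons to $\Delta L$. I would argue that every true iteration of the while-loop increments $h'$ by exactly one, and that this unit of LCP mass is afterwards never destroyed: upon return, $h'$ is stored either as some $h_v$ in a tree node or as $h_1$ on the winner, and any subsequent \LCPCompare invocation involving it either copies it unchanged (cases~2/3), overwrites it onto a different slot while retaining it elsewhere, or feeds it into the output via $H_0[j] := h_1$. The input LCPs $H_k[i_k]$ entering the system through the statement ``$(x,h') := (w, H_w[i_w])$'' likewise propagate without loss. Therefore the sum of all true-iteration increments equals $L(H_0) - \sum_{k=1}^K L(H_k) = \Delta L$, and combining with the overhead bound yields $\Delta L + n\log_2 K + K$.

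The main obstacle is making the conservation argument precise, because the LCP values are shuffled among loser slots, winner slots, and the output as the tournament proceeds. I would formalise it by a potential function $\Phi$ defined as the sum of (i) all $h_v$ currently stored in the tree, (ii) all LCPs $H_k[i]$ not yet consumed from the inputs, and (iii) all LCPs $H_0[i]$ already written to the output, and then verify case by case that each line of Algorithm~\ref{alg:kway-LCP-merge} and each non-loop branch of \LCPCompare preserves $\Phi$ exactly, while each true iteration of the while-loop increases $\Phi$ by $1$. The identity $\Delta L = \Phi_{\text{final}} - \Phi_{\text{initial}}$ then follows directly from the definition of $\Phi$ at the start (when $\Phi = \sum_k L(H_k)$, up to initial $h_v = \overline{h}$ values that only inflate $\Phi_{\text{initial}}$ in our favour) and at the end (when $\Phi = L(H_0)$ plus residual sentinel contributions), closing the bound.
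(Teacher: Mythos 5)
Your proposal is correct and follows essentially the same route as the paper's proof: charge the \textsl{true}-outcomes of the character-comparison loop in \LCPCompare{} to $\Delta L$, and bound the \textsl{false}-outcomes by the number of \LCPCompare{} invocations, namely $K$ during tree construction plus $\log_2 K$ replays for each of the $n$ outputs. Your potential-function $\Phi$ merely formalises the conservation step that the paper handles by appeal to the argument in the proof of Theorem~\ref{thm:LCP-mergesort} (LCP values are only moved, stored, or swapped, never decreased), so there is no substantive difference in approach.
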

\begin{proof}
  We focus on the character comparisons in the sub-function \LCPCompare, since
  Algorithm~\ref{alg:kway-LCP-merge} itself does not contain any character
  comparisons. As in the proof of Theorem~\ref{thm:LCP-mergesort}, we can
  account for all \textsl{true}-outcomes of the while loop condition in
  \LCPCompare (line~\ref{alg:LCP-compare:charloop}) using $\Delta L$, since it
  increments the overall LCP. We can bound the number of \textsl{false}-outcomes
  by bounding the number of calls to \LCPCompare, which occurs exactly $K$ times
  when building the tournament tree, and then $\log_2 K$ times for each of the
  $n$ output string (we actually have one superfluous run in the pseudo-code,
  but we keep it as it makes the code shorter).  As before, this upper bound,
  $\Delta L + n \log_2 K + K$, is only attained in pathological cases, and for
  most inputs, cases 2 and 3 in \LCPCompare reduce the overall number of
  character comparisons.
\end{proof}

\begin{theorem}
  Using Algorithm~\ref{alg:kway-LCP-merge} one can implement a $K$-way
  LCP-mergesort algorithm, which requires less than $L + n \lceil \log_K n
  \rceil \log_2 K + n + \lceil \frac{n-1}{K-1} \rceil$ character comparisons and runs in
  $\Oh{D + n \log n}$ time.
\end{theorem}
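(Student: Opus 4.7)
The plan is to analyze the standard $K$-way divide-and-conquer mergesort: split the $n$ input strings into $K$ contiguous groups of size at most $\lceil n/K \rceil$, recursively $K$-way LCP-mergesort each group (with a single-string base case whose LCP array is empty), and merge the $K$ sorted subsequences with Algorithm~\ref{alg:kway-LCP-merge}. The resulting recursion tree is a $K$-ary tree with $n$ leaves; its height is $\lceil \log_K n \rceil$, and since a merge of $k_v$ sorted runs reduces the number of remaining runs by $k_v-1$, the total number of merge nodes is at most $\lceil (n-1)/(K-1) \rceil$.

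I would then apply Theorem~\ref{thm:kway-LCP-merge} at every internal node $v$, bounding its character comparisons by $\Delta L_v + n_v \log_2 K + k_v$, where $n_v$ is the merge output size, $k_v \leq K$ the number of active (non-sentinel) inputs, and $\Delta L_v$ the LCP mass newly produced at $v$ (its output LCP sum minus the sum of its input LCP sums). I use $k_v$ rather than the possibly looser $K$ from Theorem~\ref{thm:kway-LCP-merge} because a merge padded with $\infty$-sentinel inputs performs no meaningful LCPCompare work on those tournament leaves during tree construction; this is a minor refinement of the per-merge theorem.

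The three contributions are then summed separately over all internal nodes. The $\Delta L_v$ telescope: the output LCP array of every internal node $v$ is supplied verbatim as one of the input LCP arrays of $v$'s parent merge, and leaves contribute an empty LCP array, so $\sum_v \Delta L_v$ collapses to $L$, the LCP sum of the fully sorted sequence. For the middle term, I charge each string $\log_2 K$ per merge it participates in; since each string participates in at most $\lceil \log_K n \rceil$ merges (its depth in the recursion tree), the total is at most $n \lceil \log_K n \rceil \log_2 K$. For the constant term, the identity $\sum_v (k_v-1) = n-1$ (each merge reduces the remaining run count by $k_v-1$ and the total reduction is $n-1$) gives $\sum_v k_v = n - 1 + M$ with $M \leq \lceil (n-1)/(K-1) \rceil$, so this sums to at most $n - 1 + \lceil (n-1)/(K-1) \rceil$. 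Adding the three bounds yields the stated character-comparison count. The $\Oh{D+n\log n}$ time follows because each \LCPCompare call outside its inner character-loop is $\Oh{1}$, the tournament-tree bookkeeping is $\Oh{\log K}$ per output, and $L \leq D$.

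The main obstacle I anticipate is the refinement of the per-merge ``$+K$'' term of Theorem~\ref{thm:kway-LCP-merge}: to land exactly on $n + \lceil (n-1)/(K-1) \rceil$ rather than a slightly weaker $K \cdot \lceil (n-1)/(K-1) \rceil$ bound, I need to charge tree-construction cost by the actual active arity $k_v$ and then sum via $\sum_v k_v = n - 1 + M$. Once this accounting is pinned down, the $\Delta L_v$ telescoping argument and the depth-based charging for the $n_v \log_2 K$ term are routine, and the $\Oh{D + n \log n}$ bound follows by combining the character-comparison work with the per-output $\Oh{\log K}$ tournament overhead.
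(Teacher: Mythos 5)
Your overall route is the same as the paper's: apply Theorem~\ref{thm:kway-LCP-merge} once per merge node, let the $\Delta L_v$ telescope to $L$, and charge each string $\log_2 K$ comparisons per recursion level. The paper packages this as the recurrence $T(n) = K\,T(n/K) + n\log_2 K + K$ and solves it for $n = K^d$, where the additive term comes out as $\frac{K(n-1)}{K-1} = (n-1) + \frac{n-1}{K-1}$ --- exactly your $\sum_v k_v = (n-1) + M$ in the case that every merge is full. So for $n$ a power of $K$ your node-by-node accounting and the paper's recurrence coincide.

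The gap is your bound on the number of merge nodes $M$. From $\sum_v (k_v-1) = n-1$ and $k_v \le K$ you only get $M \ge \frac{n-1}{K-1}$, i.e.\ $\lceil\frac{n-1}{K-1}\rceil$ is a \emph{lower} bound on $M$, not an upper bound; for a splitting scheme in which every merge merely has at least two nonempty inputs, the upper bound is $M \le n-1$. Equality $M = \frac{n-1}{K-1}$ requires every merge to be a full $K$-way merge, which your balanced splitting into groups of size at most $\lceil n/K\rceil$ guarantees only when $n$ is a power of $K$. Concretely, for $K=4$ and $n=8$ the balanced recursion produces four binary merges below one $4$-way merge, so $M = 5 > \lceil 7/3\rceil = 3$ and $\sum_v k_v = 12 > n + \lceil\frac{n-1}{K-1}\rceil = 11$; your argument therefore does not deliver the stated constant term for general $n$ (to be fair, the paper's own proof really only verifies the recurrence for $n=K^d$ and is informal otherwise). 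A second, smaller issue: your refinement of the per-merge ``$+K$'' to ``$+k_v$'' assumes that comparisons against $\infty$-sentinels are free, but in case~1 of \LCPCompare, a sentinel whose stored LCP equals that of a real string still triggers one (failing) character comparison, so this refinement needs either an explicit implementation convention or a separate argument rather than being taken for granted.
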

\begin{proof}
  We assume the divide step of $K$-way LCP-mergesort to split into $K$
  sub-problems of nearly even size. Using Theorem~\ref{thm:kway-LCP-merge}
  yields the recurrence $T(n) = K \cdot T(\frac{n}{K}) + n \log_2 K + K$ with
  $T(1) = 0$, if we ignore the character comparisons loop.  Assuming $n = K^d$
  for some integer $d$, the recurrence can be solved elementary using induction,
  yielding $T(n) = n \log_K n \cdot \log_2 K + \frac{K (n-1)}{K-1}$. For $n \neq
  K^d$, the input cannot be split evenly into recursive subproblems. However, to
  keep this analysis simple, we use $K$-way mergesort even when $n < K$, and
  thus incur the cost of Theorem~\ref{thm:kway-LCP-merge} also at the base
  level. So, we have $\lceil \log_K n \rceil$ levels of recursion. As in
  previous proofs, we account for all matching character comparisons with $L$,
  and all others with the highest number of occurrences of case 1 in \LCPCompare
  in the whole recursion, which is $T(n)$. Since $L \leq D$, the run time
  follows.
\end{proof}

In the proof we assume $K$-way LCP-merge even in the base level. In an
implementation, one would chose a different merger when $n < K$. By selecting
$2$-way LCP-merge, the number of comparisons in the lowest recursion is reduced,
and we can get a bound of $L + n \log_2 n + \oh{nK}$, which is close to the
one in Theorem~\ref{thm:kway-LCP-merge}.

\subsection{Practical Parallelization of \texorpdfstring{$K$}{K}-way LCP-Merge}\label{sec:merge-kway-parallel}

We now discuss how to parallelize $K$-way LCP-merge when given $K$ sorted input
streams. The problem is that merging itself cannot be parallelized without
significant overhead~\cite{cole1988parallel}, as opposed to the classification
and distribution in pS$^5$.  Instead, we want to split the problem into disjoint
areas of independent work, as done commonly in practical parallel
multiway mergesort sorting algorithms and implementations
\cite{akl1987optimal,singler2007mcstl}.

In contrast to atomic merging, a perfect split with respect to the number of
elements in the subproblems by no means guarantees good load balance for string
merging. Rather, the amount of work in each piece depends on the unknown values
of the common prefixes. Therefore, dynamic load balancing is needed anyway and
we can settle for a simple and fast routine for splitting the input into
pieces that are small enough to allow good load balance. We now outline our
current approach\footnote{which we intend to improve for the final version.}.
Since access to string characters
incurs costly cache faults, we want to use the information in the LCP array to
help split the input streams.  In principle, in the following heuristic we merge
the top of the LCP interval trees \cite{abouelhoda2004replacing} of the $K$
input streams to find independent areas.

If we consider all occurrences of the global minimum in an LCP array, then these
split the input stream into disjoint areas starting with the same distinct
prefix. The only remaining challenge is to match equal prefixes from the $K$
input streams, and for this matching we need to inspecting the first
distinguishing characters of any string in the area. Matching areas can then be
merged independently.

Depending on the input, considering only the global LCP minima may not yield
enough independent work. However, we can apply the same splitting method again
on matching sub-areas, within which all strings have a longer common prefix, and
the global minimum of the sub-area is larger.

We put these ideas together in a splitting heuristic, which scans the $K$ input
LCP arrays sequentially once, and creates merge jobs while scanning.  We start
by reading $w$ characters from the first string of all $K$ input streams, and
select those inputs with the smallest character block $\overline{c}$. In each of
these selected inputs, we scan the LCP array forward, skipping over all entries
$> w$, and checking entries $= w$ for equal character blocks, until either an
entry $< w$ or a mismatching character block is found. This forward scan
encompasses all strings with prefix $\overline{c}$, and an independent merge job
can be started.  The process is then repeated with the next strings on all $K$
inputs.

We start the heuristic with $w = 8$ (loading a 64-bit register full of
characters), but reduce $w$ depending on how many jobs are started, as otherwise
the heuristic may create too many splits, e.g. for random input strings. We
therefore calculate an expected number of independent jobs, and adapt $w$
depending on how much input is left and how many jobs were already created. This
adaptive procedure keeps $w$ high for inputs with high average common prefix and
low otherwise.

We use the same load balancing framework as with pS$^5$ (see
Section~\ref{sec:parallel-s5}).  During merge jobs, we check if other threads
are idle via the global unsynchronized counter variable. To reduce balancing
overhead, we check only every 4\,Ki processed strings. If idle threads are
detected, then a $K$-way merge job is split up into further independent jobs
using the same splitting heuristic, except that a common prefix of all strings
may be known, and is used to offset the character blocks of size $w$.

\subsection{Implementation Details}\label{sec:merge-details}

Our experimental platforms have $m \in \{ 4, 8 \}$ NUMA nodes, and we use
parallel $K$-way LCP-merge only as a top-level merger on $m$ input
streams. Thus we assume the $N$ inputs characters to be divided evenly onto the
$m$ memory nodes.  On the individual NUMA memory nodes, we pin about
$\frac{p}{m}$ threads and run pS$^5$ on the string subset.

Since $K$-way LCP-merge requires the LCP arrays of the sorted sequences, we
extended pS$^5$ to optionally save the LCP value while sorting. The string
pointers and LCP arrays are kept separate, as opposed to interleaving them as
``annotated'' strings~\cite{ng2008merging}. This was done, because pS$^5$
already requires an additional pointer array during out-of-place
redistribution. The additional string array and the original string array are
alternated between in recursive calls. When a subset is finally sorted, the
correctly ordered pointers are copied back to the original array, if
necessary. This allows us to place the LCP values in the additional array.

The additional work and space needed by pS$^5$ to save the LCP values is very
small, we basically get LCPs for free. Most LCPs are calculated in the base case
sorter of pS$^5$, and hence we describe LCP-aware insertion sort in the next
section. All other LCPs are located at the boundaries of buckets separated by
either multikey quicksort or string sample sort. We calculate these boundary
LCPs after recursive levels are finished, and use the saved splitters or pivot
elements whenever possible.

The splitting heuristic of parallel $K$-way LCP-merge creates jobs with varying
$K$, and we created special implementations for the 1-way (plain copying) and
2-way (binary merging) cases, while all other $K$-way merges are performed using
the LCP-aware tournament tree.

To make parallel $K$-way LCP-merge more cache- and NUMA transfer-efficient, we
devised a \emph{caching variant}. In \LCPCompare the first character needed for
additional character comparisons during the merge can be predicted (if
comparisons occur at all).  This character is the distinguishing character
between two strings, which we label $\hat{c}_i = s_i[h_i]$, where $h_i =
\lcp_{\Strings}(i)$. Caching this character while sorting is easy, since it is
loaded in a register when the final, distinguishing character comparison is
made. We extended pS$^5$ to save $\hat{c}_i$ in an additional array and employ
it in a modified variant of \LCPCompare to save random accesses across NUMA
nodes.  Using this caching variant all character comparisons accounted for in
the $n \log_2 K + K$ term in Theorem~\ref{thm:kway-LCP-merge} can be done using
the cached $\hat{c}_i$, thus only $\Delta L$ random accesses to strings are
needed for a $K$-way merge.

\subsection{LCP-Insertion Sort}\label{sec:lcp-inssort}

As mentioned in the preceding section, we extended pS$^5$ to save LCP
values. Thus its base-case string sorter, insertion sort, also needed to be
extended. Again, saving and reusing LCPs during insertion sort is a very
intuitive idea, but we found no reference or analysis in the literature.

Assuming the array $\Strings = \arr{ s_1,\ldots,s_{j-1} }$ is already sorted and
the associated LCP array $H$ is known, then insertion sort locates a position
$i$ at which a new string $x = s_j$ can be inserted, while keeping the array
sorted. After insertion, at most the two LCP values $h_i$ and $h_{i+1}$ may need
to be updated. While scanning for the correct position $i$, customarily from the
right, values of both $\Strings$ and $H$ can already be shifted to allocate a
free position.

Using the information in the preliminary LCP array, the scan for $i$ can be
accelerated by skipping over certain areas in which the LCP value attests a
mismatch. The scan corresponds to walking down the LCP interval tree, testing
only one item of each child node, and descending down if it matches. In plainer
words, areas of strings with a common prefix can be identified using the LCP
array (as already mentioned in Section~\ref{sec:merge-kway-parallel}), and it
suffices to \emph{check once} if the candidate matches this common prefix. If
not then the whole area can be skipped.

In the pseudo-code of Algorithm~\ref{alg:lcp-inssort}, the common prefix of the
candidate $x$ is kept in $h'$, and increased while characters match. When a
mismatch occurs, the scan is continued to the left, and all strings can be
skipped if the LCP reveals that the mismatch position remains unchanged
(case~3). If the LCP does below $h'$, then a smaller strings precedes and
therefore the insertion point $i$ is found (case~2). At positions with equal LCP
more characters need to be compared (case~1). In the pseudo-code these three
cases are fused with a copy-loop moving items to right.

\begin{algorithm2e}[t]
\caption{LCP-InsertionSort}\label{alg:lcp-inssort}\normalsize

\KwIn{$\Strings = \{ s_1,\ldots,s_n \}$ is a set of strings with common prefix $\overline{h}$}

\For(\Remi{Insert $x = s_j$ into sorted sequence $\arr{ s_1,\ldots,s_{j-1} }$.}\nllabel{alg:lcp-inssort:for})
{$j = 1,\ldots,n$}
{
  $i := j$,\quad $(x,h') := (s_j,\overline{h})$ \Rem{Start candidate LCP $h'$ with common prefix $\overline{h}$.}
  \While{$i > 1$\nllabel{alg:lcp-inssort:while}}
  {
    \lIf(\Remi{case 1: LCP decreases $\Rightarrow$ insert after $s_{i-1}$.}\nllabel{alg:lcp-inssort:while-begin})
    {$h_i < h'$}
    {
      \KwSty{break}
    }
    \ElseIf(\Remi{case 2: LCP equal $\Rightarrow$ compare more characters.})
    {$h_i = h'$}
    {
      $p := h'$ \Rem{Save LCP of $x$ and $s_i$.}
      \While(\Remi{Compare characters.})
      {$(x[h'] \neq 0 \And x[h'] = s_{i-1}[h'])$\nllabel{alg:lcp-inssort:charcmp1}}
      {
        $h' \Inc$
      }
      \If(\Remi{If $x$ is larger, insert $x$ after $s_{i-1}$,})
      {$x[h'] \geq s_{i-1}[h']$\nllabel{alg:lcp-inssort:charcmp2}}
      {
        $h_i := h'$,\quad $h' := p$ \Rem{set $h_i$, the LCP of $s_{i-1}$ and $x$, but}
        \KwSty{break} \Rem{set $h_{i+1}$ after loop in line~\ref{alg:lcp-inssort:final}.}
      }
    }
    $(s_i, h_{i+1}) := (s_{i-1},h_i)$ \Rem{case 3: LCP is larger $\Rightarrow$ no comparison needed.}
    $i \Dec$ \nllabel{alg:lcp-inssort:while-end}
  }
  $(s_i, h_{i+1}) := (x, h')$ \Rem{Insert $x$ at correct position, update $h_{i+1}$ with LCP.}\nllabel{alg:lcp-inssort:final}\nllabel{alg:lcp-inssort:for-end}
}
  
\KwOut{$\Strings = \arr{ s_1,\ldots,s_n }$ is sorted and has the LCP array $\arr{ \bot,h_2,\ldots,h_n }$}

\end{algorithm2e}

\begin{theorem}
  LCP-aware insertion sort (Algorithm~\ref{alg:lcp-inssort}) requires at most $L
  + \frac{n (n-1)}{2}$ character comparisons and runs in $\Oh{D + n^2}$ time.
\end{theorem}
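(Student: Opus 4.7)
The plan mirrors the analysis of Theorem~\ref{thm:LCP-mergesort}: bound the \emph{matching} character comparisons via a potential argument tied to the LCP sum $L$, and the remaining ``control'' comparisons by the number of outer-loop scan steps. First I would observe that all character comparisons occur inside case~2 of the \texttt{while} loop, namely line~\ref{alg:lcp-inssort:charcmp1} and line~\ref{alg:lcp-inssort:charcmp2}; cases~1 and~3 use only integer comparisons of LCPs. Following the convention of Theorem~\ref{thm:LCP-mergesort}, I would count the failing exit of line~\ref{alg:lcp-inssort:charcmp1} together with the subsequent $\geq$-test at line~\ref{alg:lcp-inssort:charcmp2} as a single ternary character comparison per entry to case~2.

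Next I would introduce the potential $\Phi_j \defeq \sum_{i=2}^{j} h_i$, the sum of the partial LCP array after outer iteration $j$, so $\Phi_1 = 0$ and $\Phi_n = L$. During iteration~$j$, $h'$ starts at $\overline{h}$, grows monotonically by exactly one per successful match in line~\ref{alg:lcp-inssort:charcmp1}, and is at most once restored to the saved value $p$; hence the number of matching comparisons in iteration~$j$ equals ${h'}^{\max}_j - \overline{h}$, where ${h'}^{\max}_j$ is the largest value that $h'$ attains. The core step is to prove $\Phi_j - \Phi_{j-1} = {h'}^{\max}_j$ for $j \geq 2$. Because insertion at position $p$ introduces only two new LCP entries $h_p$ and $h_{p+1}$, with all other entries shifted or unchanged, one has $\Phi_j - \Phi_{j-1} = h_p^{\text{new}} + h_{p+1}^{\text{new}} - h_p^{\text{old}}$. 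Applying the ultrametric identity $\lcp(a,c) = \min(\lcp(a,b), \lcp(b,c))$ for $a \leq b \leq c$ to the sorted triple $(s_{p-1}, x, s_p^{\text{old}})$ yields $\min(h_p^{\text{new}}, h_{p+1}^{\text{new}}) = h_p^{\text{old}}$, so the difference collapses to $\max(h_p^{\text{new}}, h_{p+1}^{\text{new}})$, which I would show equals ${h'}^{\max}_j$ in both termination subcases.

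Summing then gives total matches $\leq \sum_{j=2}^{n}(\Phi_j - \Phi_{j-1}) - (n-1)\overline{h} = L - (n-1)\overline{h} \leq L$. For the ternary comparisons, in iteration~$j$ the scan index $i$ starts at $j$ and decreases by one per non-break \texttt{while} iteration, bounding case~2 entries by $j-1$; summing over $j$ yields $\tfrac{n(n-1)}{2}$. Combining, the total number of character comparisons is at most $L + \tfrac{n(n-1)}{2}$. The time bound then follows since each \texttt{while} iteration does $\Oh{1}$ integer work and each match costs $\Oh{1}$ character work, for a total of $\Oh{n^2 + L} = \Oh{D + n^2}$ using $L \leq D$.

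The hardest part will be the case analysis establishing $\Phi_j - \Phi_{j-1} = {h'}^{\max}_j$. In the $x \geq s_{i-1}$ branch, the saved $p$ equals $h_p^{\text{old}}$ and is restored into $h_{p+1}$, so ${h'}^{\max}_j$ is written into $h_p$; in the case~1 branch ${h'}^{\max}_j$ is written into $h_{p+1}$ while $h_p$ is left unchanged but coincides with $h_p^{\text{old}}$ by the ultrametric identity; finally, the boundary in which the scan falls out at $i = 1$ without any break requires separate but straightforward accounting that yields the same formula. All three subcases collapse to the same identity, but the bookkeeping for the restoration $h' := p$ and the asymmetric update of $h_i$ versus $h_{i+1}$ must be done carefully.
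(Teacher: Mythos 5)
Your proposal is correct, and its skeleton coincides with the paper's own proof: both observe that character comparisons occur only in lines~\ref{alg:lcp-inssort:charcmp1} and~\ref{alg:lcp-inssort:charcmp2}, both fuse the failing exit of the match loop with the subsequent $\geq$-test into a single ternary comparison charged to the at most $\frac{n(n-1)}{2}$ iterations of the inner scan, and both charge the successful matches to $L$. The one place where you genuinely diverge is the justification of the charge to $L$. The paper argues in one sentence that each increment of $h'$ is ``only shifted around, never discarded or decreased'' and therefore survives into the final LCP array; your telescoping potential $\Phi_j$ with the identity $\Phi_j - \Phi_{j-1} = {h'}^{\max}_j$, proved via the ultrametric relation $\lcp(s_{q-1}, s_q^{\mathrm{old}}) = \min\bigl(\lcp(s_{q-1},x), \lcp(x,s_q^{\mathrm{old}})\bigr)$ for the sorted triple around the insertion point, is a rigorous elaboration of exactly that claim. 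It has the virtue of explicitly handling the one statement where $h'$ \emph{is} decreased (the restoration $h' := p$, whose larger value has already been deposited in $h_i$) and of tracking where each increment lands in the final array, at the cost of the three-way case analysis you flag as the hardest part; the paper's version buys brevity by leaving that bookkeeping implicit. Both arguments are sound, and your accounting even yields the slightly sharper bound $L - (n-1)\overline{h} + \frac{n(n-1)}{2}$.
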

\begin{proof}
  The only lines containing character comparisons in
  Algorithm~\ref{alg:lcp-inssort} are lines~\ref{alg:lcp-inssort:charcmp1} and
  \ref{alg:lcp-inssort:charcmp2}.  If the while loop condition is \textsl{true},
  then $h'$ is incremented.  In the remaining algorithm the value of $h'$ is
  only shifted around, never discarded or decreased.  Thus we can count the
  number of comparisons yielding a while-loop repetition with $L$.  The while
  loop is encountered at most $\frac{n (n-1)}{2}$, as this is the highest number
  of times the inner loop in lines
  \ref{alg:lcp-inssort:while-begin}--\ref{alg:lcp-inssort:while-end} runs.  We
  can regard the exiting comparison of line \ref{alg:lcp-inssort:charcmp1} and
  the following comparison in line \ref{alg:lcp-inssort:charcmp2} as one ternary
  comparison, as the same values are checked again.  This ternary comparison
  occurs at most once each run of the inner loop, thus $\frac{n (n-1)}{2}$
  times.  With $L \leq D$, the run time follows from the number of iterations of
  the for loop (line \ref{alg:lcp-inssort:for}--\ref{alg:lcp-inssort:for-end})
  and the while loop (lines
  \ref{alg:lcp-inssort:while}--\ref{alg:lcp-inssort:while-end}).
\end{proof}

We close with the remark that non-LCP insertion sort requires $\Oh{D^2 + n^2}$
steps in the worst case, when all strings are equal except for the last
character.

\section{More Shared Memory Parallel String Sorting}\label{sec:more-parasort}

\subsection{Parallel Radix Sort}\label{sec:para-radixsort}

Radix sort is very similar to sample sort, except that classification is much
faster and easier. Hence, we can use the same parallelization toolkit as with
S$^5$. Again, we use three sub-algorithms for differently sized subproblems:
fully parallel radix sort for the original string set and large subsets, a
sequential radix sort for medium-sized subsets and insertion sort for base
cases. Fully parallel radix sort consists of a counting phase, global prefix sum
and a redistribution step. Like in S$^5$, the redistribution is done
out-of-place by copying pointers into a shadow array.

We experimented with 8-bit and 16-bit radixes for the fully parallel
step. Smaller recursive subproblems are processed independently by sequential
radix sort with in-place permuting, and here we found 8-bit radixes to be faster
than 16-bit sorting.  Our parallel radix sort implementation uses the same work
balancing method as parallel S$^5$, freeing the largest subproblems when other
threads are idle.

\subsection{Parallel Caching Multikey Quicksort}\label{sec:para-mkqs}

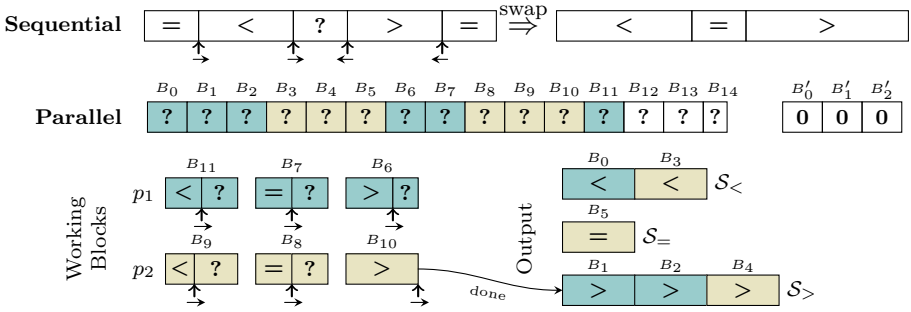
\begin{figure}\centering\small

\begin{tikzpicture}[scale=0.4,
  yscale=-1,
  line cap=round,
  ]

  \colorlet{dred}{black}
  \colorlet{fred}{teal!40!white}

  \colorlet{dgreen}{black}
  \colorlet{fgreen}{olive!20!white}

  \begin{scope}

    \node[left] at (0,0) {\bf Sequential};

    \begin{scope}[xshift=5mm,yshift=-5mm,
      xscale=0.9,
      ptr/.style={<-,thick},
      move/.style={->,>=stealth},
      ]
      \draw (0,0) rectangle (2,1);
      \draw (2,0) rectangle (5.5,1);
      \draw (5.5,0) rectangle (7.5,1);
      \draw (7.5,0) rectangle (11,1);
      \draw (11,0) rectangle (13,1);
      \node at (1,0.5) {$\vphantom{<}$$\boldsymbol{=}$};
      \node at (3.75,0.5) {$\boldsymbol{<}$};
      \node at (6.5,0.5) {$\boldsymbol{?}$};
      \node at (9.25,0.5) {$\boldsymbol{>}$};
      \node at (12,0.5) {$\vphantom{<}$$\boldsymbol{=}$};
      \draw[ptr] (2,1) -- +(0,4mm);
      \draw[ptr] (5.5,1) -- +(0,4mm);
      \draw[ptr] (7.5,1) -- +(0,4mm);
      \draw[ptr] (11,1) -- +(0,4mm);

      \draw[move] (18mm,16mm) -- +(6mm,0);
      \draw[move] (53mm,16mm) -- +(6mm,0);
      \draw[move] (77mm,16mm) -- +(-6mm,0);
      \draw[move] (112mm,16mm) -- +(-6mm,0);
    \end{scope}
    
    \node at (13.1,-0.25) {\large$\overset{\text{swap}}{\Rightarrow}$};

    \begin{scope}[xshift=14.2cm,yshift=-5mm,
      xscale=0.9,
      ]

      \draw (0,0) rectangle (5,1);
      \draw (5,0) rectangle (7,1);
      \draw (7,0) rectangle (13,1);
      \node at (2.5,0.5) {$\boldsymbol{<}$};
      \node at (6,0.5) {$\vphantom{<}$$\boldsymbol{=}$};
      \node at (10,0.5) {$\boldsymbol{>}$};

    \end{scope}

  \end{scope}

  \begin{scope}[yshift=3cm,xscale=1.2,
    blkid/.style={font=\tiny},
    ptr/.style={<-,thick,line cap=round},
    move/.style={->,>=stealth},
    ]

    \node[left] at (0,0) {\bf Parallel};

    \begin{scope}[xshift=5mm,yshift=-5mm,xscale=1.1]
      \foreach \x in {0,...,2,6,7,11} {
        \filldraw[fill=fred] (\x,0) rectangle (\x+1,1);
        \node[font=\color{dred}] at (\x+0.5,0.5) {$\boldsymbol{?}$};
        \node[blkid] at (\x+0.5,-0.4) {$B_{\x}$};
      }
      \foreach \x in {3,...,5,8,9,10} {
        \filldraw[fill=fgreen] (\x,0) rectangle (\x+1,1);
        \node[font=\color{dgreen}] at (\x+0.5,0.5) {$\boldsymbol{?}$};
        \node[blkid] at (\x+0.5,-0.4) {$B_{\x}$};
      }
      \foreach \x in {12,13} {
        \draw (\x,0) rectangle (\x+1,1);
        \node at (\x+0.5,0.5) {$\boldsymbol{?}$};
        \node[blkid] at (\x+0.5,-0.4) {$B_{\x}$};
      }
      \foreach \x in {14} {
        \draw (\x,0) rectangle (\x+0.6,1);
        \node at (\x+0.3,0.5) {$\boldsymbol{?}$};
        \node[blkid] at (\x+0.5,-0.4) {$B_{\x}$};
      }

      \foreach \x/\B in {16/0,17/1,18/2} {
        \draw (\x,0) rectangle (\x+1,1);
        \node at (\x+0.5,0.5) {$\boldsymbol{0}$};
        \node[blkid] at (\x+0.5,-0.4) {$B'_{\B}$};
      }
    \end{scope}

    \node[rotate=90, align=center] at (-1.2,4) {Working \\ Blocks};

    \begin{scope}[xshift=1cm,yshift=2cm]

      \node[dred] at (-0.6,0.6) {$p_1$};

      \filldraw[fill=fred] (0,0) rectangle (1,1);
      \filldraw[fill=fred] (1,0) rectangle (2,1);
      \node[font=\color{dred}] at (0.5,0.5) {$\boldsymbol{<}$};
      \node[font=\color{dred}] at (1.5,0.5) {$\boldsymbol{?}$};
      \draw[ptr] (1,1) -- +(0,4mm);
      \draw[move] (8mm,16mm) -- +(5mm,0);
      \node[blkid] at (1,-0.4) {$B_{11}$};

      \filldraw[fill=fred] (2.5,0) rectangle (3.5,1);
      \filldraw[fill=fred] (3.5,0) rectangle (4.5,1);
      \node[font=\color{dred}] at (3,0.5) {$\vphantom{<}$$\boldsymbol{=}$};
      \node[font=\color{dred}] at (4,0.5) {$\boldsymbol{?}$};
      \draw[ptr] (3.5,1) -- +(0,4mm);
      \draw[move] (33mm,16mm) -- +(5mm,0);
      \node[blkid] at (3.5,-0.4) {$B_{7}$};

      \filldraw[fill=fred] (5,0) rectangle (6.3,1);
      \filldraw[fill=fred] (6.3,0) rectangle (7,1);
      \node[font=\color{dred}] at (5.7,0.5) {$\boldsymbol{>}$};
      \node[font=\color{dred}] at (6.65,0.5) {$\boldsymbol{?}$};
      \draw[ptr] (6.3,1) -- +(0,4mm);
      \draw[move] (61mm,16mm) -- +(5mm,0);
      \node[blkid] at (6,-0.4) {$B_{6}$};

    \end{scope}
    \begin{scope}[xshift=1cm,yshift=45mm]

      \node[dgreen] at (-0.6,0.6) {$p_2$};

      \filldraw[fill=fgreen] (0,0) rectangle (0.8,1);
      \filldraw[fill=fgreen] (0.8,0) rectangle (2,1);
      \node[font=\color{dgreen}] at (0.4,0.5) {$\boldsymbol{<}$};
      \node[font=\color{dgreen}] at (1.4,0.5) {$\boldsymbol{?}$};
      \draw[ptr] (0.8,1) -- +(0,4mm);
      \draw[move] (6mm,16mm) -- +(5mm,0);
      \node[blkid] at (1,-0.4) {$B_{9}$};

      \filldraw[fill=fgreen] (2.5,0) rectangle (3.5,1);
      \filldraw[fill=fgreen] (3.5,0) rectangle (4.5,1);
      \node[font=\color{dgreen}] at (3,0.5) {$\vphantom{<}$$\boldsymbol{=}$};
      \node[font=\color{dgreen}] at (4,0.5) {$\boldsymbol{?}$};
      \draw[ptr] (3.5,1) -- +(0,4mm);
      \draw[move] (33mm,16mm) -- +(5mm,0);
      \node[blkid] at (3.5,-0.4) {$B_{8}$};

      \filldraw[fill=fgreen] (5,0) rectangle (7,1);
      \node[font=\color{dgreen}] at (6,0.5) {$\boldsymbol{>}$};
      \draw[ptr] (7,1) -- +(0,4mm);
      \draw[move] (68mm,16mm) -- +(5mm,0);
      \node[blkid] at (6,-0.4) {$B_{10}$};
      \coordinate (moveA) at (7,0.5);

    \end{scope}

    \node[rotate=90] at (11,4) {Output};

    \begin{scope}[xshift=12cm,yshift=17mm]

      \foreach \x/\c/\C/\B in {0/dred/fred/0,2/dgreen/fgreen/3} {
        \filldraw[fill=\C] (\x+0,0) rectangle (\x+2,1);
        \node[font=\color{\c}] at (\x+1,0.5) {$\boldsymbol{<}$};
        \node[blkid] at (\x+1,-0.3) {$B_{\B}$};
      }

      \node[right] at (4,0.5) {$\mathcal{S}_{<}$};

      \foreach \x/\c/\C/\B in {0/dgreen/fgreen/5} {
        \filldraw[fill=\C] (\x+0,1.75) rectangle (\x+2,2.75);
        \node[font=\color{\c}] at (\x+1,2.25) {$\vphantom{<}$$\boldsymbol{=}$};
        \node[blkid] at (\x+1,1.4) {$B_{\B}$};
      }

      \node[right] at (2,2.25) {$\mathcal{S}_{=}$};

      \foreach \x/\c/\C/\B in {0/dred/fred/1,2/dred/fred/2,4/dgreen/fgreen/4} {
        \filldraw[fill=\C] (\x+0,3.5) rectangle (\x+2,4.5);
        \node[font=\color{\c}] at (\x+1,4) {$\boldsymbol{>}$};
        \node[blkid] at (\x+1,3.2) {$B_{\B}$};
      }

      \node[right] at (6,4) {$\mathcal{S}_{>}$};

      \coordinate (moveB) at (0,4);

    \end{scope}

    \draw[>=stealth,->] (moveA) to[out=0,in=180] node[below,font=\tiny,rotate=-10] {done} (moveB);

  \end{scope}

\end{tikzpicture}
  \caption{Block schema of sequential and parallel multikey quicksort's ternary
    partitioning process}\label{fig:mkqs-blocks}
\end{figure}

Our preliminary experiments with sequential string sorting algorithms (see
Section~\ref{sec:exp-sequential}) showed a surprise winner: an enhanced variant
of multikey quicksort by Tommi Rantala \cite{rantala2007web} often outperformed
more complex algorithms.

This variant employs both caching of characters and uses a super-alphabet of $w
= 8$ characters, exactly as many as fit into a machine word. The string pointer
array is augmented with $w$ cache bytes for each string, and a string subset is
\emph{partitioned by a whole machine word} as splitter. Thereafter, the cached
characters are reused for the recursive subproblems $\Strings_<$ and
$\Strings_>$, and access to strings is needed only for sorting $\Strings_=$,
unless the pivot contains a zero-terminator.  In this section caching means
\emph{copying} of characters into another array, not necessarily into the
processor's cache. Key to the algorithm's good performance is the following
observation:

\begin{theorem}\label{thm:mkqs-access}
  Caching multikey quicksort needs at most $\lfloor \frac{D}{w} \rfloor + n$
  (random) accesses to string characters in total, where $w$ is the number of
  characters cached per access.
\end{theorem}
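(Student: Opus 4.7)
The plan is to charge random accesses to individual strings. Let $k_i$ denote the number of times a fresh block of $w$ characters is loaded into the cache slot of string $s_i$ over the entire run; then the total number of random accesses is $\sum_{i=1}^{n} k_i$. I would decompose each $k_i$ as one initial access at recursion depth $0$, plus $k_i - 1$ subsequent accesses. The initial accesses contribute exactly $n$. The remaining accesses occur precisely when $s_i$ descends into an $\mathcal{S}_=$ bucket: after a partitioning step at depth $\ell$, the cached characters remain valid for the $\mathcal{S}_<$ and $\mathcal{S}_>$ recursions (which stay at depth $\ell$), whereas the $\mathcal{S}_=$ recursion advances to depth $\ell + w$ and forces the cache to be refilled.

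The heart of the argument is to bound $k_i$ in terms of the distinguishing prefix length $d_i$ of $s_i$. A recursion on $\mathcal{S}_=$ is only started when $\lvert \mathcal{S}_= \rvert \geq 2$, since singleton subsets need no further sorting. Consequently, whenever the $j$-th access to $s_i$ occurs with $j \geq 2$, the string $s_i$ sits at depth $(j-1)w$ in an $\mathcal{S}_=$ of size at least two, so there is another string $s' \in \Strings$ that agrees with $s_i$ on positions $1, \ldots, (j-1)w$. This forces $d_i \geq (j-1)w + 1$. Applying this to the maximal index $j = k_i$ yields $(k_i - 1)w \leq d_i - 1$, and therefore $k_i - 1 \leq \lfloor (d_i - 1)/w \rfloor \leq \lfloor d_i/w \rfloor$.

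Summing over $i = 1, \ldots, n$ and using the elementary bound $\sum_i \lfloor a_i \rfloor \leq \lfloor \sum_i a_i \rfloor$, I obtain $\sum_i (k_i - 1) \leq \lfloor D/w \rfloor$, since $\sum_i d_i = D$. Adding the $n$ initial accesses gives the claimed total of at most $\lfloor D/w \rfloor + n$ random string accesses. The main obstacle is the depth bookkeeping in the key step: one has to verify that pivot selection never incurs a fresh access (the pivot is drawn from the current subset, whose cache was already populated by the enclosing call), and that the $j$-th access for $s_i$ indeed occurs at depth exactly $(j-1)w$, which rests on the observation that only $\mathcal{S}_=$ descents advance the working depth while $\mathcal{S}_<$ and $\mathcal{S}_>$ recursions reuse the existing cache at the same depth. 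Duplicate strings can be accommodated by interpreting $d_i$ as the full string length (including the terminator), so the inequality $d_i \geq (j-1)w + 1$ still holds whenever the algorithm actually recurses.
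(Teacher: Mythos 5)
Your proof is correct and rests on the same core idea as the paper's (much terser) argument: the fetched $w$-character blocks of a string are disjoint, all but one of them lies entirely inside that string's distinguishing prefix, and summing over strings gives $\lfloor D/w\rfloor$ plus one access per string. The only cosmetic difference is which access you charge to the $+n$ term (you charge the initial fetch, the paper charges the final ``overflow'' fetch beyond the distinguishing prefix); your per-string bound $k_i \leq 1 + \lfloor (d_i-1)/w\rfloor$ and the summation step are both sound.
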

\begin{proof}
  Per string access $w$ characters are loaded into the cache, and these $w$
  characters are never fetched again. We can thus account for all accesses to
  distinguishing characters using $\lfloor \frac{D}{w} \rfloor$, since the
  characters are fetched in blocks of size $w$. Beyond these, at most one access
  per string can occur, which accounts for fetching $w$ characters of which not
  all are need for sorting.
\end{proof}

In light of this variant's good performance, we designed a parallelized
version. We use three sub-algorithms: \emph{fully parallel caching multikey
  quicksort}, the original sequential caching variant (with explicit recursion
stack) for medium and small subproblems, and insertion sort as base case. For
the fully parallel sub-algorithm, we generalized a block-wise processing
technique from (two-way) parallel atomic quicksort \cite{tsigas2003simple} to
three-way partitioning.

The input array is viewed as a sequence of blocks containing $B$ string pointers
together with their $w$ cache characters (see
Figure~\ref{fig:mkqs-blocks}). Each thread holds exactly three blocks and
performs ternary partitioning by a globally selected pivot. When all items in a
block are classified as $<$, $=$ or $>$, then the block is added to the
corresponding output set $\Strings_<$, $\Strings_=$, or $\Strings_>$. This
continues as long as unpartitioned blocks are available. If no more input blocks
are available, an extra empty memory block is allocated and a second phase
starts. The second partitioning phase ends with fully classified blocks, which
might be only partially filled. Per fully parallel partitioning step there can
be at most $3 p$ partially filled blocks. The output sets $\Strings_<$,
$\Strings_=$, and $\Strings_>$ are processed recursively with threads divided as
evenly among them as possible. The cached characters are updated only for the
$\Strings_=$ set.

In our implementation we use atomic compare-and-swap operations for block-wise
processing of the initial string pointer array and Intel TBB's lock-free queue
for sets of blocks, both as output sets and input sets for recursive steps. When
a partition reaches the threshold for sequential processing, then a continuous
array of string pointers plus cache characters is allocated and the block set is
copied into it. On this continuous array, the usual ternary partitioning scheme
of multikey quicksort is applied sequentially. Like in the other parallelized
algorithms, we use dynamic load balancing and free the largest level when
re-balancing is required. We empirically determined $B = 128\,\text{Ki}$ as a
good block size.

\subsection{Burstsort}\label{sec:para-burstsort}

Burstsort is one of the fastest string sorting algorithms and cache-efficient
for many inputs, but it looks difficult to parallelize it. Keeping a common
burst trie would require prohibitively many synchronized operations, while
building independent burst tries on each PE would lead to the question how to
merge multiple tries of different structure.  This problem of merging tries is
related to parallel $K$-way LCP-merge, and future work may find a way to combine
these approaches.

\section{Experimental Results}\label{sec:experiments}

We implemented parallel versions of S$^5$, $K$-way LCP-merge, multikey quicksort
and radix sort in C++ and compare them with the few parallel string sorting
implementations we could find online in Section~\ref{sec:exp-parallel}.  We also
integrated many sequential implementations into our test framework, and discuss
their performance in Section~\ref{sec:exp-sequential}. Our implementations, the
test framework and most input sets are available from
\url{http://tbingmann.de/2013/parallel-string-sorting}.

\subsection{Experimental Setup}\label{sec:exp-setup}

We tested our implementations and those by other authors on five different
platforms.  All platforms run Linux and their main properties are listed in
Table~\ref{tab:hardware}.  We compiled all programs using gcc~4.6.3 with
optimizations \texttt{-O3 -march=native}.  The five platforms were chosen to
encompass a wide variety of multi-core systems, which exhibit different
characteristics in their memory system and also cover today's most popular
hardware.  By experimenting on a large number of systems (and inputs), we
demonstrate how robust our implementations and algorithm designs are.

The test framework sets up a separate environment for each run.  To isolate heap
fragmentation, it was very important to fork() a child process for each run.
The string data is loaded before the fork(), allocating exactly the matching
amount of RAM, and shared read-only with the child processes. No precaution to
lock the program's memory into RAM was taken (as opposed to previous experiments
reported in~\cite{bingmann2013parallel}). Turbo-mode was disabled on IntelE5.

Before an algorithm is called, the string pointer array is generated inside the
child process by scanning the string data for zero characters, thus flushing
caches and TLB entries. Time measurement is done with clock\_gettime() and
encompasses only the sorting algorithm. Because many algorithms have a deep
recursion stack for our large inputs, we increased the stack size limit to
64\,MiB. For non-NUMA experiments, we took no special precautions of pinning
threads to specific cores or nodes, and used the default Linux task scheduling
system as is. Memory for NUMA-unaware algorithms was interleaved across all
nodes by setting the default allocation policy.

For our experiments with NUMA-aware string sorting, the characters array is
segmented equally onto the NUMA memory banks before running an algorithm.  The
algorithm then pins its threads to the appropriate node, enabling node-local
memory access. Additional allocations are also taken preferably from the local
memory node.

The output of each string sorting algorithm was verified by first checking that
the resulting pointer list is a permutation of the input set, and then checking
that strings are in non-descending order. The input was shared read-only with
the algorithm's process and thus cannot have been modified.

Methodologically we have to discuss, whether measuring only the algorithm's run
time is a good decision. The issue is that deallocation and defragmentation in
both heap allocators and kernel page tables is done lazily. This was most
notable when running two algorithms consecutively. The fork() process isolation
excludes both variables from the experimental results, however, for use in a
real program context these costs cannot be ignored. We currently do not know
how to invoke the lazy cleanup procedures to regenerate a pristine memory
environment. These issues must be discussed in greater detail in future work for
sound results with big data in RAM.  We briefly considered HugePages, but these
did not yield a performance boost. This is probably due to random accesses being
the main time cost of string sorting, while the number of TLB entries is not a
bottleneck.

\begin{table}\centering\small\def\tabcolsep{3.5pt}
\caption{Hard- and software characteristics of experimental platforms}\label{tab:hardware}
\begin{tabular}{l|l|r|r||r|r|r|r}
Name    & Processor          & Clock & Sockets $\times$                     & Cache: L1      & L2              & L3            & RAM   \\
        &                    & [GHz] & Cores $\times$ HT                    & [KiB]          & [KiB]           & [MiB]         & [GiB] \\\hline
IntelE5 & Intel Xeon E5-4640 & 2.4   & $4 \times 8 \times 2$                & $32 \times 32$ & $32 \times 256$ & $4 \times 20$ & 512   \\
AMD48   & AMD Opteron 6168   & 1.9   & $4 \times 12 \:\phantom{\:\times 1}$ & $48 \times 64$ & $48 \times 512$ & $8 \times 6$  & 256   \\
AMD16   & AMD Opteron 8350   & 2.0   & $4 \times 4 \:\phantom{\:\times 1}$  & $16 \times 64$ & $16 \times 512$ & $4 \times 2$  & 64    \\
Inteli7 & Intel Core i7 920  & 2.67  & $1 \times 4 \times 2$                & $4 \times 32$  & $4 \times 256$  & $1 \times 8$  & 12    \\
IntelX5 & Intel Xeon X5355   & 2.67  & $2 \times 4 \times 1$                & $8 \times 32$  & $4 \times 4096$ &               & 16    \\ 
\end{tabular}

\bigskip
\def\tabcolsep{2.0pt}
\begin{tabular}{l|l|l|c|l|l}
Name    & Codename     & Memory               & NUMA  & Interconnect            & Linux/Kernel Version \\
        &              & Channels             & Nodes &                         &                      \\ \hline
IntelE5 & Sandy Bridge & 4 $\times$ DDR3-1600 & 4     & 2 $\times$ 8.0 GT/s QPI & Ubuntu 12.04/3.2.0   \\
AMD48   & Magny-Cours  & 4 $\times$ DDR3-667  & 8     & 4 $\times$ 3.2 GHz HT   & Ubuntu 12.04/3.2.0   \\
AMD16   & Barcelona    & 2 $\times$ DDR2-533  & 4     & 3 $\times$ 1.0 GHz HT   & Ubuntu 10.04/2.6.32  \\
Inteli7 & Bloomfield   & 3 $\times$ DDR3-800  &       & 1 $\times$ 4.8 GT/s QPI & openSUSE 11.3/2.6.34 \\
IntelX5 & Clovertown   & 2 $\times$ DDR2-667  &       & 1 $\times$ 1.3 GHz FSB  & Ubuntu 12.04/3.2.0   \\
\end{tabular}
\end{table}

\subsection{Inputs}\label{sec:exp-inputs}

We selected the following datasets, all with 8-bit characters. Most important
characteristics of these instances are shown in Table~\ref{tab:data}.

\textbf{URLs} contains all URLs found on a set of web pages which were crawled
breadth-first from the author's institute website. They include the protocol
name.

\textbf{Random} (from \cite{sinha2004cache-conscious}) are strings of length
$[0 \mathop{:} 20)$ over the ASCII alphabet $[33 \mathop{:} 127)$, with both length and characters
chosen uniformly random.

\textbf{GOV2} is a TREC test collection consisting of 25 million HTML pages, PDF
and other documents retrieved from websites under the .gov domain. We
consider the whole corpus for line-based string sorting, concatenated by
document id.

\textbf{Wikipedia} is an XML dump of the most recent version of all pages in the
English Wikipedia, which was obtained from \url{http://dumps.wikimedia.org/};
our dump is dated \texttt{enwiki-20120601}.  Since the XML data is not
line-based, we perform \emph{suffix sorting} on this input.

We also include the three largest inputs Ranjan \textbf{Sinha}
\cite{sinha2004cache-conscious} tested burstsort on: a set of \textbf{URLs}
excluding the protocol name, a sequence of genomic strings of length 9 over a
\textbf{DNA} alphabet, and a list of non-duplicate English words called
\textbf{NoDup}. The ``largest'' among these is NoDup with only 382\,MiB, which
is why we consider these inputs more as reference datasets than as our target.

The inputs were chosen to represent both real-world datasets, and to exhibit
extreme results when sorting. Random has a very low average LCP, while URLs have
a high average LCP. GOV2 is a general text file with all possible ASCII
characters, and Sinha's DNA has a small alphabet size. By taking suffixes of
Wikipedia we have a very large sorting problem instance, which needs little
memory for characters.

Our inputs are very large, one infinite, and most of our platforms did not have
enough RAM to process them. For each platform, we determined a large prefix
$[0 \mathop{:} n)$, which can be processed with the available RAM and time, and leave
sorting of the remainder to future work.

\begin{table}[tb]\centering\normalsize
\caption{Characteristics of the selected input instances.}\label{tab:data}
\def\tabcolsep{6pt}
\begin{tabular}{l|rrrrrr}
Name        & $n$      & $N$                       & $\frac{D}{N}$ ($D$) & $\frac{L}{n}$ & $|\Sigma|$ & avg.\ $|s|$             \\ \hline
URLs        & 1.11\,G  & 70.7\,Gi                  & 93.5\,\%            & 62.0          & 84         & 68.4                    \\
Random      & $\infty$ & $\infty$                  & $-$                 & $-$           & 94         & 10.5                    \\
GOV2        & 11.3\,G  & 425\,Gi                   & 84.7\,\%            & 32.0          & 255        & 40.3                    \\
Wikipedia   & 83.3\,G  & $\frac{1}{2} n (n\!+\!1)$ & (79.56\,T)          & 954.7         & 213        & $\frac{1}{2} (n\!+\!1)$ \\
Sinha URLs  & 10\,M    & 304\,Mi                   & 97.5\,\%            & 29.4          & 114        & 31.9                    \\
Sinha DNA   & 31.6\,M  & 302\,Mi                   & 100\,\%             & 9.0           & 4          & 10.0                    \\
Sinha NoDup & 31.6\,M  & 382\,Mi                   & 73.4\,\%            & 7.7           & 62         & 12.7                    \\
\end{tabular}
\end{table}

\subsection{Performance of Parallel Algorithms}\label{sec:exp-parallel}

In this section we report on our experiments on the platforms shown in
Table~\ref{tab:hardware}, which contains a wide variety of multi-core machines
of different age. The results plotted in
Figures~\ref{fig:more-IntelE5}--\ref{fig:more-IntelX5} show the speed up of each
parallel algorithm over the best sequential one, for increasing thread count.
Tables \ref{tab:absrun-IntelE5}--\ref{tab:absrun-IntelX5b} show absolute running
times of our experiments, with the fastest algorithm's time highlighted in bold
text.

Overall, our parallel string sorting implementations yield high speedups, which
are generally much higher than those of all previously existing parallel string
sorters. Each individual parallel algorithm's speedup depends highly on hardware
characteristics like processor speed, RAM and cache performance\footnote{See
  \url{http://tbingmann.de/2013/pmbw/} for parallel memory bandwidth
  experiments}, the interconnection between sockets, and the input's
characteristics. In general, the speedup of string sorting for high thread
counts is bounded by memory bandwidth, not processing power.  On both non-NUMA
platforms (Figures~\ref{fig:more-Inteli7}, \ref{fig:more-IntelX5}), our
implementations of pS$^5$ are the string sorting algorithm with highest
speedups, except for Random and Sinha's NoDup inputs.  On NUMA many-core
platforms, the picture is more complex and results mostly depend on how well the
inner loops and memory transfers are optimized on each particular system.

The parallel experiments cover all algorithms we describe in this paper:
pS$^5$-Unroll is a variant of pS$^5$ from Section~\ref{sec:s5}, which
interleaves three unrolled descents of the classification tree, while
pS$^5$-Equal unrolls only a single descent, but tests equality at each splitter
node. In the NUMA-aware variant called ``pS$^5$-Unroll + pLCP-Merge'' we first
run pS$^5$-Unroll independently on each NUMA node for separate parts of the
input, and then merge the presorted parts using our parallel $K$-way LCP-merge
algorithm (Section~\ref{sec:para-mergesort}). From the additional parallel
algorithms in Section~\ref{sec:more-parasort}, we draw our parallel multikey
quicksort (pMKQS) implementations, and radix sorts with 8-bit and 16-bit fully
parallel steps.  Furthermore, we included the parallel radix sort implemented by
Ta\-kuya Akiba~\cite{akiba2011radixsort} in the experiments on all platforms.

For the tests on Inteli7 and IntelX5, we added three more parallel
implementations: pMKQS-SIMD is a multikey quicksort implementation from
Rantala's library, which uses SIMD instructions to perform vectorized
classification against a single pivot. We improved the code to use OpenMP tasks
for recursive sorting steps. The second implementation is a parallel 2-way
LCP-mergesort also by Rantala, which we also augmented with OpenMP
tasks. However, only recursive merges are run in parallel, the largest merge is
performed sequentially. The implementation uses insertion sort for $|\Strings| <
32$, all other sorting is done via merging. N.\ Shamsundar's parallel
LCP-mergesort is the third additional implementation, but it also uses only
2-way merges.  As seen in Figures~\ref{fig:more-Inteli7}--\ref{fig:more-IntelX5},
only Akiba's radix sort scales fairly well, which is why we omitted the other
three algorithms on the platforms with more than eight cores.

\emph{Inteli7} (Figure~\ref{fig:more-Inteli7}, Table~\ref{tab:absrun-Inteli7}--\ref{tab:absrun-Inteli7b}) is a consumer-grade, single
socket machine with fast RAM and cache hierarchy. \emph{IntelX5}
(Figure~\ref{fig:more-IntelX5}, Table~\ref{tab:absrun-IntelX5}--\ref{tab:absrun-IntelX5b}) is our oldest architecture, and shows the
slowest absolute speedups. Both are not NUMA architectures, which is why we did
not run our NUMA-aware algorithm on them. They are more classic architectures,
and exhibit most of the effects we targeted in our algorithms to gain good
speedups. Our pS$^5$ variants are fastest on all inputs, except very random ones
(Random and NoDup), where radix sorts are slightly faster on Inteli7.
Remarkably, on IntelX5 the speed gain of radix sort is not visible.  We suspect
that some processors can optimize the inner loops of radix sort (counting,
prefix sums and data redistributions with few input/output streams
\cite{karkkainen2009engineering}) better than others.  Our pMKQS also shows good
overall speedups, but is never particularly fast.  This is due to the high
memory bandwidth caching multikey quicksort requires, as it reads and rereads an
array to just partition by one pivot.

For all test instances, except URLs, the fully parallel sub-algorithm of pS$^5$
was run only 1--4 times. Thereafter, the input was split up into enough subsets,
and most of the additional speedup is gained by load-balancing the sequential
sub-algorithms well. The pS$^5$-Equal variant handles URL instances better, as
many equal matches occur here. However, for all other inputs, pS$^5$-Unroll with
interleaved tree descents fares better, even though it has higher theoretical
running time.

Comparing our radix sorts with Akiba's we already see the implementation's main
problems: it does not parallelize recursive sorting steps (only the top-level is
parallelized) and only performs simple load balancing. This can be seen most
pronounced on URLs and GOV2.  All three additional implementations, pMKQS-SIMD,
pMergesort-2way by Rantala, and the same by Shamsundar do not show any good
speedup, partly because they are already pretty slow sequentially, and partly
because they are not fully parallelized.

On the Inteli7 machine, which has four real cores and four Hyper-Threading
cores, pS$^5$ achieves speedups $\geq 3.2$ on all inputs, except Random where it
gains only $2.5$. This is remarkable, as the machine has only three memory
channels, and a single core can fully utilize two of them. Thus in pS$^5$ a lot
of computation work is parallelized. On IntelX5, which has eight real cores,
pS$^5$ achieves speedups $\geq 3$ on all inputs. We attribute this to the early
dual-socket architecture, on which many other parallel implementations also do
not scale well.

\emph{IntelE5} (Figure~\ref{fig:more-IntelE5}, Table~\ref{tab:absrun-IntelE5}) is our newest machine with 32
real cores across four sockets with one NUMA node each. It contains one of
Intel's most recent many-core processors. \emph{AMD48}
(Figure~\ref{fig:more-AMD48}, Table~\ref{tab:absrun-AMD48}) is a somewhat older AMD many-core machine with
high core count, but relatively slow RAM and a slower interconnect.  Compared to
the previous results on Inteli7, we notice that parallel multikey quicksort (pMKQS) is very
fast, and achieves slightly higher speedups than pS$^5$ on most inputs on
IntelE5 and significantly higher ones on AMD48.  This effect is clearly due to
pS$^5$ ignoring the NUMA architecture and thus incurring a relatively large
number of expensive inter-node random string accesses. We analyzed the number of
string access of pMKQS in Theorem~\ref{thm:mkqs-access}, after which the
characters are saved and accessed in a scanning pattern. This scanning
apparently works well on the NUMA machines, as it is very cache-efficient, can
be easily predicted by the processor's memory prefetcher, and a costly
inter-node transfered cache line contains saved characters of eight strings.

These expected results were the reason to focus on NUMA-aware string sorting
algorithms, and to develop parallel $K$-way LCP-merge for top-level merging of
presorted sequences. In our experiments we ran ``pS$^5$-Unroll + pLCP-Merge''
only when there is at least one thread per NUMA node. We tried to rebalance
threads to other NUMA nodes once work on a node is done, but this did not work
well, since the additional inter-node synchronization was too costly. We thus
have to leave the question of how to balance sorting work on NUMA systems for
highly skewed inputs open to future research.  Plain LCP-merge also contains
costly inter-node random string accesses in case 1 of \LCPCompare. As predicted
in Section~\ref{sec:merge-details}, we saw a huge speed improvement due to
\emph{caching} of just the distinguishing character $\hat{c}$, and don't
consider the non-caching variant in our results.

On IntelE5, with four NUMA nodes, pS$^5$-Unroll + pLCP-Merge reaches the highest
speedups on URLs, GOV2 inputs and Wikipedia suffixes. On the AMD48 machine with
eight NUMA nodes, random access is even more costly and the inter-node
connections are easily congested, which is why pMKQS fairs better against our
NUMA-aware sorting.  In future (possibly the next revision of this
paper), experiments with caching more than just one character may lead to larger
speedups on these NUMA systems. Remarkably, radix sort is still very fast on
both NUMA machines for random inputs.

The lower three plots in Figure~\ref{fig:more-IntelE5} and \ref{fig:more-AMD48}
show that on these large many-core platforms, parallel sorting becomes less
efficient for small inputs (around 300\,MiB). This is expected due to the high
cost of synchronization, but our parallel algorithms still fare well.

\emph{AMD16} (Figure~\ref{fig:more-AMD16}, Table~\ref{tab:absrun-AMD16}) is an earlier NUMA architecture with
four NUMA nodes, and the slowest RAM speed and interconnect in our
experiment. However, on this machine random access, memory bandwidth and
processing power (in cache) seems to be more balanced for pS$^5$ than on the
newer NUMA machines.

We included the absolute running times of all our speedup experiments in
Tables~\ref{tab:absrun-IntelE5}--\ref{tab:absrun-IntelX5b} for reference and to
show that our parallel implementations scale well both for very large instances
on many-core platforms and also for small inputs on machines with fewer cores.
For a final version of this article, we may need to remove the data
table. In this case, we will reference a technical report instead.

\subsection{Performance of Sequential Algorithms}\label{sec:exp-sequential}

We collected many sequential string sorting algorithms in our test framework. We
believe it to contain virtually every string sorting implementation publicly
available.

The algorithm library by Tommi Rantala \cite{rantala2007web} contains 37~versions
of radix sort (in-place, out-of-place, and one-pass with various dynamic memory
allocation schemes), 26~variants of multikey quicksort (with caching,
block-based, different dynamic memory allocation and SIMD instructions),
10~different funnelsorts, 38~implementations of burstsort (again with different
dynamic memory managements), and 29~mergesorts (with losertree and LCP caching
variants). In total these are 140~original implementation variants, all of high
quality.

The other main source of string sorting implementations are the publications of
Ranjan Sinha. We included the original burstsort implementations (one with
dynamically growing arrays and one with linked lists), and 9~versions of
copy-burstsort. The original copy-burstsort code was written for 32-bit
machines, and we modified it to work with 64-bit pointers.

We also incorporated the implementations of CRadix sort and LCP-Merge\-sort by
Waihong Ng, and the original multikey quicksort code by Bentley and Sedgewick.

\begin{table}[tb]\centering\normalsize
\caption{Description of selected sequential string sorting algorithms}\label{tab:seqalgo}
\begin{tabularx}{\linewidth}{l|X}
Name             & Description and Author                                                                                                                                        \\ \hline
std::sort        & \texttt{gcc}'s standard atomic introsort with full string comparisons.                                                                                        \\
mkqs             & Original multikey quicksort by Bentley and Sedgewick \cite{bentley1997fast}.                                                                                  \\
mkqs\_cache8     & Modified multikey quicksort with caching of eight characters by Tommi Rantala \cite{rantala2007web}, slightly improved.                                       \\
radix8\_CI       & 8-bit in-place radix sort by Tommi Rantala \cite{karkkainen2009engineering}.                                                                                  \\
radix16\_CI      & Adaptive 16-/8-bit in-place radix sort by Tommi Rantala \cite{karkkainen2009engineering}.                                                                     \\
radixR\_CE7      & Adaptive 16-/8-bit out-of-place radix sort by Tommi Rantala~\cite{karkkainen2009engineering}, version CE7 (preallocated swap array, unrolling, sorted-check). \\
CRadix           & Cache efficient radix sort by Waihong Ng \cite{ng2007cache}, unmodified.                                                                                      \\
LCPMergesort     & LCP-mergesort by Waihong Ng \cite{ng2008merging}, unmodified.                                                                                                 \\
Seq-S$^5$-Unroll & Sequential super scalar string sample sort with interleaved loop over strings, unrolled tree traversal and radix sort as base sorter.                         \\
Seq-S$^5$-Equal  & Sequential super scalar string sample sort with equality check, unrolled tree traversal and radix sort as base sorter.                                        \\
burstsortA       & Burstsort using dynamic arrays by Ranjan Sinha \cite{sinha2004cache-conscious}, from \cite{rantala2007web}.                                                   \\
fbC-burstsort    & Copy-Burstsort with ``free bursts'' by Ranjan Sinha \cite{sinha2007cache-efficient}, heavily repaired and modified to work with 64-bit pointers.              \\
sCPL-burstsort   & Copy-Burstsort with sampling, pointers and only limited copying depth by Ranjan Sinha \cite{sinha2007cache-efficient}, also heavily repaired.                 \\
\end{tabularx}
\end{table}

Of the 203 different sequential string sorting variants, we selected the
thirteen implementations listed in Table~\ref{tab:seqalgo} to represent both the
fastest ones in a preliminary test and each of the basic algorithms from
Section~\ref{sec:basic-sequential}. The thirteen algorithms were run on all our
five test platforms on small portions of the test instances described in
Section~\ref{sec:experiments}. Tables \ref{tab:seqalgo-results1} and
\ref{tab:seqalgo-results2} show the results, with the fastest algorithm's time
highlighted with bold text.

Cells in the tables without value indicate a program error, out-of-memory
exceptions or extremely long runtime. This was always the case for the
copy-burstsort variants on the GOV2 and Wikipedia inputs, because they perform
excessive caching of characters. On Inteli7, some implementations required more
memory than the available 12\,GiB to sort the 4\,GiB prefixes of Random and URLs.

Over all run instances and platforms, multikey quicksort with caching of eight
characters was fastest on 18~pairs, winning the most tests. It was fastest on
all platforms for both URL list and GOV2 prefixes, except URL on IntelX5, and on
all large instances on AMD48 and AMD16. However, for the NoDup input, short
strings with large alphabet, the highly tuned radix sort radixR\_CE7
consistently outperformed mkqs\_cache8 on all platforms by a small margin. The
copy-burstsort variant fbC\_burstsort was most efficient on all platforms for
DNA, which are short strings with small alphabet. For Random strings and
Wikipedia suffixes, mkqs\_cache8 or radixR\_CE7 was fastest, depending on the
platforms memory bandwidth and sequential processing speed.  Our own
\emph{sequential} implementations of S$^5$ were never the fastest, but they
consistently fall in the middle field, without any outliers.  This is expected,
since S$^5$ is mainly designed to be used as an efficient top-level parallel
algorithm, and to be conservative with memory bandwidth, since this is the
limiting factor for data-intensive multi-core applications.

We also measured the peak memory usage of the sequential implementations using a
heap and stack profiling
tool\footnote{\url{http://tbingmann.de/2013/malloc_count/}, by one of the
  authors.} for the selected sequential test instances. The bottom of
Table~\ref{tab:seqalgo-results1} shows the results in MiB, excluding the string
data array and the string pointer array (we only have 64-bit systems, so
pointers are eight bytes). We must note that the profiler considers
\emph{allocated virtual memory}, which may not be identical to the amount of
physical memory actually used. From the table we plainly see, that the more
\emph{caching} an implementation does, the higher its peak memory
allocation. However, the memory usage of fbC\_burstsort is extreme, even if one
considers that the implementation can deallocate and recreate the string data
from the burst trie. The lower memory usage of fbC\_burstsort for Random is due
to the high percentage of characters stored implicitly in the trie structure.
The sCPL\_burstsort and burstsortA variants bring the memory requirement down
somewhat, but they are still high. Some radixsort variants and, most notable,
mkqs\_cache8 are also not particularly memory conservative, again due to
caching. Our sequential S$^5$ implementation fares well in this comparison
because it does no caching and permutes the string pointers in-place (Note that
radixsort is used for small string subsets in sequential S$^5$. This is due to
the development history: we finished sequential S$^5$ before focusing on caching
multikey quicksort). For sorting with little extra memory, plain multikey
quicksort is still a good choice.



\begin{table}[p]\centering\small
\caption{Run time of sequential algorithms on IntelE5 and AMD48 in seconds, and peak memory usage of algorithms on IntelE5. See Table~\ref{tab:seqalgo} for a short description of each.}\label{tab:seqalgo-results1}
\def\tabcolsep{6pt}
\begin{tabular}{l|rrrr*{3}{r}|}
                    & \multicolumn{4}{c|}{Our Datasets} & \multicolumn{3}{c|}{Sinha's} \\
                    & URLs     & Random   & GOV2     & \multicolumn{1}{r|}{Wikipedia} & URLs     & DNA     & NoDup    \\ \hline
$n$                 & 66\,M    & 409\,M   & 80.2\,M  & 256\,Mi   & 10\,M    & 31.5\,M & 31.6\,M  \\
$N$                 & 4\,Gi    & 4\,Gi    & 4\,Gi    & 32\,Pi    & 304\,Mi  & 302\,Mi & 382\,Mi  \\
$D / N$ ($D$)       & 92.7\,\% & 43.0\,\% & 69.7\,\% & (13.6\,G) & 97.5\,\% & 100\,\% & 73.4\,\% \\
$L / n$             & 57.9     & 3.3      & 34.1     & 33.0      & 29.4     & 9.0     & 7.7      \\ \hline
                    & \multicolumn{7}{c|}{IntelE5} \\ \cline{2-8}
       std::sort &      122 &      422 &      153 &      287 &     11.6 &     26.9 &     25.4 \\
            mkqs &     37.1 &      228 &     56.7 &      129 &     5.67 &     11.0 &     10.9 \\
    mkqs\_cache8 & \bf 16.6 &     67.1 & \bf 25.7 &     79.5 & \bf 2.03 &     4.62 &     6.02 \\
      radix8\_CI &     48.4 &     64.3 &     54.6 &     90.1 &     6.12 &     6.79 &     6.29 \\
     radixR\_CE7 &     37.5 & \bf 58.6 &     44.6 & \bf 72.4 &     4.77 &     4.66 & \bf 4.73 \\
Seq-S$^5$-Unroll &     32.4 &      142 &     39.9 &      103 &     4.90 &     7.05 &     7.78 \\
 Seq-S$^5$-Equal &     32.8 &      169 &     45.6 &      120 &     5.11 &     7.68 &     8.38 \\
          CRadix &     54.1 &     65.1 &     61.6 &      113 &     6.82 &     10.2 &     8.63 \\
    LCPMergesort &     25.8 &      316 &     53.9 &      167 &     5.00 &     14.6 &     17.0 \\
      burstsortA &     29.5 &      131 &     43.3 &      120 &     5.64 &     8.48 &     8.46 \\
  fbC\_burstsort &     60.9 &     69.8 &          &          &     11.0 & \bf 3.81 &     15.4 \\
 sCPL\_burstsort &     45.0 &      122 &          &          &     11.1 &     14.3 &     24.8 \\ \hline
                 & \multicolumn{7}{c|}{AMD48} \\ \cline{2-8}
       std::sort &      243 &   1\,071 &      197 &     494 &     21.7 &     55.7 &     44.2 \\
            mkqs &     90.7 &      511 &     78.2 &     226 &     11.0 &     20.8 &     19.8 \\
    mkqs\_cache8 & \bf 37.9 & \bf 96.9 & \bf 31.7 & \bf 114 & \bf 3.44 &     7.08 &     8.75 \\
      radix8\_CI &     83.1 &      127 &     71.2 &     138 &     9.72 &     10.9 &     9.46 \\
     radixR\_CE7 &     73.6 &      125 &     63.6 &     120 &     8.34 &     8.27 & \bf 7.70 \\
Seq-S$^5$-Unroll &     59.4 &      283 &     55.6 &     167 &     7.93 &     11.2 &     11.7 \\
 Seq-S$^5$-Equal &     56.5 &      292 &     57.7 &     180 &     8.06 &     11.5 &     12.2 \\
          CRadix &     98.2 &      115 &     68.8 &     147 &     8.11 &     12.6 &     11.2 \\
    LCPMergesort &     48.2 &      597 &     68.8 &     232 &     7.35 &     20.7 &     24.4 \\
      burstsortA &     46.0 &      214 &     53.0 &     193 &     8.49 &     13.3 &     13.1 \\
  fbC\_burstsort &     85.8 &      115 &          &         &     17.7 & \bf 5.92 &     22.1 \\
 sCPL\_burstsort &     73.1 &      266 &          &         &     20.1 &     24.6 &     37.9 \\ \hline
                 & \multicolumn{7}{p{76ex}|}{\centering{}Memory usage of sequential algorithms (on IntelE5) in MiB, excluding input and string pointer array} \\ \cline{2-8}
       std::sort &   0.002 &  0.002 &  0.002 &  0.003 &  0.002 &  0.002 &  0.002 \\
            mkqs &   0.134 &  0.003 &   1.66 &  0.141 &  0.015 &  0.003 &  0.004 \\
    mkqs\_cache8 &  1\,002 & 6\,242 & 1\,225 & 4\,096 &    153 &    483 &    483 \\
      radix8\_CI &    62.7 &    390 &   77.2 &    256 &   9.55 &   30.2 &   30.2 \\
     radixR\_CE7 &     669 & 3\,902 &    786 & 2\,567 &    111 &    303 &    303 \\
Seq-S$^5$-Unroll &     129 &    781 &    155 &    513 &   20.3 &   60.8 &   60.9 \\
 Seq-S$^5$-Equal &     131 &    781 &    156 &    513 &   20.8 &   60.8 &   61.0 \\
          CRadix &     752 & 4\,681 &    919 & 3\,072 &    114 &    362 &    362 \\
    LCPMergesort &  1\,002 & 6\,242 & 1\,225 & 4\,096 &    153 &    483 &    483 \\
      burstsortA &  1\,466 & 7\,384 & 1\,437 & 5\,809 &    200 &    531 &    792 \\
  fbC\_burstsort & 31\,962 & 6\,200 &        &        & 2\,875 &    436 & 4\,182 \\
 sCPL\_burstsort &  9\,971 & 7\,262 &        &        & 1\,578 & 1\,697 & 5\,830 \\ \hline
\end{tabular}
\end{table}

\begin{table}[p]\centering\small
\caption{Run time of sequential algorithms on AMD16, Inteli7, and IntelX5 in seconds. See Table~\ref{tab:seqalgo} for a short description of each.}\label{tab:seqalgo-results2}
\def\tabcolsep{6pt}
\begin{tabular}{l|rrrr rrr|}
                    & \multicolumn{4}{c|}{Our Datasets} & \multicolumn{3}{c|}{Sinha's} \\
                    & URLs     & Random   & GOV2     & \multicolumn{1}{r|}{Wikipedia} & URLs     & DNA     & NoDup    \\ \hline
$n$                 & 66\,M    & 409\,M   & 80.2\,M  & 256\,Mi   & 10\,M        & 31.5\,M & 31.6\,M  \\
$N$                 & 4\,Gi    & 4\,Gi    & 4\,Gi    & 32\,Pi    & 304\,Mi      & 302\,Mi & 382\,Mi  \\
$D / N$ ($D$)       & 92.6\,\% & 43.0\,\% & 69.7\,\% & (13.6\,G) & 97.5\,\%     & 100\,\% & 73.4\,\% \\
$L / n$             & 57.9     & 3.3      & 34.1     & 33.0      & 29.4         & 9.0     & 7.7      \\ \hline
                    & \multicolumn{7}{c|}{AMD16} \\ \cline{2-8}
       std::sort &      274 &  1\,088 &      237 &         &     28.1 &     73.5 &     56.8 \\
            mkqs &      138 &     586 &     99.7 &     284 &     15.9 &     29.6 &     26.9 \\
    mkqs\_cache8 & \bf 45.4 & \bf 114 & \bf 40.2 &     142 & \bf 4.77 &     8.99 &     10.7 \\
      radix8\_CI &      112 &     158 &     84.5 &     171 &     11.7 &     13.6 &     11.5 \\
     radixR\_CE7 &     91.4 &     156 &     75.3 & \bf 135 &     10.5 &     10.8 & \bf 9.46 \\
Seq-S$^5$-Unroll &     70.6 &     326 &     68.4 &     235 &     9.57 &     14.5 &     14.8 \\
 Seq-S$^5$-Equal &     72.9 &     315 &     67.9 &     227 &     9.41 &     12.8 &     13.5 \\
          CRadix &      132 &     128 &     91.8 &     201 &     11.7 &     19.0 &     14.7 \\
    LCPMergesort &     56.8 &     631 &     86.6 &     285 &     9.22 &     25.5 &     30.9 \\
      burstsortA &     52.4 &     284 &     64.4 &     252 &     10.1 &     17.2 &     17.0 \\
  fbC\_burstsort &      109 &     123 &          &         &     25.5 & \bf 6.34 &     29.8 \\
 sCPL\_burstsort &     79.8 &     288 &          &         &     28.8 &     38.9 &     54.9 \\ \hline
                 & \multicolumn{7}{c|}{Inteli7} \\ \cline{2-8}
       std::sort &     94.4 &      360 &     75.6 &      233 &     9.41 &     23.3 &     21.2 \\
            mkqs &     33.2 &      187 &     30.9 &      112 &     5.00 &     9.43 &     9.62 \\
    mkqs\_cache8 & \bf 14.7 &          & \bf 14.5 &     66.1 & \bf 1.81 &     3.91 &     5.10 \\
      radix8\_CI &     38.3 &     49.6 &     32.4 &     73.0 &     4.95 &     5.35 &     5.05 \\
     radixR\_CE7 &     30.7 & \bf 46.6 &     28.8 & \bf 59.7 &     3.85 &     3.71 & \bf 3.84 \\
Seq-S$^5$-Unroll &     25.3 &      108 &     26.2 &     86.5 &     3.94 &     5.75 &     6.44 \\
 Seq-S$^5$-Equal &     25.9 &      130 &     27.6 &     97.4 &     4.11 &     6.14 &     6.82 \\
          CRadix &     43.2 &          &     33.4 &     84.6 &     5.27 &     7.87 &     6.49 \\
    LCPMergesort &     22.7 &          &     32.3 &      139 &     4.33 &     12.1 &     14.2 \\
      burstsortA &     22.5 &          &     24.9 &      102 &     4.52 &     6.67 &     6.91 \\
  fbC\_burstsort &          &          &          &          &     9.31 & \bf 3.12 &     12.7 \\
 sCPL\_burstsort &     35.3 &          &          &          &     9.82 &     13.0 &     20.0 \\ \hline
                 & \multicolumn{7}{c|}{IntelX5} \\ \cline{2-8}
       std::sort &      140 &      731 &      137 &      401 &     17.4 &     48.9 &     38.6 \\
            mkqs &     74.2 &      333 &     56.6 &      148 &     7.18 &     13.9 &     14.0 \\
    mkqs\_cache8 &     30.1 & \bf 80.1 & \bf 25.5 &     95.5 & \bf 3.35 &     6.48 &     7.51 \\
      radix8\_CI &     69.0 &      109 &     49.7 &     88.9 &     5.84 &     7.47 &     7.06 \\
     radixR\_CE7 &     55.9 &      110 &     44.6 & \bf 83.5 &     4.92 &     6.12 & \bf 5.95 \\
Seq-S$^5$-Unroll &     35.9 &      170 &     34.4 &      108 &     4.17 &     5.62 &     6.74 \\
 Seq-S$^5$-Equal &     38.7 &      198 &     36.5 &      121 &     4.65 &     6.09 &     7.19 \\
          CRadix &     77.7 &     94.3 &     57.7 &      141 &     8.26 &     13.3 &     11.0 \\
    LCPMergesort &     36.2 &      454 &     55.1 &      208 &     6.61 &     19.3 &     22.8 \\
      burstsortA & \bf 27.0 &      215 &     34.9 &      153 &     4.70 &     9.17 &     10.1 \\
  fbC\_burstsort &          &     86.8 &          &          &     16.4 & \bf 4.31 &     21.1 \\
 sCPL\_burstsort &     46.5 &      212 &          &          &     19.6 &     26.8 &     38.5 \\ \hline
\end{tabular}
\end{table}

\input{speedup-plots.tex}


\begin{table}\centering\small
\caption{Absolute run time of parallel and best sequential algorithms on IntelE5 in seconds, median of 1--3 runs. See Table~\ref{tab:paraalgo} for a short description of each.}\label{tab:absrun-IntelE5}
\def\tabcolsep{3.6pt}
\begin{tabular}{l|*{10}{r}|@{}}
PEs & 1   & 2 & 4 & 8 & 12 & 16 & 24 & 32 & 48 & 64 \\ \hline
& \multicolumn{10}{l|}{\textbf{URLs} (complete), $n = 1.11\,\text{G}$, $N = 70.7\,\text{Gi}$, $\frac{D}{N} = 93.5\,\%$} \\ \cline{2-11}
mkqs\_cache8 & \bf 467 &  &  &  &  &  &  &  &  &  \\
pS$^5$-Unroll &    633 &     310 &     156 &     92.8 &     69.2 &     52.7 &     45.1 &     42.8 &     41.1 &     39.9 \\
 pS$^5$-Equal &    646 &     316 &     157 &     93.0 &     69.3 &     52.8 &     45.8 &     42.0 &     41.2 &     41.2 \\
 pS$^5$+LCP-M &        &         & \bf 116 & \bf 74.4 &     57.7 & \bf 47.0 & \bf 38.4 & \bf 34.4 & \bf 34.1 & \bf 35.1 \\
        pMKQS &    617 & \bf 292 &     146 &     84.2 & \bf 57.5 &     47.1 &     41.0 &     38.0 &     37.0 &     38.0 \\
     pRS-8bit & 1\,959 &     975 &     493 &      280 &      204 &      171 &      144 &      142 &      140 &      140 \\
    pRS-16bit & 1\,960 &     883 &     444 &      260 &      179 &      148 &      125 &      119 &      115 &      116 \\
    pRS/Akiba & 1\,293 &  1\,258 &  1\,255 &   1\,249 &   1\,256 &   1\,255 &   1\,249 &   1\,259 &   1\,255 &   1\,249 \\ \hline
& \multicolumn{10}{l|}{\textbf{Random}, $n = 3.27\,\text{G}$, $N = 32\,\text{Gi}$, $\frac{D}{N} = 44.9\,\%$} \\ \cline{2-11}
mkqs\_cache8 & \bf 609 &  &  &  &  &  &  &  &  &  \\
pS$^5$-Unroll & 1\,209 &     601 &     301 &     166 &      122 &      101 &     76.7 &     67.7 &     65.3 &     63.7 \\
 pS$^5$-Equal & 1\,322 &     657 &     326 &     178 &      131 &      107 &     81.4 &     70.7 &     67.6 &     64.2 \\
 pS$^5$+LCP-M &        &         &     367 &     196 &      135 &      108 &     80.0 &     71.0 &     72.7 &     71.7 \\
        pMKQS &    732 & \bf 379 & \bf 198 & \bf 108 & \bf 79.7 & \bf 69.1 & \bf 63.6 &     65.5 &     70.7 &     75.2 \\
     pRS-8bit & 1\,530 &     706 &     343 &     183 &      127 &      100 &     70.9 &     59.3 &     60.8 &     59.2 \\
    pRS-16bit & 1\,530 &     657 &     343 &     185 &      129 &      100 &     69.4 & \bf 56.2 & \bf 52.4 & \bf 53.6 \\
    pRS/Akiba & 1\,355 &     751 &     447 &     321 &      280 &      257 &      232 &      223 &      219 &      216 \\ \hline
& \multicolumn{10}{l|}{\textbf{GOV2}, $n = 3.1\,\text{G}$, $N = 128\,\text{Gi}$, $\frac{D}{N} = 82.7\,\%$} \\ \cline{2-11}
mkqs\_cache8 & \bf 1\,079 &  &  &  &  &  &  &  &  &  \\
pS$^5$-Unroll & 1\,399 &     673 &     326 &     212 &     176 &     145 &      113 &     97.1 &     92.1 &     89.0 \\
 pS$^5$-Equal & 1\,476 &     705 &     339 &     224 &     186 &     154 &      119 &      101 &     95.8 &     91.2 \\
 pS$^5$+LCP-M &        &         & \bf 272 & \bf 166 & \bf 131 & \bf 112 & \bf 93.3 & \bf 80.5 & \bf 80.7 & \bf 82.1 \\
        pMKQS & 1\,347 & \bf 661 &     350 &     207 &     164 &     127 &      101 &     91.8 &     93.7 &     95.1 \\
     pRS-8bit & 4\,244 &  1\,992 &     964 &     585 &     462 &     394 &      311 &      299 &      302 &      291 \\
    pRS-16bit & 4\,252 &  1\,912 &     928 &     571 &     471 &     384 &      306 &      279 &      280 &      257 \\
    pRS/Akiba & 2\,645 &  1\,306 &  1\,028 &  1\,055 &  1\,048 &  1\,034 &   1\,037 &   1\,045 &   1\,051 &   1\,052 \\ \hline
& \multicolumn{10}{l|}{\textbf{Wikipedia}, $n = N = 4\,\text{Gi}$, $D = 249\,\text{G}$} \\ \cline{2-11}
mkqs\_cache8 & \bf 2\,502 &  &  &  &  &  &  &  &  &  \\
pS$^5$-Unroll & 2\,728 &     1\,341 &     648 &     350 &     252 &     203 &     147 &     120 &      110 &      103 \\
 pS$^5$-Equal & 2\,986 &     1\,435 &     694 &     374 &     268 &     215 &     157 &     125 &      115 &      105 \\
 pS$^5$+LCP-M &        &            &     635 &     338 & \bf 235 & \bf 186 & \bf 130 & \bf 107 & \bf 97.6 & \bf 92.0 \\
        pMKQS & 2\,554 & \bf 1\,259 & \bf 620 & \bf 336 &     238 &     189 &     143 &     127 &      122 &      119 \\
     pRS-8bit & 4\,064 &     1\,879 &     909 &     486 &     349 &     271 &     192 &     157 &      150 &      144 \\
    pRS-16bit & 4\,068 &     1\,805 &     875 &     469 &     340 &     262 &     187 &     149 &      145 &      139 \\
    pRS/Akiba & 2\,862 &     1\,450 &     754 &     453 &     355 &     302 &     249 &     229 &      265 &      263 \\ \hline
& \multicolumn{10}{l|}{\textbf{Sinha NoDup} (complete), $n = 31.6\,\text{M}$, $N = 382\,\text{Mi}$, $\frac{D}{N} = 73.4\,\%$} \\ \cline{2-11}
radixR\_CE7 & \bf 6.00 &  &  &  &  &  &  &  &  &  \\
pS$^5$-Unroll & 8.27 &     4.17 &     2.13 &      1.22 &     0.921 &     0.790 &     0.695 &     0.642 &     0.592 &     0.544 \\
 pS$^5$-Equal & 8.84 &     4.46 &     2.26 &      1.28 &     0.963 &     0.825 &     0.710 &     0.661 &     0.601 &     0.567 \\
 pS$^5$+LCP-M &      &          &     2.52 &      1.41 &      1.01 &     0.815 &     0.653 &     0.604 &     0.691 &     0.779 \\
        pMKQS & 8.44 &     4.30 &     2.21 &      1.25 &     0.920 &     0.801 &     0.744 &     0.798 &     0.973 &      1.11 \\
     pRS-8bit & 8.35 &     4.06 &     2.01 &      1.08 &     0.770 &     0.643 &     0.489 &     0.425 & \bf 0.422 & \bf 0.441 \\
    pRS-16bit & 8.35 & \bf 3.49 & \bf 1.74 & \bf 0.949 & \bf 0.682 & \bf 0.595 & \bf 0.445 & \bf 0.422 &     0.467 &     0.502 \\
    pRS/Akiba & 7.58 &     4.09 &     2.38 &      1.59 &      1.33 &      1.21 &      1.09 &      1.04 &      1.04 &      1.02 \\ \hline
\end{tabular}
\end{table}


\begin{table}\centering\small
\caption{Absolute run time of parallel and best sequential algorithms on AMD48 in seconds, median of 1--3 runs. See Table~\ref{tab:paraalgo} for a short description of each.}\label{tab:absrun-AMD48}
\def\tabcolsep{2.8pt}
\begin{tabular}{l|*{11}{r}|@{}}
PEs & 1   & 2 & 3 & 6 & 9 & 12 & 18 & 24 & 36 & 42 & 48 \\ \hline
& \multicolumn{11}{l|}{\textbf{URLs} (complete), $n = 1.11\,\text{G}$, $N = 70.7\,\text{Gi}$, $\frac{D}{N} = 93.5\,\%$} \\ \cline{2-12}
mkqs\_cache8 & \bf 773 &  &  &  &  &  &  &  &  &  &  \\
pS$^5$-Unroll & 1\,030 &     521 &     352 &     181 &     127 &     99.9 &     74.9 &     64.0 &     53.0 &     48.0 &     46.8 \\
 pS$^5$-Equal &    931 &     477 &     331 &     176 &     123 &     97.1 &     73.3 &     65.0 &     55.0 &     48.8 &     47.6 \\
 pS$^5$+LCP-M &        &         &         &         &     122 &      114 &     76.3 &     60.3 &     50.0 &     49.6 &     46.8 \\
        pMKQS &    844 & \bf 415 & \bf 280 & \bf 146 & \bf 102 & \bf 80.4 & \bf 59.6 & \bf 49.9 & \bf 44.2 & \bf 44.0 & \bf 45.1 \\
     pRS-8bit & 2\,552 &  1\,306 &     897 &     468 &     325 &      266 &      202 &      177 &      157 &      156 &      155 \\
    pRS-16bit & 2\,550 &  1\,210 &     823 &     428 &     299 &      234 &      183 &      151 &      126 &      125 &      126 \\
    pRS/Akiba & 1\,861 &  1\,840 &  1\,832 &  1\,830 &  1\,821 &   1\,819 &   1\,823 &   1\,822 &   1\,822 &   1\,827 &   1\,821 \\ \hline
& \multicolumn{11}{l|}{\textbf{Random}, $n = 2.45\,\text{G}$, $N = 24\,\text{Gi}$, $\frac{D}{N} = 44.5\,\%$} \\ \cline{2-12}
mkqs\_cache8 & \bf 879 &  &  &  &  &  &  &  &  &  &  \\
pS$^5$-Unroll & 1\,315 &     683 &     466 &     248 &     176 &      140 &      104 &     86.3 &     69.3 &     64.4 &     61.7 \\
 pS$^5$-Equal & 1\,225 &     634 &     433 &     232 &     165 &      131 &     98.3 &     82.1 &     67.1 &     62.1 &     59.7 \\
 pS$^5$+LCP-M &        &         &         &         &     196 &      185 &      109 &     82.7 &     71.5 &     66.0 &     61.9 \\
        pMKQS &    751 & \bf 392 & \bf 264 & \bf 143 & \bf 106 & \bf 81.0 & \bf 65.3 & \bf 56.0 &     54.7 &     55.8 &     61.3 \\
     pRS-8bit & 1\,182 &     594 &     404 &     209 &     144 &      111 &     79.6 &     63.5 &     52.2 &     49.9 &     47.1 \\
    pRS-16bit & 1\,188 &     615 &     423 &     224 &     154 &      120 &     85.1 &     68.0 & \bf 51.3 & \bf 47.4 & \bf 44.8 \\
    pRS/Akiba & 1\,525 &     861 &     643 &     421 &     348 &      312 &      277 &      259 &      241 &      238 &      234 \\ \hline
& \multicolumn{11}{l|}{\textbf{GOV2}, $n = 1.38\,\text{G}$, $N = 64\,\text{Gi}$, $\frac{D}{N} = 77.0\,\%$} \\ \cline{2-12}
mkqs\_cache8 & \bf 750 &  &  &  &  &  &  &  &  &  &  \\
pS$^5$-Unroll &    881 &     449 &     305 &     162 &     129 &      110 &     83.2 &     69.9 &     58.4 &     54.4 &     50.2 \\
 pS$^5$-Equal &    854 &     436 &     296 &     156 &     123 &      104 &     79.2 &     67.1 &     55.8 &     52.5 &     48.7 \\
 pS$^5$+LCP-M &        &         &         &         & \bf 100 &     98.2 &     62.0 & \bf 48.7 & \bf 42.9 & \bf 39.5 & \bf 37.4 \\
        pMKQS &    785 & \bf 397 & \bf 268 & \bf 140 &     101 & \bf 83.8 & \bf 60.1 &     50.7 &     44.1 &     42.8 &     42.6 \\
     pRS-8bit & 2\,071 &  1\,015 &     676 &     351 &     280 &      251 &      182 &      154 &      125 &      120 &      114 \\
    pRS-16bit & 2\,061 &     980 &     652 &     338 &     269 &      242 &      171 &      143 &      111 &      107 &      102 \\
    pRS/Akiba & 1\,547 &     794 &     617 &     594 &     584 &      582 &      581 &      581 &      584 &      585 &      585 \\ \hline
& \multicolumn{11}{l|}{\textbf{Wikipedia}, $n = N = 2\,\text{Gi}$, $D = 116\,\text{G}$} \\ \cline{2-12}
mkqs\_cache8 & \bf 1\,442 &  &  &  &  &  &  &  &  &  &  \\
pS$^5$-Unroll & 1\,634 &     838 &     569 &     299 &     208 &     164 &     119 &     96.9 &     74.6 &     68.7 &     65.3 \\
 pS$^5$-Equal & 1\,592 &     827 &     561 &     293 &     205 &     161 &     117 &     95.4 &     73.6 &     68.1 &     64.3 \\
 pS$^5$+LCP-M &        &         &         &         &     267 &     255 &     142 &      104 &     83.4 &     72.8 &     65.9 \\
        pMKQS & 1\,556 & \bf 790 & \bf 534 & \bf 273 & \bf 188 & \bf 145 & \bf 102 & \bf 84.3 & \bf 67.0 & \bf 64.1 & \bf 64.0 \\
     pRS-8bit & 2\,547 &  1\,216 &     825 &     417 &     281 &     215 &     148 &      115 &     85.2 &     78.1 &     72.6 \\
    pRS-16bit & 2\,547 &  1\,168 &     795 &     405 &     276 &     211 &     147 &      114 &     82.6 &     74.2 &     69.2 \\
    pRS/Akiba & 1\,966 &  1\,030 &     717 &     403 &     299 &     249 &     198 &      197 &      196 &      196 &      198 \\ \hline
& \multicolumn{11}{l|}{\textbf{Sinha NoDup} (complete), $n = 31.6\,\text{M}$, $N = 382\,\text{Mi}$, $\frac{D}{N} = 73.4\,\%$} \\ \cline{2-12}
radixR\_CE7 & \bf 8.24 &  &  &  &  &  &  &  &  &  &  \\
pS$^5$-Unroll & 12.8 &     6.73 &     4.63 &     2.52 &     1.82 &      1.47 &      1.13 &     0.978 &     0.836 &     0.804 &     0.794 \\
 pS$^5$-Equal & 12.0 &     6.30 &     4.34 &     2.37 &     1.72 &      1.39 &      1.08 &     0.944 &     0.812 &     0.779 &     0.772 \\
 pS$^5$+LCP-M &      &          &          &          &     1.89 &      1.76 &      1.39 &      1.10 &      1.01 &     0.993 &     0.984 \\
        pMKQS & 11.2 &     5.81 &     4.02 &     2.14 &     1.53 &      1.28 &      1.00 &     0.935 &     0.989 &      1.04 &      1.16 \\
     pRS-8bit & 10.8 &     5.33 &     3.61 &     1.87 &     1.35 &      1.01 & \bf 0.771 & \bf 0.596 & \bf 0.482 & \bf 0.462 & \bf 0.453 \\
    pRS-16bit & 10.8 & \bf 4.82 & \bf 3.27 & \bf 1.72 & \bf 1.32 & \bf 0.988 &     0.872 &     0.779 &     0.924 &      1.01 &      1.10 \\
    pRS/Akiba & 11.5 &     6.33 &     4.62 &     2.90 &     2.33 &      2.05 &      1.80 &      1.68 &      1.57 &      1.54 &      1.53 \\ \hline
\end{tabular}
\end{table}


\begin{table}\centering\small
\caption{Absolute run time of parallel and best sequential algorithms on AMD16 in seconds, median of 1--3 runs. See Table~\ref{tab:paraalgo} for a short description of each.}\label{tab:absrun-AMD16}
\begin{tabularx}{\linewidth}{l|*{7}{>{\hfill}X}|@{}}
PEs & 1   & 2 & 4 & 6 & 8 & 12 & 16 \\ \hline
& \multicolumn{7}{l|}{\textbf{URLs}, $n = 500\,\text{M}$, $N = 32\,\text{Gi}$, $\frac{D}{N} = 95.4\,\%$} \\ \cline{2-8}
mkqs\_cache8 & \bf 422 &  &  &  &  &  &  \\
pS$^5$-Unroll &    424 & \bf 218 & \bf 115 & \bf 83.0 & \bf 70.5 & \bf 56.2 & \bf 49.1 \\
 pS$^5$-Equal &    455 &     233 &     123 &     91.4 &     75.1 &     61.6 &     51.6 \\
 pS$^5$+LCP-M &        &         &     152 &      152 &     91.9 &     73.2 &     58.7 \\
        pMKQS &    456 &     220 &     117 &     88.3 &     74.4 &     63.7 &     61.5 \\
     pRS-8bit & 1\,256 &     652 &     362 &      284 &      253 &      238 &      220 \\
    pRS-16bit & 1\,253 &     601 &     331 &      255 &      225 &      212 &      185 \\
    pRS/Akiba & 1\,063 &  1\,060 &  1\,049 &   1\,057 &   1\,110 &   1\,048 &   1\,105 \\ \hline
& \multicolumn{7}{l|}{\textbf{Random}, $n = 1.23\,\text{G}$, $N = 12\,\text{Gi}$, $\frac{D}{N} = 43.7\,\%$} \\ \cline{2-8}
mkqs\_cache8 & \bf 350 &  &  &  &  &  &  \\
pS$^5$-Unroll & 675 &     349 &     182 &      128 &      100 &     74.6 &     61.6 \\
 pS$^5$-Equal & 621 &     321 &     167 &      119 &     93.2 &     70.0 &     58.2 \\
 pS$^5$+LCP-M &     &         &     207 &      194 &      112 &     80.3 &     64.4 \\
        pMKQS & 384 & \bf 203 & \bf 112 & \bf 84.6 & \bf 79.6 &     69.4 &     62.4 \\
     pRS-8bit & 605 &     302 &     161 &      109 &     89.8 & \bf 67.6 & \bf 55.7 \\
    pRS-16bit & 605 &     297 &     167 &      110 &     95.5 &     71.6 &     59.1 \\
    pRS/Akiba & 805 &     459 &     283 &      227 &      198 &      171 &      157 \\ \hline
& \multicolumn{7}{l|}{\textbf{GOV2}, $n = 490\,\text{M}$, $N = 24\,\text{Gi}$, $\frac{D}{N} = 72.4\,\%$} \\ \cline{2-8}
mkqs\_cache8 & \bf 291 &  &  &  &  &  &  \\
pS$^5$-Unroll & 336 &     171 &     88.3 &     62.6 &     55.5 &     44.9 &     36.5 \\
 pS$^5$-Equal & 326 &     166 &     86.0 &     61.0 &     53.5 &     43.2 &     35.3 \\
 pS$^5$+LCP-M &     &         &     81.4 &     78.2 & \bf 46.5 & \bf 34.4 & \bf 28.0 \\
        pMKQS & 296 & \bf 152 & \bf 79.4 & \bf 60.6 &     46.9 &     39.4 &     37.2 \\
     pRS-8bit & 692 &     338 &      176 &      132 &      112 &      105 &     92.1 \\
    pRS-16bit & 691 &     327 &      170 &      126 &      115 &     92.9 &     81.7 \\
    pRS/Akiba & 552 &     285 &      226 &      225 &      225 &      226 &      227 \\ \hline
& \multicolumn{7}{l|}{\textbf{Wikipedia}, $n = N = 1\,\text{Gi}$, $D = 40\,\text{G}$} \\ \cline{2-8}
mkqs\_cache8 & \bf 642 &  &  &  &  &  &  \\
pS$^5$-Unroll & 840 &     424 &     214 &     147 &     112 &     78.9 &     62.0 \\
 pS$^5$-Equal & 819 &     414 &     209 &     144 &     110 &     77.1 &     60.6 \\
 pS$^5$+LCP-M &     &         &     195 &     175 &     104 & \bf 73.6 & \bf 57.7 \\
        pMKQS & 693 & \bf 351 & \bf 183 & \bf 130 & \bf 104 &     80.9 &     71.8 \\
     pRS-8bit & 920 &     425 &     216 &     153 &     118 &     90.1 &     75.1 \\
    pRS-16bit & 917 &     408 &     207 &     149 &     114 &     87.3 &     71.4 \\
    pRS/Akiba & 782 &     419 &     238 &     181 &     154 &      128 &      116 \\ \hline
& \multicolumn{7}{l|}{\textbf{Sinha NoDup} (complete), $n = 31.6\,\text{M}$, $N = 382\,\text{Mi}$, $\frac{D}{N} = 73.4\,\%$} \\ \cline{2-8}
radixR\_CE7 & \bf 9.50 &  &  &  &  &  &  \\
pS$^5$-Unroll & 11.9 &     6.13 &     3.15 &     2.14 &     1.65 &     1.19 &     0.963 \\
 pS$^5$-Equal & 11.0 &     5.65 &     2.90 &     1.99 & \bf 1.52 & \bf 1.11 & \bf 0.912 \\
 pS$^5$+LCP-M &      &          &     4.22 &     3.65 &     2.30 &     1.64 &      1.32 \\
        pMKQS & 11.9 &     6.11 &     3.23 &     2.34 &     1.92 &     1.56 &      1.46 \\
     pRS-8bit & 12.1 &     5.97 &     3.09 &     2.20 &     1.74 &     1.37 &      1.25 \\
    pRS-16bit & 12.1 & \bf 5.44 & \bf 2.82 & \bf 1.98 &     1.56 &     1.20 &      1.12 \\
    pRS/Akiba & 13.0 &     7.19 &     4.28 &     3.33 &     2.86 &     2.44 &      2.24 \\ \hline
\end{tabularx}
\end{table}


\begin{table}\centering\small
\caption{Absolute run time of parallel and best sequential algorithms on Inteli7 in seconds, median of fifteen runs, larger test instances. See Table~\ref{tab:paraalgo} for a short description of each.}\label{tab:absrun-Inteli7}
\begin{tabularx}{\linewidth}{l|*{8}{>{\hfill}X}|@{}}
PEs & 1   & 2 & 3 & 4 & 5 & 6 & 7 & 8 \\ \hline
& \multicolumn{8}{l|}{\textbf{URLs}, $n = 65.7\,\text{M}$, $N = 4\,\text{Gi}$, $\frac{D}{N} = 92.7\,\%$} \\ \cline{2-9}
mkqs\_cache8 & 14.8 &  &  &  &  &  &  &  \\
pS$^5$-Unroll &     14.9 &     8.46 &     5.87 & \bf 4.83 &     4.81 &     4.52 &     4.41 &     4.24 \\
 pS$^5$-Equal & \bf 14.5 & \bf 8.27 & \bf 5.80 &     4.86 & \bf 4.61 & \bf 4.39 & \bf 4.29 & \bf 4.17 \\
        pMKQS &     15.6 &     8.78 &     6.57 &     5.59 &     5.42 &     5.31 &     5.21 &     5.15 \\
     pRS-8bit &     39.8 &     22.0 &     16.8 &     14.5 &     14.3 &     13.9 &     13.6 &     13.5 \\
    pRS-16bit &     39.8 &     20.0 &     14.9 &     12.7 &     12.4 &     12.1 &     11.9 &     11.7 \\
    pRS/Akiba &     32.0 &     31.7 &     31.7 &     31.7 &     31.7 &     31.7 &     31.7 &     31.7 \\
     p2w-MS/R &     30.1 &     17.3 &     17.3 &     12.7 &     12.4 &     10.5 &     9.96 &     9.83 \\
 pMKQS-SIMD/R &     31.4 &     19.4 &     15.4 &     13.6 &     13.4 &     13.3 &     13.3 &     13.3 \\
 pLCP-2w-MS/S &     29.7 &     18.9 &     14.7 &     13.2 &     13.0 &     13.2 &     12.9 &     13.0 \\ \hline
& \multicolumn{8}{l|}{\textbf{Random}, $n = 205\,\text{M}$, $N = 2\,\text{Gi}$, $\frac{D}{N} = 42.1\,\%$} \\ \cline{2-9}
radixR\_CE7 & \bf 19.6 &  &  &  &  &  &  &  \\
pS$^5$-Unroll & 32.2 &     17.5 &     12.1 &     9.39 &     9.51 &     9.07 &     8.29 &     7.67 \\
 pS$^5$-Equal & 34.2 &     18.6 &     12.8 &     9.93 &     9.76 &     9.27 &     8.46 &     7.83 \\
        pMKQS & 31.2 &     16.9 &     11.9 &     9.42 &     8.64 &     8.07 &     7.75 &     7.50 \\
     pRS-8bit & 23.6 &     12.9 &     9.24 & \bf 7.89 & \bf 7.55 & \bf 7.22 & \bf 7.00 & \bf 6.86 \\
    pRS-16bit & 23.6 & \bf 12.5 & \bf 9.04 &     8.35 &     7.76 &     7.65 &     7.16 &     7.34 \\
    pRS/Akiba & 24.0 &     16.1 &     13.3 &     12.0 &     11.7 &     11.4 &     11.2 &     11.0 \\
     p2w-MS/R & 95.9 &     58.1 &     58.1 &     52.4 &     46.9 &     43.5 &     40.2 &     40.2 \\
 pMKQS-SIMD/R &  123 &     72.9 &     55.3 &     47.2 &     46.8 &     46.6 &     46.5 &     46.2 \\
 pLCP-2w-MS/S &  185 &      119 &      101 &     97.9 &      105 &      110 &      115 &      121 \\ \hline
& \multicolumn{8}{l|}{\textbf{GOV2}, $n = 80\,\text{M}$, $N = 4\,\text{Gi}$, $\frac{D}{N} = 69.8\,\%$} \\ \cline{2-9}
mkqs\_cache8 & 14.6 &  &  &  &  &  &  &  \\
pS$^5$-Unroll & \bf 13.1 & \bf 7.15 & \bf 5.00 & \bf 3.95 &     3.99 &     3.93 &     3.73 &     3.51 \\
 pS$^5$-Equal &     13.5 &     7.39 &     5.16 &     4.06 & \bf 3.94 & \bf 3.87 & \bf 3.69 & \bf 3.49 \\
        pMKQS &     15.5 &     8.62 &     6.22 &     5.08 &     4.82 &     4.58 &     4.43 &     4.30 \\
     pRS-8bit &     33.5 &     17.8 &     12.9 &     10.2 &     9.85 &     9.52 &     9.43 &     9.69 \\
    pRS-16bit &     33.4 &     17.1 &     12.2 &     9.62 &     9.20 &     8.73 &     8.40 &     8.20 \\
    pRS/Akiba &     27.4 &     15.2 &     13.4 &     13.3 &     13.3 &     14.2 &     14.4 &     14.4 \\
     p2w-MS/R &     32.6 &     19.3 &     19.3 &     16.8 &     15.8 &     12.8 &     12.6 &     12.4 \\
 pMKQS-SIMD/R &     32.5 &     19.5 &     14.9 &     12.8 &     12.6 &     12.5 &     12.4 &     12.4 \\
 pLCP-2w-MS/S &     43.9 &     22.9 &     15.8 &     12.5 &     12.8 &     11.5 &     10.5 &     9.94 \\ \hline
& \multicolumn{8}{l|}{\textbf{Wikipedia}, $n = N = 256\,\text{Mi}$, $D = 13.8\,\text{G}$} \\ \cline{2-9}
radixR\_CE7 & \bf 59.6 &  &  &  &  &  &  &  \\
pS$^5$-Unroll & 59.9 & \bf 32.8 & \bf 22.6 & \bf 17.5 & \bf 17.1 & \bf 16.1 & \bf 14.7 & \bf 13.6 \\
 pS$^5$-Equal & 63.3 &     34.6 &     23.8 &     18.4 &     17.5 &     16.3 &     15.0 &     13.9 \\
        pMKQS & 69.3 &     37.5 &     26.1 &     20.4 &     18.6 &     17.3 &     16.2 &     15.5 \\
     pRS-8bit & 74.5 &     37.9 &     26.2 &     19.9 &     18.6 &     17.2 &     16.1 &     15.1 \\
    pRS-16bit & 74.5 &     36.2 &     25.2 &     19.0 &     17.9 &     16.5 &     15.4 &     14.5 \\
    pRS/Akiba & 62.4 &     34.8 &     24.9 &     20.0 &     18.6 &     17.5 &     16.6 &     15.7 \\
     p2w-MS/R &  123 &     72.2 &     72.7 &     64.8 &     58.6 &     51.6 &     48.0 &     47.7 \\
 pMKQS-SIMD/R &  127 &     74.4 &     55.5 &     46.8 &     45.7 &     44.7 &     44.2 &     43.5 \\
 pLCP-2w-MS/S &  221 &      115 &     79.2 &     62.1 &     60.6 &     54.4 &     51.5 &     48.4 \\ \hline
\end{tabularx}
\end{table}

\def\tabcolsep{4pt}
\begin{table}\centering\small
\caption{Absolute run time of parallel and best sequential algorithms on Inteli7 in seconds, median of fifteen runs, smaller test instances. See Table~\ref{tab:paraalgo} for a short description of each.}\label{tab:absrun-Inteli7b}
\begin{tabularx}{\linewidth}{l|*{8}{>{\hfill}X}|@{}}
PEs          & 1   & 2 & 3 & 4 & 5 & 6 & 7 & 8                                                                                      \\ \hline
& \multicolumn{8}{l|}{\textbf{Sinha URLs} (complete), $n = 10\,\text{M}$, $N = 304\,\text{Mi}$, $\frac{D}{N} = 97.5\,\%$} \\ \cline{2-9}
mkqs\_cache8 & 1.81 &  &  &  &  &  &  &  \\
pS$^5$-Unroll & \bf 1.54 & \bf 0.853 & \bf 0.595 & \bf 0.471 & \bf 0.495 & \bf 0.464 & \bf 0.445 & \bf 0.431 \\
 pS$^5$-Equal &     1.67 &     0.924 &     0.641 &     0.505 &     0.505 &     0.475 &     0.452 &     0.436 \\
        pMKQS &     1.93 &      1.07 &     0.766 &     0.618 &     0.593 &     0.571 &     0.552 &     0.544 \\
     pRS-8bit &     5.05 &      2.48 &      1.74 &      1.37 &      1.32 &      1.28 &      1.23 &      1.20 \\
    pRS-16bit &     5.06 &      2.23 &      1.56 &      1.21 &      1.19 &      1.14 &      1.11 &      1.08 \\
    pRS/Akiba &     4.02 &      3.44 &      3.23 &      3.13 &      3.12 &      3.10 &      3.10 &      3.08 \\
     p2w-MS/R &     3.98 &      2.35 &      2.34 &      2.05 &      1.86 &      1.64 &      1.51 &      1.53 \\
 pMKQS-SIMD/R &     3.93 &      2.37 &      1.83 &      1.59 &      1.60 &      1.59 &      1.61 &      1.61 \\
 pLCP-2w-MS/S &     5.93 &      3.57 &      2.89 &      2.65 &      2.83 &      2.81 &      2.86 &      2.93 \\ \hline
& \multicolumn{8}{l|}{\textbf{Sinha DNA} (complete), $n = 31.6\,\text{M}$, $N = 302\,\text{Mi}$, $\frac{D}{N} = 100\,\%$} \\ \cline{2-9}
radixR\_CE6 & 3.69 &  &  &  &  &  &  &  \\
pS$^5$-Unroll & \bf 2.94 & \bf 1.63 & \bf 1.14 & \bf 0.906 & \bf 0.989 & \bf 0.900 & \bf 0.883 & \bf 0.831 \\
 pS$^5$-Equal &     3.37 &     1.85 &     1.29 &      1.02 &      1.04 &     0.944 &     0.920 &     0.864 \\
        pMKQS &     4.28 &     2.37 &     1.72 &      1.40 &      1.32 &      1.25 &      1.22 &      1.21 \\
     pRS-8bit &     5.47 &     2.99 &     2.14 &      1.68 &      1.63 &      1.58 &      1.54 &      1.51 \\
    pRS-16bit &     5.47 &     2.78 &     1.94 &      1.57 &      1.54 &      1.50 &      1.47 &      1.42 \\
    pRS/Akiba &     3.83 &     2.29 &     1.74 &      1.50 &      1.51 &      1.54 &      1.50 &      1.43 \\
     p2w-MS/R &     11.1 &     6.36 &     6.36 &      5.13 &      5.10 &      4.17 &      3.98 &      3.97 \\
 pMKQS-SIMD/R &     10.5 &     6.50 &     5.10 &      4.51 &      4.51 &      4.53 &      4.55 &      4.56 \\
 pLCP-2w-MS/S &     18.5 &     11.1 &     8.79 &      7.91 &      8.39 &      8.18 &      8.29 &      8.40 \\ \hline
& \multicolumn{8}{l|}{\textbf{Sinha NoDup} (complete), $n = 31.6\,\text{M}$, $N = 382\,\text{Mi}$, $\frac{D}{N} = 73.4\,\%$} \\ \cline{2-9}
radixR\_CE7 & \bf 3.83 &  &  &  &  &  &  &  \\
pS$^5$-Unroll & 4.54 & \bf 2.47 & \bf 1.69 & \bf 1.31 &     1.33 &     1.21 &     1.15 &     1.07 \\
 pS$^5$-Equal & 4.96 &     2.68 &     1.83 &     1.42 &     1.38 &     1.29 &     1.18 &     1.10 \\
        pMKQS & 5.47 &     2.98 &     2.11 &     1.67 &     1.52 &     1.42 &     1.34 &     1.29 \\
     pRS-8bit & 5.22 &     2.71 &     1.87 &     1.47 &     1.40 &     1.31 &     1.25 &     1.20 \\
    pRS-16bit & 5.22 &     2.48 &     1.70 &     1.32 & \bf 1.28 & \bf 1.19 & \bf 1.12 & \bf 1.07 \\
    pRS/Akiba & 4.97 &     2.84 &     2.06 &     1.68 &     1.59 &     1.50 &     1.43 &     1.37 \\
     p2w-MS/R & 12.7 &     7.46 &     7.45 &     6.22 &     5.96 &     4.89 &     4.88 &     4.89 \\
 pMKQS-SIMD/R & 13.6 &     8.08 &     6.10 &     5.23 &     5.15 &     5.09 &     5.05 &     5.02 \\
 pLCP-2w-MS/S & 19.8 &     12.2 &     10.0 &     9.39 &     9.43 &     10.1 &     10.4 &     10.7 \\ \hline
\end{tabularx}
\end{table}


\begin{table}\centering\small
\caption{Absolute run time of parallel and best sequential algorithms on IntelX5 in seconds, median of fifteen runs, larger test instances. See Table~\ref{tab:paraalgo} for a short description of each.}\label{tab:absrun-IntelX5}
\begin{tabularx}{\linewidth}{l|*{8}{>{\hfill}X}|@{}}
PEs          & 1   & 2 & 3 & 4 & 5 & 6 & 7 & 8                                                                                      \\ \hline
             & \multicolumn{8}{l|}{\textbf{URLs}, $n = 132\,\text{M}$, $N = 8\,\text{Gi}$, $\frac{D}{N} = 92.6\,\%$} \\ \cline{2-9}
mkqs\_cache8 & 64.2 &  &  &  &  &  &  &  \\
pS$^5$-Unroll & \bf 56.2 & \bf 30.8 & \bf 24.4 & \bf 20.7 & \bf 20.2 & \bf 19.2 & \bf 19.0 & \bf 19.0 \\
 pS$^5$-Equal &     58.6 &     32.1 &     25.3 &     21.2 &     20.5 &     19.7 &     19.4 &     19.3 \\
        pMKQS &     67.1 &     36.0 &     29.8 &     26.5 &     26.2 &     26.0 &     26.0 &     25.9 \\
     pRS-8bit &      150 &     89.5 &     79.6 &     72.9 &     72.8 &     71.7 &     72.6 &     71.6 \\
    pRS-16bit &      150 &     78.5 &     68.3 &     62.7 &     61.9 &     61.8 &     62.4 &     59.7 \\
    pRS/Akiba &      121 &      119 &      120 &      119 &      120 &      119 &      120 &      119 \\
     p2w-MS/R &     85.9 &     49.3 &     52.5 &     42.1 &     40.1 &     35.0 &     35.1 &     34.9 \\
 pMKQS-SIMD/R &      153 &     92.0 &     83.9 &     77.9 &     77.5 &     77.2 &     77.4 &     77.5 \\
 pLCP-2w-MS/S &      108 &     63.3 &     52.5 &     49.9 &     57.8 &     55.8 &     54.4 &     57.7 \\ \hline
& \multicolumn{8}{l|}{\textbf{Random}, $n = 307\,\text{M}$, $N = 3\,\text{Gi}$, $\frac{D}{N} = 42.8\,\%$} \\ \cline{2-9}
mkqs\_cache8 & \bf 58.9 &  &  &  &  &  &  &  \\
pS$^5$-Unroll & 78.5 &     42.2 &     31.0 &     25.0 & \bf 22.9 & \bf 20.9 & \bf 19.7 & \bf 18.9 \\
 pS$^5$-Equal & 82.1 &     44.0 &     32.1 &     25.9 &     23.6 &     21.1 &     20.0 &     19.2 \\
        pMKQS & 65.8 & \bf 35.5 & \bf 28.4 & \bf 24.4 &     23.7 &     23.5 &     23.4 &     23.5 \\
     pRS-8bit & 77.5 &     44.1 &     35.4 &     30.4 &     28.4 &     26.9 &     26.0 &     25.5 \\
    pRS-16bit & 77.5 &     43.3 &     35.0 &     27.9 &     27.0 &     25.5 &     24.3 &     23.8 \\
    pRS/Akiba & 81.7 &     50.6 &     41.9 &     37.2 &     34.9 &     33.3 &     32.6 &     31.9 \\
     p2w-MS/R &  303 &      186 &      200 &      172 &      168 &      156 &      154 &      155 \\
 pMKQS-SIMD/R &  467 &      277 &      246 &      229 &      225 &      222 &      220 &      219 \\
 pLCP-2w-MS/S &  625 &      388 &      390 &      377 &      386 &      412 &      472 &      507 \\ \hline
& \multicolumn{8}{l|}{\textbf{GOV2}, $n = 166\,\text{M}$, $N = 8\,\text{Gi}$, $\frac{D}{N} = 70.6\,\%$} \\ \cline{2-9}
mkqs\_cache8 & 55.3 &  &  &  &  &  &  &  \\
pS$^5$-Unroll & \bf 47.8 & \bf 25.8 & \bf 19.5 & \bf 16.2 & \bf 15.4 & \bf 14.5 & \bf 14.8 & \bf 14.6 \\
 pS$^5$-Equal &     49.1 &     26.4 &     19.9 &     16.4 &     15.5 &     14.6 &     15.0 &     14.8 \\
        pMKQS &     58.7 &     32.2 &     26.0 &     22.6 &     22.0 &     22.0 &     22.1 &     22.2 \\
     pRS-8bit &      115 &     61.7 &     49.8 &     43.8 &     44.9 &     45.2 &     44.2 &     44.0 \\
    pRS-16bit &      115 &     59.5 &     47.0 &     40.0 &     37.7 &     40.4 &     38.8 &     38.8 \\
    pRS/Akiba &     94.2 &     51.6 &     49.3 &     48.3 &     49.2 &     50.6 &     52.6 &     53.0 \\
     p2w-MS/R &      124 &     74.7 &     81.7 &     69.0 &     66.2 &     62.0 &     62.1 &     62.7 \\
 pMKQS-SIMD/R &      165 &     98.4 &     87.8 &     81.1 &     79.6 &     79.2 &     78.6 &     79.3 \\
 pLCP-2w-MS/S &      185 &     93.9 &     90.1 &     71.2 &     67.0 &     62.4 &     66.0 &     65.3 \\ \hline
& \multicolumn{8}{l|}{\textbf{Wikipedia}, $n = N = 512\,\text{Mi}$, $D = 21.5\,\text{G}$} \\ \cline{2-9}
radixR\_CE7 & \bf 185 &  &  &  &  &  &  &  \\
pS$^5$-Unroll & 194 & \bf 104 & \bf 76.2 & \bf 62.0 & \bf 56.8 & \bf 52.5 & \bf 50.8 & \bf 49.4 \\
 pS$^5$-Equal & 202 &     107 &     78.7 &     63.8 &     58.2 &     53.6 &     51.7 &     50.3 \\
        pMKQS & 211 &     112 &     87.5 &     74.2 &     71.3 &     69.4 &     68.9 &     68.9 \\
     pRS-8bit & 212 &     111 &     85.4 &     71.0 &     67.0 &     63.8 &     62.6 &     61.7 \\
    pRS-16bit & 212 &     109 &     84.2 &     68.5 &     64.8 &     62.1 &     59.8 &     58.2 \\
    pRS/Akiba & 192 &     107 &     83.4 &     70.6 &     66.5 &     64.3 &     62.9 &     62.5 \\
     p2w-MS/R & 420 &     261 &      281 &      237 &      231 &      230 &      230 &      230 \\
 pMKQS-SIMD/R & 583 &     343 &      301 &      277 &      274 &      273 &      273 &      274 \\
 pLCP-2w-MS/S & 882 &     443 &      345 &      277 &      292 &      282 &      252 &      264 \\ \hline
\end{tabularx}
\end{table}

\begin{table}\centering\small
\caption{Absolute run time of parallel and best sequential algorithms on IntelX5 in seconds, median of fifteen runs, smaller test instances. See Table~\ref{tab:paraalgo} for a short description of each.}\label{tab:absrun-IntelX5b}
\begin{tabularx}{\linewidth}{l|*{8}{>{\hfill}X}|@{}}
PEs          & 1   & 2 & 3 & 4 & 5 & 6 & 7 & 8                                                                                      \\ \hline
& \multicolumn{8}{l|}{\textbf{Sinha URLs} (complete), $n = 10\,\text{M}$, $N = 304\,\text{Mi}$, $\frac{D}{N} = 97\,5\,\%$} \\ \cline{2-9}
mkqs\_cache8 & 3.35 &  &  &  &  &  &  &  \\
pS$^5$-Unroll & \bf 2.91 & \bf 1.60 & \bf 1.22 & \bf 1.03 & \bf 0.985 & \bf 0.939 &     0.932 &     0.929 \\
 pS$^5$-Equal &     3.03 &     1.65 &     1.26 &     1.05 &     0.986 &     0.940 & \bf 0.931 & \bf 0.929 \\
        pMKQS &     3.57 &     1.95 &     1.57 &     1.35 &      1.33 &      1.31 &      1.32 &      1.34 \\
     pRS-8bit &     5.85 &     3.22 &     2.63 &     2.30 &      2.22 &      2.18 &      2.16 &      2.14 \\
    pRS-16bit &     5.86 &     2.86 &     2.32 &     1.99 &      1.93 &      1.90 &      1.87 &      1.86 \\
    pRS/Akiba &     5.12 &     4.38 &     4.22 &     4.10 &      4.09 &      4.10 &      4.09 &      4.08 \\
     p2w-MS/R &     6.40 &     4.02 &     4.31 &     3.67 &      3.55 &      3.45 &      3.55 &      3.49 \\
 pMKQS-SIMD/R &     9.56 &     5.87 &     5.47 &     5.15 &      5.15 &      5.14 &      5.20 &      5.20 \\
 pLCP-2w-MS/S &     10.2 &     6.10 &     5.44 &     5.12 &      5.89 &      5.84 &      5.98 &      6.27 \\ \hline
& \multicolumn{8}{l|}{\textbf{Sinha DNA} (complete), $n = 31.6\,\text{M}$, $N = 302\,\text{Mi}$, $\frac{D}{N} = 100\,\%$} \\ \cline{2-9}
radixR\_CE7 & 6.11 &  &  &  &  &  &  &  \\
pS$^5$-Unroll & \bf 5.14 & \bf 2.87 & \bf 2.21 & \bf 1.85 & \bf 1.73 & \bf 1.59 & \bf 1.55 & \bf 1.50 \\
 pS$^5$-Equal &     5.63 &     3.11 &     2.36 &     1.96 &     1.82 &     1.66 &     1.60 &     1.54 \\
        pMKQS &     7.14 &     3.97 &     3.39 &     2.98 &     2.95 &     2.96 &     2.96 &     2.98 \\
     pRS-8bit &     7.45 &     4.36 &     3.85 &     3.37 &     3.36 &     3.33 &     3.33 &     3.36 \\
    pRS-16bit &     7.45 &     4.05 &     3.46 &     3.10 &     3.09 &     3.07 &     3.08 &     3.07 \\
    pRS/Akiba &     6.00 &     3.72 &     3.44 &     3.12 &     3.13 &     3.11 &     3.12 &     3.06 \\
     p2w-MS/R &     18.1 &     10.6 &     11.3 &     8.96 &     8.91 &     7.92 &     7.98 &     7.92 \\
 pMKQS-SIMD/R &     27.1 &     16.8 &     15.6 &     14.7 &     14.7 &     14.7 &     14.8 &     14.8 \\
 pLCP-2w-MS/S &     33.1 &     20.2 &     17.8 &     16.5 &     18.6 &     18.4 &     18.6 &     19.4 \\ \hline
& \multicolumn{8}{l|}{\textbf{Sinha NoDup} (complete), $n = 31.6\,\text{M}$, $N = 382\,\text{Mi}$, $\frac{D}{N} = 73.4\,\%$} \\ \cline{2-9}
radixR\_CE7 & \bf 5.96 &  &  &  &  &  &  &  \\
pS$^5$-Unroll & 7.06 &     3.83 &     2.80 & \bf 2.25 & \bf 2.02 & \bf 1.88 & \bf 1.74 & \bf 1.66 \\
 pS$^5$-Equal & 7.51 &     4.05 &     2.94 &     2.35 &     2.08 &     1.88 &     1.77 &     1.69 \\
        pMKQS & 8.17 &     4.39 &     3.54 &     2.97 &     2.85 &     2.82 &     2.76 &     2.76 \\
     pRS-8bit & 7.22 &     3.99 &     3.20 &     2.74 &     2.62 &     2.49 &     2.46 &     2.43 \\
    pRS-16bit & 7.22 & \bf 3.57 & \bf 2.78 &     2.32 &     2.19 &     2.10 &     2.03 &     1.97 \\
    pRS/Akiba & 7.10 &     4.10 &     3.27 &     2.81 &     2.66 &     2.55 &     2.51 &     2.48 \\
     p2w-MS/R & 20.8 &     12.6 &     13.4 &     11.3 &     11.3 &     10.4 &     10.2 &     10.1 \\
 pMKQS-SIMD/R & 30.0 &     17.9 &     15.9 &     14.8 &     14.7 &     14.6 &     14.6 &     14.7 \\
 pLCP-2w-MS/S & 34.5 &     21.3 &     19.5 &     18.9 &     21.7 &     22.0 &     22.5 &     24.5 \\ \hline
\end{tabularx}
\end{table}

\begin{table}\centering\small
\caption{Description of parallel string sorting algorithms in experiment}\label{tab:paraalgo}
\begin{tabularx}{\linewidth}{l|X}
Name          & Description and Author                                                                                                                                                            \\ \hline
pS$^5$-Unroll & Our parallel super scalar string sample sort (see Section~\ref{sec:s5}) with unrolled and interleaved tree descents.                                                              \\
 pS$^5$-Equal & Our parallel super scalar string sample sort (see Section~\ref{sec:s5}) with equality checking at each splitter node.                                                             \\
        pMKQS & Our parallel multikey quicksort (see Section~\ref{sec:para-mkqs}) with caching of $w = 8$ characters.                                                                             \\
pS$^5$+LCP-M  & Our parallel multiway LCP-merge with pS$^5$ on each NUMA node.                                                                                                                \\
     pRS-8bit & Our parallel radix sort (see Section~\ref{sec:para-radixsort}) with $8$-bit alphabet                                                                                              \\
    pRS-16bit & Our parallel radix sort (see Section~\ref{sec:para-radixsort}) with $16$-bit alphabet at the fully parallel levels, and $8$-bit alphabets for sequentially processed subproblems. \\
    pRS/Akiba & Ta\-kuya Akiba's~\cite{akiba2011radixsort} radix sort.                                                                                                                            \\
     p2w-MS/R & Parallel 2-way LCP-mergesort from Tommi Rantala's  library~\cite{rantala2007web}.                                                                                                 \\
 pMKQS-SIMD/R & Parallel multikey quicksort with SIMD operations from Tommi Rantala's library~\cite{rantala2007web}.                                                                              \\
 pLCP-2w-MS/S & Parallel 2-way LCP-mergesort by Nagaraja Shamsundar~\cite{shamsundar2009lcpmergesort}, which is based on Waihong Ng's LCP-mergesort~\cite{ng2008merging}.                         \\ \hline
\end{tabularx}
\end{table}

\FloatBarrier

\section{Conclusions and Future Work}\label{sec:conclusions}

We have demonstrated that string sorting can be parallelized successfully on
modern multi-core shared memory and NUMA machines. In particular, our new string
sample sort algorithm combines favorable features of some of the best sequential
algorithms -- robust multiway divide-and-conquer from burstsort, efficient data
distribution from radix sort, asymptotic guarantees similar to multikey
quicksort, and word parallelism from caching multikey quicksort.  For NUMA
machines we developed parallel $K$-way LCP-merge to further decrease costly
inter-node random access.

We want to highlight that using our pS$^5$ (which can save LCPs) and $K$-way
LCP-merge implementations it is straight-forward to construct a fast parallel
external memory string sorter for short strings ($\leq B$) using shared memory
parallelism. The sorting throughput of our string sorters is probably higher
than the available I/O bandwidth.

Implementing some of the refinements discussed in the next section are likely to
yield further improvements for string sample sort and $K$-way
LCP-merge.

\subsection{Practical Refinements}\label{app:refinements}

\emph{Memory conservation:} For use of our algorithms in applications like
database systems or MapReduce libraries, it is paramount to give hard guarantees
on the amount of memory required by the implementations. Our experiments show
clearly, that caching of characters accelerates string sorting, but this speed
comes at the cost of memory space. A future challenge is thus to sort fast, but
with limited memory. In this respect, pS$^5$ is a very promising candidate, as
it can be restricted to use only the classification tree and a recursion stack,
if little additional memory is available. But if more memory is available, then
caching, saving oracle values and out-of-place redistribution can be enabled
adaptively.

\emph{Multipass data distribution:} There are two constraints for the maximum
sensible value for the number of splitters $v$: The cache size needed for the
classification data structure and the resources needed for data
distribution. Already in the plain external memory model these two constraints
differ ($v=\Oh{M}$ versus $v=\Oh{M/B}$). In practice, things are even more
complicated since multiple cache levels, cache replacement policy, TLBs,
etc. play a role. Anyway, we can increase the value of $v$ to the value required
for classification by doing the data distribution in multiple passes (usually
two).  Note that this fits very well with our approach to compute oracles even
for single pass data distribution. This approach can be viewed as LSD radix sort
using the oracles as keys. Initial experiments indicate that this could indeed
lead to some performance improvements.

\emph{Alphabet compression:} When we know that only $\sigma'<\sigma$ different
values from $\Sigma$ appear in the input, we can compress characters into
$\ceil{\log\sigma'}$ bits.  For S$^5$, this allows us to pack more characters
into a single machine word.  For example, for DNA input, we might pack 32
characters into a single 64 bit machine word. Note that this compression can be
done on the fly without changing the input/output format and the compression
overhead is amortized over $\log v$ key comparisons.

\emph{Jump tables:} In S$^5$, the $a$ most significant bits of a key are often
already sufficient to define a path in the classification tree of length up to
$a$.  We can exploit this by precomputing a jump table of size $2^a$ storing a
pointer to the end of this path. During element classification, a lookup in this
jump table can replace the traversal of the path.  This might reduce the gap to
radix sort for easy instances.

\emph{Using tries in practice:} The success of burstsort indicates that
traversing tries can be made efficient. Thus, we might also be able to use a
tuned trie based implementation of S$^5$ in practice. One ingredient to such an
implementation could be the word parallelism used in the pragmatic solution --
we define the trie over an enlarged alphabet. This reduces the number of
required hash table accesses by a factor of $w$. The tuned van Emde Boas trees
from \cite{dementiev2004sortedlist} suggest that this data structure might work
in practice.

\emph{Adaptivity:} By inspecting the sample, we can adaptively tune the
algorithm.  For example, when noticing that already a lot of information%
\footnote{The entropy $\frac{1}{n}\sum_i\log\frac{n}{|b_i|}$ can be used to
  define the amount of information gained by a set of splitters. The bucket
  sizes $b_i$ can be estimated using their size within the sample.} can be
gained from a few most significant bits in the sample keys, the algorithm might
decide to switch to radix sort. On the other hand, when even the $w$ most
significant characters do not give a lot of information, then a trie based
implementation can be used.  Again, this trie can be adapted to the input, for
example, using hash tables for low degree trie nodes and arrays for high degree
nodes.


\small
\bibliographystyle{spmpsci}
\bibliography{library}


\end{document}